\pgfplotsset{compat=newest}
\newlength\figureheight
\newlength\figurewidth
\newtheorem{theorem}{Theorem}[section]
\newtheorem{lemma}[theorem]{Lemma}
\newtheorem{proposition}[theorem]{Proposition}
\newtheorem{corollary}[theorem]{Corollary}
\theoremstyle{definition}
\newtheorem{definition}[theorem]{Definition}
\newtheorem{assumption}[theorem]{Assumption}
\newtheorem{remark}[theorem]{Remark}
\numberwithin{equation}{section}
\numberwithin{table}{section}
\numberwithin{figure}{section}
\newcommand{\R}{\mathbb{R}}
\newcommand{\N}{\mathbb{N}}
\newcommand{\calN}{\mathcal{N}}
\newcommand{\calF}{\mathcal{F}}
\newcommand{\rmd}{\mathrm{d}}
\newcommand{\ddt}{\frac{\rmd}{\rmd t}}
\newcommand{\rme}{\mathrm{e}}
\newcommand{\gam}[2]{Q^{#1}_{#2}}
\newcommand{\muhat}[2]{m^{#1}_{#2}}
\newcommand{\tstrut}{\rule{0pt}{2.6ex}}
\newcommand{\transp}{\top}
\newcommand{\mystrut}{\rule{0pt}{7.5pt}}
\newcommand{\Lp}{\mathrm{L}\mystrut^p}
\DeclareMathOperator{\E}{\mathbb{E}}
\DeclareMathOperator{\var}{var}
\DeclareMathOperator{\cov}{cov}
\DeclareMathOperator{\tr}{tr}
\DeclareMathOperator{\lmin}{\lambda_{\min}}
\begin{document}

\title{Diffusion Approximations for Expert Opinions in a Financial Market with Gaussian Drift}

\author{J\"{o}rn Sass\thanks{Department of Mathematics, Technische Universit\"{a}t Kaiserslautern, \newline P.O. Box 3049, 67653 Kaiserslautern, Germany; \texttt{sass@mathematik.uni-kl.de}}, Dorothee Westphal\footnote{Department of Mathematics, Technische Universit\"{a}t Kaiserslautern, \newline P.O. Box 3049, 67653 Kaiserslautern, Germany; \texttt{westphal@mathematik.uni-kl.de}} \,and Ralf Wunderlich\footnote{Institute of Mathematics, Brandenburg University of Technology Cottbus-Senftenberg, \newline P.O. Box 101344, 03013 Cottbus, Germany; \texttt{ralf.wunderlich@b-tu.de}}}

\date{March~5, 2020}

\maketitle

\begin{abstract}
	This paper investigates a financial market where returns depend on an unobservable Gaussian drift process. While the observation of returns yields information about the underlying drift, we also incorporate discrete-time expert opinions as an external source of information.
	
	For estimating the hidden drift it is crucial to consider the conditional distribution of the drift given the available observations, the so-called filter. For an investor observing both the return process and the discrete-time expert opinions, we investigate in detail the asymptotic behavior of the filter as the frequency of the arrival of expert opinions tends to infinity. In our setting, a higher frequency of expert opinions comes at the cost of accuracy, meaning that as the frequency of expert opinions increases, the variance of expert opinions becomes larger. We consider a model where information dates are deterministic and equidistant and another model where the information dates arrive randomly as the jump times of a Poisson process. In both cases we derive limit theorems stating that the information obtained from observing the discrete-time expert opinions is asymptotically the same as that from observing a certain diffusion process which can be interpreted as a continuous-time expert.
	
	We use our limit theorems to derive so-called diffusion approximations of the filter for high-frequency discrete-time expert opinions. These diffusion approximations are extremely helpful for deriving simplified approximate solutions of utility maximization problems.
	
	\bigskip
	
	\noindent
	\textit{Keywords:} Diffusion approximations, Kalman filter, Ornstein--Uhlenbeck process, Expert opinions, Portfolio optimization, Partial information
	
	\smallskip
	
	\noindent
	\textit{2010 Mathematics Subject Classification:} Primary 91G10; Secondary 93E11, 93E20, 60F25.
\end{abstract}

\section{Introduction}

Optimal trading strategies in dynamic portfolio optimization problems depend crucially on the drift of the underlying asset price processes. However, drift parameters are notoriously difficult to estimate from historical asset price data. Drift processes tend to fluctuate randomly over time and even if they were constant, long time series would be needed to estimate this parameter with a satisfactory degree of precision. Typically, drift effects are overshadowed by volatility. For these reasons, practitioners also incorporate external sources of information such as news, company reports, ratings or their own intuitive views when determining optimal portfolio strategies. These outside sources of information are called \emph{expert opinions}. In the context of the classical one-period Markowitz model this leads to the well-known Black--Litterman approach, where return predictions are improved by means of views formulated by securities analysts, see Black and Litterman~\cite{black_litterman_1992}.

In this paper we consider a financial market where returns depend on an underlying drift process which is unobservable due to additional noise coming from a Brownian motion. The general setting has already been studied in Gabih et al.~\cite{gabih_kondakji_sass_wunderlich_2014} for a market with only one risky asset and in Sass et al.~\cite{sass_westphal_wunderlich_2017} for markets with an arbitrary number of stocks. The ability to choose good trading strategies depends on how well the unobserved drift can be estimated.
For estimating the hidden drift we consider the conditional distribution of the drift given the available observations, the so-called filter. The best estimate for the hidden drift process in a mean-square sense is the conditional mean of the drift given the available information. A measure for the goodness of this estimator is its conditional covariance matrix. In our setting, the filter is completely characterized by conditional mean and conditional covariance matrix since we deal with Gaussian distributions.

For investors who observe only the return process, the filter is the classical Kalman filter, see for example Liptser and Shiryaev~\cite{liptser_shiryaev_1974}. An additional source of information is provided by expert opinions which we model as unbiased drift estimates arriving at discrete points in time. Investors who, in addition to observing the return process, have access to these expert opinions update their current drift estimates at each arrival time. These updates decrease the conditional covariance, hence they yield better estimates. This can be seen as a continuous-time version of the above mentioned static Black--Litterman approach.

We investigate in detail an investor who observes both the return process and the discrete-time expert opinions and study the asymptotic behavior of the filter when the frequency of the arrival of  expert opinions tends to infinity. Sass et al.~\cite{sass_westphal_wunderlich_2017} and Gabih et al.~\cite{gabih_kondakji_wunderlich_2018} already addressed expert opinions which are independent of the arrival frequency and which have some minimal level of accuracy characterized by bounded covariances. In that setting, the conditional covariance of the drift estimate goes to zero as the arrival frequency goes to infinity. This implies that the conditional mean converges to the true drift process, i.e.\ in the limit investors have full information about the drift.
Here, we study a different situation in which a higher frequency of expert opinions is only available at the cost of accuracy of the single expert opinions. In other words, as the frequency of expert opinions increases, the variance of expert opinions becomes larger. On the one hand, this assumption ensures that it is not possible for investors to gain arbitrarily much information in a fixed time interval. On the other hand, it enables us to derive a certain asymptotic behavior that yields a reasonable approximation of the filter for the investor who observes a certain, fixed number of discrete-time expert opinions.
We consider two different situations, one with deterministic equidistant information dates and one with information dates that arrive randomly as the jump times of a Poisson process.
For properly scaled variance of expert opinions that grows linearly with the arrival frequency we prove $\Lp$-convergence of the conditional mean and conditional covariance matrices as the frequency of information dates goes to infinity. Our limit theorems imply that the information obtained from observing the discrete-time expert opinions is asymptotically the same as that from observing a certain diffusion process having the same drift as the return process. That process can be interpreted as a \emph{continuous-time expert} who permanently delivers noisy information about the drift.

Our limit theorems allow us to derive approximations of the filter for high-frequency discrete-time expert opinions which we call \emph{diffusion approximations}. These are useful since the limiting filter is easy to compute whereas the updates for the discrete-time expert opinions lead to a computationally involved filter. This is extremely helpful for deriving simplified approximate solutions of utility maximization problems. We apply our diffusion approximations to a portfolio optimization problem with logarithmic utility. Numerical simulations show that the approximation is very accurate even for a small number of expert opinions. Our rigorous $\Lp$-convergence results of the filters however also allow to derive convergence of the value function in the more complicated problem with power utility, see Remark~\ref{rem:power_utility}.

\medskip

The idea of a continuous-time expert is in line with Davis and Lleo~\cite{davis_lleo_2013} who study an approach called ``Black--Litterman in Continuous Time'' (BLCT). Our results show how the BLCT model can be obtained as a limit of BLCT models with discrete-time experts. First papers addressing BLCT are Frey et al.~\cite{frey_gabih_wunderlich_2012,frey_gabih_wunderlich_2014} who consider an HMM for the drift and expert opinions arriving at the jump times of a Poisson process.

Convergence of the discrete-time Kalman filter to the continuous-time equivalent has been addressed in the literature, e.g.\ by Salgado et al.~\cite{salgado_middleton_goodwin_1988} or Aalto~\cite{aalto_2016} for the case of deterministic information dates. Our results for that case however do not follow directly from these convergence results. The reason is that in our case a suitable continuous-time expert has to be constructed first. The discrete-time expert opinions are then not simply a discretization of the continuous-time expert. We assume that they are noisy observations of the true drift process where the noise term is correlated with the Brownian motion in the diffusion that forms the continuous-time expert. Contrary to~\cite{aalto_2016, salgado_middleton_goodwin_1988} we also obtain convergence results for the case where the discrete expert opinions arrive at random time points rather than on an equidistant time grid.

Coquet et al.~\cite{coquet_memin_slominski_2001} consider weak convergence of filtrations which allows to prove convergence of conditional expectations in a quite general setting. However, their results do not directly apply to our situation since the approximating sequence of filtrations in our case is not included in the limit filtration.

In the literature, diffusion approximations also appear in other contexts. They are well-known in operations research and actuarial mathematics. The basic idea is to replace a complicated stochastic process by an appropriate diffusion process which is analytically more tractable than the original process. The approach is comparable with the normal approximation of sums of random variables by the Central Limit Theorem. When looking at these sums as stochastic processes or random walks the well-known Donsker Theorem leads to an approximation by a Brownian motion.

For an introduction to diffusion approximations based on the theory of weak convergence and applications to queueing systems in heavy traffic we refer to the survey article by Glynn~\cite{glynn_1990}. In risk theory the application of diffusion approximations for computing ruin probabilities goes back to Iglehart~\cite{iglehart_1969}. We also refer to Grandell~\cite[Sec.~1.2]{grandell_1991}, Schmidli~\cite[Sec.~5.10 and 6.5]{schmidli_2017} and Asmussen and Albrecher~\cite[Sec.~V.5]{asmussen_albrecher_2010} as well as the references therein.
Starting point is the classical Cram\'{e}r--Lundberg model where the cumulated claim sizes and finally the surplus of an insurance company are modeled by a compound Poisson process. For a high intensity of the claim arrivals and small claim sizes the latter can be approximated by a Brownian motion with drift. This results from the corresponding weak convergence of the properly scaled compound Poisson processes to a Brownian motion as the intensity tends to infinity.
However, these classical results for compound Poisson processes cannot be applied directly to our problem. Here, the jumps of the filter processes do not constitute a sequence of i.i.d.\ random variables as in the compound Poisson case. Due to the Bayesian updating of the filter at the information dates the jump size distribution depends on the value of the filter at that time. This requires special techniques for proving limit theorems from which the diffusion approximations can be derived. To the best of our knowledge these techniques constitute a new contribution to the literature.

\medskip

The paper is organized as follows. In Section~\ref{sec:market_model_and_filtering} we introduce the model for our financial market including expert opinions and define different information regimes for investors with different sources of information. For each of those information regimes, we state the dynamics of the corresponding conditional mean and conditional covariance matrix.
Section~\ref{sec:diffusion_approximation_of_filters_for_deterministic_information_dates} investigates the situation where the discrete-time expert opinions arrive at deterministic equidistant time points. For an investor observing returns and discrete-time expert opinions we show convergence of the corresponding conditional mean and conditional covariance matrix to those of an investor observing the returns and the continuous-time expert.
In Section~\ref{sec:diffusion_approximation_of_filters_for_random_information_dates} we prove analogous results for the situation where the time points at which expert opinions arrive are not deterministic time points but jump times of a standard Poisson process, i.e.\ with exponentially distributed waiting times between information dates. For the conditional mean we can then use a representation involving a Poisson random measure. When letting the intensity of the Poisson process go to infinity, we prove convergence to the same limiting filter as in the case with deterministic information dates.
Section~\ref{sec:application_utility_maximization} provides an application of the convergence results to a utility maximization problem. For investors who maximize expected logarithmic utility of terminal wealth the optimal trading strategy depends on the conditional mean of the drift and the corresponding optimal terminal wealth is a functional of the conditional covariance matrices. That is why the convergence results from Sections~\ref{sec:diffusion_approximation_of_filters_for_deterministic_information_dates} and \ref{sec:diffusion_approximation_of_filters_for_random_information_dates} carry over to convergence of the corresponding value functions.
Section~\ref{sec:numerical_example} provides simulations and numerical calculations to illustrate our theoretical results.
In Appendix~\ref{app:technical_lemmas} we collect some auxiliary results needed for the proofs of our main theorems. Appendix~\ref{app:long_proofs_fixed_times} gives the proofs of Theorems~\ref{thm:q_C_n_goes_to_q_D} and~\ref{thm:muhat_C_n_goes_to_muhat_D} and Appendix~\ref{app:long_proofs_random_times} those of Theorems~\ref{thm:q_C_lambda_goes_to_q_D} and~\ref{thm:muhat_C_lambda_goes_to_muhat_D}.

\paragraph{Notation:} Throughout this paper, we use the notation $I_d$ for the identity matrix in $\R^{d\times d}$. For a symmetric and positive-semidefinite matrix $A\in\R^{d\times d}$ we call a symmetric and positive-semidefinite matrix $B\in\R^{d\times d}$ the \emph{square root} of $A$ if $B^2=A$. The square root is unique and will be denoted by $A^{\frac{1}{2}}$. Unless stated otherwise, whenever $A$ is a matrix, $\lVert A\rVert$ denotes the spectral norm of $A$.

\section{Market Model and Filtering}\label{sec:market_model_and_filtering}

\subsection{Financial Market Model}

We consider a financial market with one risk-free and multiple risky assets. The basic model is the same as in Sass et al.~\cite{sass_westphal_wunderlich_2017}. In the following, we denote by $T>0$ a finite investment horizon and fix a filtered probability space $(\Omega,\mathcal{G},\mathbb{G},\mathbb{P})$ where the filtration $\mathbb{G}=(\mathcal{G}_t)_{t\in[0,T]}$ satisfies the usual conditions. All processes are assumed to be $\mathbb{G}$-adapted.
The market consists of one risk-free bond with constant deterministic interest rate $r\in\R$, and $d$ risky assets such that the $d$-dimensional return process follows the stochastic differential equation
\[ \rmd R_t = \mu_t\,\rmd t+\sigma_R\,\rmd W^R_t. \]
Here $W^R=(W^R_t)_{t\in[0,T]}$ is an $m$-dimensional Brownian motion with $m\geq d$ and we assume that $\sigma_R\in\R^{d\times m}$ has full rank. The drift $\mu$ is an Ornstein--Uhlenbeck process and follows the dynamics
\[ \rmd \mu_t = \alpha (\delta - \mu_t)\,\rmd t + \beta\,\rmd B_t, \]
where $\alpha$ and $\beta\in\R^{d\times d}$, $\delta\in\R^d$ and $B=(B_t)_{t\in [0,T]}$ is a $d$-dimensional Brownian motion independent of $W^R$. We assume that $\alpha$ is a symmetric and positive-definite matrix to ensure that expectation and covariance of the drift process stay bounded and the drift process becomes asymptotically stationary. This is reasonable from an economic point of view. The initial drift $\mu_0$ is multivariate normally distributed, $\mu_0\sim\mathcal{N}(m_0,\Sigma_0)$, for some $m_0\in\R^d$ and some $\Sigma_0\in\R^{d\times d}$ which is symmetric and positive semidefinite. We assume that $\mu_0$ is independent of $B$ and $W^R$. We denote $m_t:=\E[\mu_t]$ and $\Sigma_t:=\cov(\mu_t)$.

Investors in this market know the model parameters and are able to observe the return process $R$. They neither observe the underlying drift process $\mu$ nor the Brownian motion $W^R$. However, information about $\mu$ can be drawn from observing $R$. Additionally, we include expert opinions in our model. These expert opinions arrive at discrete time points and give an unbiased estimate of the state of the drift at that time point. Let $(T_k)_{k\in I}$ be an increasing sequence with values in $(0,T]$, where we allow for index sets $I=\N$ or $I=\{1,\dots,N\}$ for some $N\in\N$. The $T_k$, $k\in I$, are the time points at which expert opinions arrive. For the sake of convenience we also write $T_0=0$ although there is not necessarily an expert opinion arriving at time zero.

The expert view at time $T_k$ is modelled as an $\R^d$-valued random vector
\[ Z_k = \mu_{T_k}+(\Gamma_k)^{\frac12}\varepsilon_k, \]
where the matrix $\Gamma_k\in\R^{d\times d}$ is symmetric and positive definite and $\varepsilon_k$ is multivariate $\mathcal{N}(0,I_d)$-distributed. We assume that the sequence of $\varepsilon_k$ is independent and also that it is independent of both $\mu_0$ and the Brownian motions $B$ and $W^R$.
Note that, given $\mu_{T_k}$, the expert opinion $Z_k$ is multivariate $\mathcal{N}(\mu_{T_k},\Gamma_k)$-distributed. That means that the expert view at time $T_k$ gives an unbiased estimate of the state of the drift at that time. The matrix $\Gamma_k$ reflects the reliability of the expert.

Note that the time points $T_k$ do not need to be deterministic. However, we impose the additional assumption that the sequence $(T_k)_{k\in I}$ is independent of the $(\varepsilon_k)_{k\in I}$ and also of the Brownian motions in the market and of $\mu_0$. This essentially says that the timing of information dates carries no additional information about the drift $\mu$. Nevertheless, information on the sequence $(T_k)_{k\in I}$ may be important for optimal portfolio decisions. In the next sections we consider on the one hand the situation with deterministic information dates and on the other hand a case where information dates are the jump times of a Poisson process.

It is possible to allow relative expert views in the sense that an expert may give an estimate for the difference in drift of two stocks instead of absolute views. See Sch\"ottle et al.~\cite{schoettle_werner_zagst_2010} for how to switch between these two models for expert opinions by means of a pick matrix.

Our main results in Sections~\ref{sec:diffusion_approximation_of_filters_for_deterministic_information_dates} and \ref{sec:diffusion_approximation_of_filters_for_random_information_dates} address the question how to obtain rigorous convergence results when the number of information dates increases. We will show that, for certain sequences of expert opinions, the information drawn from these expert opinions, expressed by the filter, is for a large number of expert opinions essentially the same as the information one gets from observing yet another diffusion process. This diffusion process can then be interpreted as an expert who gives a continuous-time estimation about the state of the drift. Let this estimate be given by the diffusion process
\begin{equation}\label{eq:continuous_expert_J}
	\rmd J_t =\mu_t\,\rmd t +\sigma_J\,\rmd W^J_t,
\end{equation}
where $W^J$ is an $l$-dimensional Brownian motion with $l\geq d$ that is independent of all other Brownian motions in the model and of the information dates $T_k$. The matrix $\sigma_J\in\R^{d\times l}$ has full rank equal to $d$.

\subsection{Filtering for Different Information Regimes}

For an investor in the financial market defined above, the ability to choose good trading strategies is based heavily on which information is available about the unknown drift process $\mu$. To be able to assess the value of information coming from observing expert opinions, we consider various types of investors with different sources of information. This follows the approach in Gabih et al.~\cite{gabih_kondakji_sass_wunderlich_2014} and in Sass et al.~\cite{sass_westphal_wunderlich_2017}. The information available to an investor can be described by the investor filtration $\mathbb{F}^H=(\mathcal{F}^H_t)_{t\in[0,T]}$ where $H$ serves as a placeholder for the various information regimes. We work with filtrations that are augmented by $\calN_{\mathbb{P}}$, the set of null sets under measure $\mathbb{P}$. We consider the cases
\begingroup
\allowdisplaybreaks
\begin{alignat*}{3}
	&\mathbb{F}^R && =(\mathcal{F}^R_t)_{t\in[0,T]} && \text{ where } \mathcal{F}^R_t=\sigma((R_s)_{s\in[0,t]})\vee\sigma(\calN_{\mathbb{P}}), \\
	&\mathbb{F}^Z && =(\mathcal{F}^Z_t)_{t\in[0,T]} && \text{ where } \mathcal{F}^Z_t=\sigma((R_s)_{s\in[0,t]})\vee\sigma((T_k,Z_k)_{T_k\leq t})\vee\sigma(\calN_{\mathbb{P}}), \\
	&\mathbb{F}^J && =(\mathcal{F}^J_t)_{t\in[0,T]} && \text{ where } \mathcal{F}^J_t=\sigma((R_s)_{s\in[0,t]})\vee\sigma((J_s)_{s\in[0,t]})\vee\sigma(\calN_{\mathbb{P}}),\\
	&\mathbb{F}^F && =(\mathcal{F}^F_t)_{t\in[0,T]} && \text{ where } \mathcal{F}^F_t=\sigma((R_s)_{s\in[0,t]})\vee\sigma((\mu_s)_{s\in[0,t]})\vee\sigma(\calN_{\mathbb{P}}).
\end{alignat*}
\endgroup
When speaking of the $H$-investor we mean the investor with investor filtration $\mathbb{F}^H=(\mathcal{F}^H_t)_{t\in[0,T]}$, $H\in\{R,Z,J,F\}$. Note that the $R$-investor observes only the return process $R$, the $Z$-investor combines the information from observing the return process and the discrete-time expert opinions $Z_k$, and the $J$-investor observes the return process and the continuous-time expert $J$. The $F$-investor has full information about the drift in the sense that she can observe the drift process directly. This case is included as a benchmark.

As already mentioned, the investors in our financial market make trading decisions based on available information about the drift process $\mu$. Only the $F$-investor can observe the drift, the other investors have to estimate it. The conditional distribution of the drift under partial information is called the \emph{filter}. In the mean-square sense, an optimal estimator for the drift at time $t$ given the available information is then the \emph{conditional mean} $\muhat{H}{t}:=\E[\mu_t\,|\,\mathcal{F}^H_t]$. How close this estimator is to the true state of the drift can be assessed by looking at the corresponding \emph{conditional covariance matrix}
\[ \gam{H}{t} := \E\bigl[(\mu_t-\muhat{H}{t})(\mu_t-\muhat{H}{t})^\transp\,\big|\,\mathcal{F}^H_t\bigr]. \]
Note that since we deal with Gaussian distributions here, the filter is also Gaussian and completely characterized by conditional mean and conditional covariance matrix.
In the next sections we investigate the behavior of the filter for a $Z$-investor with access to an increasing number of expert opinions. For this purpose, we state in the following the dynamics of the filters for the various investors defined above.
For the $R$-investor, we are in the setting of the well-known Kalman filter.

\begin{lemma}\label{lem:filter_R_dynamics}
	The filter of the $R$-investor is Gaussian. The conditional mean $\muhat{R}{}$ follows the dynamics
	\[ \rmd\muhat{R}{t} = \alpha(\delta-\muhat{R}{t})\,\rmd t + \gam{R}{t}(\sigma_R\sigma_R^\transp)^{-1}(\rmd R_t-\muhat{R}{t}\,\rmd t), \]
	where $\gam{R}{}$ is the solution of the ordinary Riccati differential equation
	\[ \frac{\rmd}{\rmd t} \gam{R}{t} = -\alpha \gam{R}{t} -\gam{R}{t}\alpha + \beta\beta^\transp - \gam{R}{t}(\sigma_R\sigma_R^\transp)^{-1}\gam{R}{t}. \]
	The initial values are $\muhat{R}{0}=m_0$ and $\gam{R}{0}=\Sigma_0$.
\end{lemma}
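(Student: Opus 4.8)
The plan is to recognize $(\mu,R)$ as a linear Gaussian signal--observation system and to obtain the result as a specialization of the classical Kalman--Bucy filter, whose hypotheses I would verify directly. The signal $\mu$ has affine dynamics $\rmd\mu_t=(\alpha\delta-\alpha\mu_t)\,\rmd t+\beta\,\rmd B_t$ and the observation satisfies $\rmd R_t=\mu_t\,\rmd t+\sigma_R\,\rmd W^R_t$, with constant coefficients and driving Brownian motions $B,W^R$ that are independent of each other and of the Gaussian initial value $\mu_0\sim\calN(m_0,\Sigma_0)$. The single nondegeneracy condition to check is invertibility of the observation-noise covariance $\sigma_R\sigma_R^\transp$; since $\sigma_R\in\R^{d\times m}$ has full rank $d$, the matrix $\sigma_R\sigma_R^\transp\in\R^{d\times d}$ is symmetric positive definite, so $(\sigma_R\sigma_R^\transp)^{-1}$ exists and the filter gain $\gam{R}{t}(\sigma_R\sigma_R^\transp)^{-1}$ is well defined.

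For Gaussianity I would argue that $(\mu_t,R_t)_{t\in[0,T]}$ is jointly Gaussian, being the solution of a linear SDE driven by independent Brownian motions with a Gaussian initial condition; hence the conditional law of $\mu_t$ given $\calF^R_t$ is Gaussian and the filter is completely described by $\muhat{R}{t}$ and $\gam{R}{t}$.

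For the dynamics I would use the innovations method. Define the innovations process by $\rmd\nu_t:=\rmd R_t-\muhat{R}{t}\,\rmd t=(\mu_t-\muhat{R}{t})\,\rmd t+\sigma_R\,\rmd W^R_t$. Its finite-variation part contributes an $\mathbb{F}^R$-martingale increment because $\E[\mu_t-\muhat{R}{t}\mid\calF^R_t]=0$, and its quadratic variation equals $\sigma_R\sigma_R^\transp\,t$, so $\nu$ is a continuous $\mathbb{F}^R$-martingale carrying the observation information. Applying the general nonlinear filtering (Fujisaki--Kallianpur--Kunita) equation to the coordinate functional $\mu_t$, the projected drift contributes $\alpha(\delta-\muhat{R}{t})$ because $\alpha,\delta$ are deterministic, while the correction term has gain equal to the conditional covariance of $\mu_t$ against the observation, namely $\gam{R}{t}(\sigma_R\sigma_R^\transp)^{-1}$, which yields the stated SDE for $\muhat{R}{t}$.

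Finally, the Riccati equation follows by computing the dynamics of the error $\mu_t-\muhat{R}{t}$ and applying It\^o's formula to the outer product $(\mu_t-\muhat{R}{t})(\mu_t-\muhat{R}{t})^\transp$, then taking expectations. The structural feature I would exploit, and the step that deserves the most care, is that in the linear Gaussian case the conditional covariance $\gam{R}{t}$ is \emph{deterministic}, independent of the observed path of $R$; this decouples the covariance equation from the mean equation and reduces it to the ordinary Riccati ODE in the statement. Since $\calF^R_0$ is trivial up to null sets, the initial conditions are $\muhat{R}{0}=\E[\mu_0]=m_0$ and $\gam{R}{0}=\cov(\mu_0)=\Sigma_0$. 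The main obstacle is to justify rigorously that the innovations process generates $\mathbb{F}^R$ and that the conditional covariance is deterministic; once these are in place the lemma is exactly the Kalman--Bucy theorem of Liptser and Shiryaev specialized to constant coefficients.
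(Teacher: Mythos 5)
Your proposal is correct and follows essentially the same route as the paper: the paper disposes of this lemma by observing that the pair $(\mu,R)$ is a linear Gaussian signal--observation system and citing the classical Kalman--Bucy theorem (Theorem~10.3 of Liptser and Shiryaev), which is exactly the theorem you specialize to. Your additional verification of the nondegeneracy condition (full rank of $\sigma_R$ gives invertibility of $\sigma_R\sigma_R^\transp$) and your sketch of the innovations/Fujisaki--Kallianpur--Kunita derivation are simply an unpacking of what that cited theorem provides, not a different argument.
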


This lemma follows directly from the Kalman filter theory, see for example Theorem~10.3 of Liptser and Shiryaev~\cite{liptser_shiryaev_1974}. Note that $\gam{R}{t}$ follows an ordinary differential equation, called Riccati equation, and is hence deterministic.

Next, we consider the $J$-investor who observes the diffusion processes $R$ and $J$.

\begin{lemma}\label{lem:filter_D_dynamics}
	The filter of the $J$-investor is Gaussian. The conditional mean $\muhat{J}{}$ follows the dynamics
	\[ \rmd \muhat{J}{t} = \alpha(\delta-\muhat{J}{t})\,\rmd t + \gam{J}{t} \begin{pmatrix} (\sigma_R\sigma_R^\transp)^{-1} \\[1mm] (\sigma_J\sigma_J^\transp)^{-1} \end{pmatrix}^\transp \begin{pmatrix}\rmd R_t- \muhat{J}{t}\,\rmd t \\[1mm] \rmd J_t-\muhat{J}{t}\,\rmd t \end{pmatrix}, \]
	where $\gam{J}{}$ is the solution of the ordinary Riccati differential equation
	\begin{equation}\label{eq:Riccati_ode_D}
		\ddt \gam{J}{t} = -\alpha\gam{J}{t}-\gam{J}{t}\alpha+\beta\beta^\transp-\gam{J}{t}\bigl((\sigma_R\sigma_R^\transp)^{-1}+(\sigma_J\sigma_J^\transp)^{-1}\bigr) \gam{J}{t}
	\end{equation}
	with $\muhat{J}{0}=m_0$ and $\gam{J}{0}=\Sigma_0$.
\end{lemma}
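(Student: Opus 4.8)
The plan is to reduce the statement to the standard Kalman--Bucy filter by treating the pair $(R,J)$ as a single vector-valued observation of the linear signal $\mu$. Since the signal dynamics $\rmd\mu_t = \alpha(\delta-\mu_t)\,\rmd t + \beta\,\rmd B_t$ are linear, the prior law $\mu_0\sim\calN(m_0,\Sigma_0)$ is Gaussian, and both $R$ and $J$ are linear in $\mu$ with additive Brownian noise, the system is conditionally Gaussian and the filter follows from the very result already invoked for the $R$-investor in Lemma~\ref{lem:filter_R_dynamics}, namely Theorem~10.3 of Liptser and Shiryaev~\cite{liptser_shiryaev_1974}. The only real work is to cast the combined observation in the right form and then read off the gain and Riccati coefficients.

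Concretely, I would stack the observations as $Y_t=(R_t^\transp,J_t^\transp)^\transp\in\R^{2d}$, which satisfies
\[ \rmd Y_t = A\mu_t\,\rmd t + C\,\rmd\widetilde W_t, \qquad A=\begin{pmatrix} I_d\\ I_d\end{pmatrix}, \quad C=\begin{pmatrix}\sigma_R & 0\\ 0 & \sigma_J\end{pmatrix}, \]
where $\widetilde W=((W^R)^\transp,(W^J)^\transp)^\transp$ is an $(m+l)$-dimensional Brownian motion. The crucial point is the independence structure: because $W^J$ is independent of $W^R$, the off-diagonal blocks vanish, so the observation-noise covariance is block-diagonal,
\[ CC^\transp = \diag\bigl(\sigma_R\sigma_R^\transp,\ \sigma_J\sigma_J^\transp\bigr), \]
and since $\sigma_R,\sigma_J$ have full rank $d$, each diagonal block is positive definite, making $CC^\transp$ invertible. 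Moreover, because $B$ is independent of both $W^R$ and $W^J$, there is no correlation between the signal noise $\beta\,\rmd B_t$ and the observation noise, so the filter takes the uncorrelated (``clean'') Kalman--Bucy form without cross-terms.

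Applying the theorem then yields the conditional-Gaussian filter with
\[ \rmd\muhat{J}{t} = \alpha(\delta-\muhat{J}{t})\,\rmd t + \gam{J}{t}A^\transp(CC^\transp)^{-1}\bigl(\rmd Y_t - A\muhat{J}{t}\,\rmd t\bigr), \]
\[ \ddt\gam{J}{t} = -\alpha\gam{J}{t}-\gam{J}{t}\alpha^\transp+\beta\beta^\transp-\gam{J}{t}A^\transp(CC^\transp)^{-1}A\,\gam{J}{t}, \]
and with $\muhat{J}{0}=\E[\mu_0]=m_0$, $\gam{J}{0}=\cov(\mu_0)=\Sigma_0$. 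It remains to substitute the block matrices. A short computation gives $A^\transp(CC^\transp)^{-1}=\bigl((\sigma_R\sigma_R^\transp)^{-1}\ (\sigma_J\sigma_J^\transp)^{-1}\bigr)$, which, applied to the innovation vector $\rmd Y_t - A\muhat{J}{t}\,\rmd t=(\rmd R_t-\muhat{J}{t}\,\rmd t,\ \rmd J_t-\muhat{J}{t}\,\rmd t)^\transp$, reproduces the stated dynamics of $\muhat{J}{}$, while $A^\transp(CC^\transp)^{-1}A=(\sigma_R\sigma_R^\transp)^{-1}+(\sigma_J\sigma_J^\transp)^{-1}$ together with $\alpha^\transp=\alpha$ (by the assumed symmetry of $\alpha$) turns the Riccati equation into~\eqref{eq:Riccati_ode_D}.

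I do not expect a genuine obstacle: the result is a direct vector-valued analogue of Lemma~\ref{lem:filter_R_dynamics}, and the argument is essentially bookkeeping once the combined observation is written down. The only points requiring care are (i) confirming that the independence of $W^J$ from $B$ and $W^R$ is exactly what produces the block-diagonal, signal-uncorrelated structure needed for the clean filter form, and (ii) checking nondegeneracy, i.e.\ invertibility of $CC^\transp$, which follows from the full-rank assumptions on $\sigma_R$ and $\sigma_J$. Together these verify the hypotheses of the Kalman--Bucy theorem and, in particular, that $\gam{J}{}$ is deterministic.
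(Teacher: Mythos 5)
Your proposal is correct and follows essentially the same route as the paper: the paper's proof also just notes positive definiteness of $(\sigma_R\sigma_R^\transp)^{-1}+(\sigma_J\sigma_J^\transp)^{-1}$ and then invokes Theorem~10.3 of Liptser and Shiryaev applied to the stacked observation $(R,J)$. You merely spell out the block-matrix bookkeeping that the paper leaves implicit, and your computations of $A^\transp(CC^\transp)^{-1}$ and $A^\transp(CC^\transp)^{-1}A$ are accurate.
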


\begin{proof}
	First, note that the matrix $(\sigma_R\sigma_R^\transp)^{-1}+(\sigma_J\sigma_J^\transp)^{-1}\in\R^{d\times d}$ is symmetric and positive definite, and hence nonsingular. The distribution of the filter as well as the dynamics of $\muhat{J}{}$ and $\gam{J}{}$ then follow immediately from the Kalman filter theory, see again Theorem~10.3 in Liptser and Shiryaev~\cite{liptser_shiryaev_1974}.
\end{proof}

Note that, just like in the case for the $R$-investor, the conditional covariance matrix is deterministic.

Let us now come to the $Z$-investor. Recall that this investor observes the return process $R$ continuously in time and at (possibly random) information dates $T_k$ the expert opinions $Z_k$. We state the dynamics of $\muhat{Z}{}$ and $\gam{Z}{}$ in the following lemma.

\begin{lemma}\label{lem:filter_C_dynamics}
	Given a sequence of information dates $T_k$, the filter of the $Z$-investor is Gaussian. The dynamics of the conditional mean and conditional covariance matrix are given as follows:
	\begin{enumerate}[label=(\roman*)]
		\item Between the information dates $T_k$ and $T_{k+1}$, $k\in\N_0$, it holds
		\[ \rmd \muhat{Z}{t} = \alpha (\delta-\muhat{Z}{t})\,\rmd t +\gam{Z}{t} (\sigma_R\sigma_R^\transp)^{-1}(\rmd R_t-\muhat{Z}{t} \,\rmd t) \]
		for $t\in[T_k,T_{k+1})$, where $\gam{Z}{}$ follows the ordinary Riccati differential equation
		\[ \frac{\rmd}{\rmd t} \gam{Z}{t} = -\alpha\gam{Z}{t} -\gam{Z}{t}\alpha +\beta\beta^\transp - \gam{Z}{t}(\sigma_R\sigma_R^\transp)^{-1}\gam{Z}{t} \]
		for $t\in[T_k,T_{k+1})$.
		The initial values are $\muhat{Z}{T_k}$ and $\gam{Z}{T_k}$, respectively, with $\muhat{Z}{0}=m_0$ and $\gam{Z}{0}=\Sigma_0$.
		
		\item The update formulas at information dates $T_k$, $k\in\N$, are
		\begin{equation*}
			\begin{aligned}
				\muhat{Z}{T_k} &= \rho_k(\gam{Z}{T_k-}) \muhat{Z}{T_k-}+\bigl(I_d-\rho_k(\gam{Z}{T_k-})\bigr)Z_k \\
				&= \muhat{Z}{T_k-}+\bigl(I_d-\rho_k(\gam{Z}{T_k-})\bigr)\bigl(Z_k-\muhat{Z}{T_k-}\bigr)
			\end{aligned}
		\end{equation*}
		and
		\begin{equation*}
			\begin{aligned}
				\gam{Z}{T_k} &= \rho_k(\gam{Z}{T_k-})\gam{Z}{T_k-} \\
				&=\gam{Z}{T_k-}+\bigl(\rho_k(\gam{Z}{T_k-})-I_d\bigr)\gam{Z}{T_k-},
			\end{aligned}
		\end{equation*}
		where $\rho_k(\gam{}{})=\Gamma_k(\gam{}{}+\Gamma_k)^{-1}$.
	\end{enumerate}
\end{lemma}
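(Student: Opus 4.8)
The plan is to establish the two parts separately and combine them into an induction over the information dates, reducing everything to classical Gaussian filtering. Because $(T_k)_{k\in I}$ is independent of $\mu_0$, of $B$ and $W^R$, and of $(\varepsilon_k)_{k\in I}$, I would first condition on a realization of the information dates, which renders them deterministic and lets me argue interval by interval. The induction hypothesis is that, for each $k$, the conditional law of $\mu_{T_k}$ given $\calF^Z_{T_k}$ is Gaussian with mean $\muhat{Z}{T_k}$ and covariance $\gam{Z}{T_k}$; the base case $k=0$ is just the prescribed initial law $\mu_0\sim\calN(m_0,\Sigma_0)$ together with the trivial $\sigma$-field $\calF^Z_0$.

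For part (i), I would observe that on $[T_k,T_{k+1})$ no new expert opinion arrives, so $\calF^Z_t$ is generated by $\calF^Z_{T_k}$ and the increments of $R$ over $[T_k,t]$. Conditionally on $\calF^Z_{T_k}$ the pair $(\mu,R)$ on this interval is precisely the linear-Gaussian system underlying Lemma~\ref{lem:filter_R_dynamics}, the only change being that the law at the left endpoint $T_k$ is $\calN(\muhat{Z}{T_k},\gam{Z}{T_k})$ instead of $\calN(m_0,\Sigma_0)$. Theorem~10.3 in Liptser and Shiryaev~\cite{liptser_shiryaev_1974} then applies on $[T_k,T_{k+1})$ and delivers the stated linear SDE for $\muhat{Z}{}$ and the Riccati ODE for $\gam{Z}{}$, while preserving Gaussianity up to $T_{k+1}-$.

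For part (ii), I would treat the arrival of $Z_k$ at $T_k$ as a finite-dimensional conditional-Gaussian update. By the propagation in part (i), the conditional law of $\mu_{T_k}$ given $\calF^Z_{T_k-}$ is $\calN(\muhat{Z}{T_k-},\gam{Z}{T_k-})$. Since $\varepsilon_k$ is independent of $\calF^Z_{T_k-}$ and of $\mu_{T_k}$, the vector $(\mu_{T_k},Z_k)$ with $Z_k=\mu_{T_k}+(\Gamma_k)^{\frac12}\varepsilon_k$ is, conditionally on $\calF^Z_{T_k-}$, jointly Gaussian with cross-covariance $\gam{Z}{T_k-}$ and $\cov(Z_k\mid\calF^Z_{T_k-})=\gam{Z}{T_k-}+\Gamma_k$. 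The conditional-Gaussian formula then yields the posterior mean $\muhat{Z}{T_k-}+\gam{Z}{T_k-}(\gam{Z}{T_k-}+\Gamma_k)^{-1}(Z_k-\muhat{Z}{T_k-})$ and covariance $\gam{Z}{T_k-}-\gam{Z}{T_k-}(\gam{Z}{T_k-}+\Gamma_k)^{-1}\gam{Z}{T_k-}$. It then remains to rewrite these via $\rho_k(\gam{}{})=\Gamma_k(\gam{}{}+\Gamma_k)^{-1}$: from $I_d-\rho_k(\gam{Z}{T_k-})=\gam{Z}{T_k-}(\gam{Z}{T_k-}+\Gamma_k)^{-1}$ one reads off the mean update, and the identity $\Gamma_k(\gam{}{}+\Gamma_k)^{-1}\gam{}{}=\gam{}{}(\gam{}{}+\Gamma_k)^{-1}\Gamma_k=((\gam{}{})^{-1}+\Gamma_k^{-1})^{-1}$ yields the covariance update $\rho_k(\gam{Z}{T_k-})\gam{Z}{T_k-}$ and simultaneously confirms its symmetry. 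This re-establishes the induction hypothesis at the next stage and closes the argument.

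The main obstacle, I expect, is not any single computation but the careful bookkeeping of the conditioning. One must check that $\varepsilon_k$ is independent of $\mu_{T_k}$ after conditioning on $\calF^Z_{T_k-}$ — which holds because $\calF^Z_{T_k-}$ is built only from $R$, $\mu_0$, $B$ and the earlier noises $\varepsilon_1,\dots,\varepsilon_{k-1}$, all independent of $\varepsilon_k$ — and that conditioning on the possibly random dates $(T_k)_{k\in I}$ is legitimate, which rests on their assumed independence of all remaining randomness. With these independence structures secured, joint Gaussianity and the update formulas follow from standard finite-dimensional arguments.
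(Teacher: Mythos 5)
Your proposal is correct and follows essentially the same route as the paper: conditioning on the (independent) information dates to reduce to the deterministic case, applying the Kalman filter theory of Liptser--Shiryaev on each interval $[T_k,T_{k+1})$, and performing a conditional-Gaussian (degenerate discrete-time Kalman) update at each $T_k$. The only difference is one of presentation — the paper delegates the interval-wise induction and the update computation to Lemma~2.3 of Sass et al.~\cite{sass_westphal_wunderlich_2017}, whereas you spell these steps out explicitly, including the matrix identities relating the posterior covariance to $\rho_k$.
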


\begin{proof}
	For deterministic time points $T_k$, the above lemma is Lemma~2.3 of Sass et al.~\cite{sass_westphal_wunderlich_2017} where a detailed proof is given. For the more general case where the $T_k$ need not be deterministic, recall that we have made the assumption that the sequence $(T_k)_{k\in I}$ is independent of the other random variables in the market. In particular, $(T_k)_{k\in I}$ and the drift process $\mu$ are independent. Because of that, the dynamics of the conditional mean and conditional covariance matrix are the same as for deterministic information dates and we get the same update formulas, the only difference being that the update times might now be non-deterministic.
	
	The Gaussian distribution of the filter between information dates follows as in the previous lemmas from the Kalman filter theory. The updates at information dates can be seen as a degenerate discrete-time Kalman filter. Hence, the distribution of the filter at information dates remains Gaussian after the Bayesian update.
\end{proof}

Note that the dynamics of $\muhat{Z}{}$ and $\gam{Z}{}$ between information dates are the same as for the $R$-investor, see Lemma~\ref{lem:filter_R_dynamics}. The values at an information date $T_k$ are obtained from a Bayesian update.
If we have non-deterministic information dates $T_k$ then in contrast to both the $R$-investor and the $J$-investor, the conditional covariance matrices $\gam{Z}{}$ of the $Z$-investor are non-deterministic since updates take place at random times.

In the proofs of our main results we repeatedly need to find upper bounds for various expressions that involve the conditional covariance matrices $\gam{J}{}$ or $\gam{Z}{}$. A key tool is boundedness of these matrices. Here, it is useful to consider a partial ordering of symmetric matrices. For symmetric matrices $A,B\in\R^{d\times d}$ we write $A\preceq B$ if $B-A$ is positive semidefinite. Note that $A\preceq B$ in particular implies that $\lVert A\rVert\leq\lVert B\rVert$.

\begin{lemma}\label{lem:boundedness_of_covariances}
	For any sequence $(T_k,Z_k)_{k\in I}$ we have $\gam{Z}{t}\preceq\gam{R}{t}$ and $\gam{J}{t}\preceq\gam{R}{t}$ for all $t\geq 0$. In particular, there exists a constant $C_{\gam{}{}}>0$ such that
	\[ \lVert\gam{Z}{t}\rVert\leq C_{\gam{}{}} \quad \text{and} \quad \lVert\gam{J}{t}\rVert\leq C_{\gam{}{}} \]
	for all $t\in[0,T]$.
\end{lemma}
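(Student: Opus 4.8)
The plan is to exploit the monotonicity of conditional covariances under filtration enlargement: observing more can only shrink the conditional covariance in the Loewner order. The precise tool is the conditional law of total covariance, which for sub-$\sigma$-algebras $\mathcal{G}\subseteq\mathcal{H}$ reads
\[ \cov(\mu_t\,|\,\mathcal{G}) = \E\bigl[\cov(\mu_t\,|\,\mathcal{H})\,\big|\,\mathcal{G}\bigr] + \cov\bigl(\E[\mu_t\,|\,\mathcal{H}]\,\big|\,\mathcal{G}\bigr), \]
where both summands on the right are symmetric and positive semidefinite. I would apply this to the nested investor filtrations of Section~\ref{sec:market_model_and_filtering}, together with the fact that $\gam{H}{t}=\cov(\mu_t\,|\,\mathcal{F}^H_t)$ by definition of the filter.

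First I would treat the $J$-investor, which is the easy case. Here $\mathcal{F}^R_t\subseteq\mathcal{F}^J_t$, and by Lemmas~\ref{lem:filter_R_dynamics} and~\ref{lem:filter_D_dynamics} both $\gam{R}{t}$ and $\gam{J}{t}$ are deterministic (they solve Riccati ODEs). Taking $\mathcal{G}=\mathcal{F}^R_t$ and $\mathcal{H}=\mathcal{F}^J_t$, the conditional expectation of the deterministic matrix $\gam{J}{t}$ is itself, so the identity collapses to $\gam{R}{t}=\gam{J}{t}+\cov(\muhat{J}{t}\,|\,\mathcal{F}^R_t)$, and positive semidefiniteness of the last term gives $\gam{J}{t}\preceq\gam{R}{t}$.

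The $Z$-investor is more delicate, and this is where I expect the real work to lie: when the information dates are random, $\gam{Z}{t}$ is itself random, and a naive application of the above would only yield the \emph{in-expectation} bound $\E[\gam{Z}{t}\,|\,\mathcal{F}^R_t]\preceq\gam{R}{t}$ rather than the pathwise statement claimed. To recover the pathwise bound I would enlarge the smaller filtration by the (uninformative) dates: set $\mathcal{G}_t=\mathcal{F}^R_t\vee\sigma((T_k)_{T_k\le t})$, so that still $\mathcal{G}_t\subseteq\mathcal{F}^Z_t$. Because $(T_k)_{k\in I}$ is independent of $(\mu,R)$, observing the dates adds no information about $\mu_t$ and hence $\cov(\mu_t\,|\,\mathcal{G}_t)=\gam{R}{t}$; and because the filter is linear-Gaussian, Lemma~\ref{lem:filter_C_dynamics} shows that $\gam{Z}{t}$ is a \emph{deterministic} function of the dates alone (it does not depend on the observed values), hence is $\mathcal{G}_t$-measurable, so $\E[\gam{Z}{t}\,|\,\mathcal{G}_t]=\gam{Z}{t}$. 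The law of total covariance applied to $\mathcal{G}_t\subseteq\mathcal{F}^Z_t$ then gives $\gam{R}{t}=\gam{Z}{t}+\cov(\muhat{Z}{t}\,|\,\mathcal{G}_t)\succeq\gam{Z}{t}$ almost surely. An equivalent, purely analytic route avoids filtering identities: between dates the difference $\gam{R}{t}-\gam{Z}{t}$ solves a linear Lyapunov-type matrix ODE (the two Riccati equations share the same quadratic term, which cancels up to a term linear in the difference), and such equations propagate positive semidefiniteness; at each date the Bayesian update can only decrease the covariance, since with $Q:=\gam{Z}{T_k-}$ one computes $Q-\gam{Z}{T_k}=Q(Q+\Gamma_k)^{-1}Q\succeq0$. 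Starting from $\gam{R}{0}=\gam{Z}{0}=\Sigma_0$, this again yields $\gam{Z}{t}\preceq\gam{R}{t}$.

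Finally I would assemble the norm bound. Since conditional covariances are positive semidefinite, the two comparisons read $0\preceq\gam{Z}{t}\preceq\gam{R}{t}$ and $0\preceq\gam{J}{t}\preceq\gam{R}{t}$, so by the norm property noted just before the lemma, $\lVert\gam{Z}{t}\rVert\le\lVert\gam{R}{t}\rVert$ and $\lVert\gam{J}{t}\rVert\le\lVert\gam{R}{t}\rVert$. The deterministic matrix $\gam{R}{t}$ is continuous on the compact interval $[0,T]$ (and in fact $\gam{R}{t}\preceq\Sigma_t$, with $\Sigma_t$ the bounded solution of the linear Lyapunov equation $\dot\Sigma_t=-\alpha\Sigma_t-\Sigma_t\alpha+\beta\beta^\transp$, bounded because $\alpha\succ0$), so $C_{\gam{}{}}:=\sup_{t\in[0,T]}\lVert\gam{R}{t}\rVert<\infty$ does the job. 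The crux of the argument is the pathwise—as opposed to merely averaged—comparison for the random-dates $Z$-filter; the observation that makes it routine is that in the Gaussian setting $\gam{Z}{t}$ is measurable with respect to the information dates, which legitimises the filtration enlargement (or, alternatively, the jump-by-jump ODE comparison).
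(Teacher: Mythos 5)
Your proof is correct, and it correctly isolates the only delicate point: the pathwise (not merely averaged) comparison $\gam{Z}{t}\preceq\gam{R}{t}$ when the dates are random, which you settle by observing that $\gam{Z}{t}$ is a deterministic functional of the dates alone, hence measurable with respect to $\mathcal{G}_t=\mathcal{F}^R_t\vee\sigma((T_k)_{T_k\le t})$, and then applying the conditional law of total covariance to $\mathcal{G}_t\subseteq\mathcal{F}^Z_t$. The paper argues differently for the $Z$-investor: it cites the update monotonicity $\gam{Z}{T_k}\preceq\gam{Z}{T_k-}$ (Proposition~2.2 of \cite{sass_westphal_wunderlich_2017}) and a comparison theorem for solutions of a common Riccati ODE with ordered initial values (Theorem~10 of \cite{kucera_1973}), and concludes by induction over the information dates --- which is precisely your ``alternative analytic route'', except that you prove both ingredients directly (the update identity $Q-\Gamma_k(Q+\Gamma_k)^{-1}Q=Q(Q+\Gamma_k)^{-1}Q\succeq 0$, and the Riccati comparison via a Lyapunov equation for the difference, whose flow preserves positive semidefiniteness). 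For the $J$-investor the two arguments share the same key idea, the inclusion $\mathcal{F}^R_t\subseteq\mathcal{F}^J_t$, which the paper invokes by analogy to Proposition~3.1 of \cite{sass_westphal_wunderlich_2017} and you implement explicitly via total covariance, using that $\gam{R}{t}$ and $\gam{J}{t}$ are deterministic. For the final norm bound the paper cites the convergence $\gam{R}{t}\to\gam{R}{\infty}$ as $t\to\infty$ (Theorem~4.1 of \cite{sass_westphal_wunderlich_2017}), which bounds $\lVert\gam{R}{t}\rVert$ on all of $[0,\infty)$, whereas you simply use continuity of the Riccati solution on the compact interval $[0,T]$ --- more elementary and fully sufficient, since the norm bound is only asserted there. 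In summary: your filtration-enlargement argument is self-contained, treats deterministic and random dates uniformly, and makes the pathwise nature of the bound transparent; the paper's route is shorter on the page because it leans on results already established in the cited companion papers, and its date-by-date induction mirrors the structure reused in the later convergence proofs.
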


\begin{proof}
	Let $(T_k,Z_k)_{k\in I}$ be any sequence of expert opinions and $(\gam{Z}{t})_{t\in[0,T]}$ the conditional covariance matrices of the corresponding filter.
	Every update decreases the covariance in the sense that $\gam{Z}{T_k}\preceq\gam{Z}{T_k-}$, see Proposition~2.2 in Sass et al.~\cite{sass_westphal_wunderlich_2017}. Also, if $(P_t)_{t\geq 0}$ and $(\tilde{P}_t)_{t\geq 0}$ are solutions of the same Riccati differential equation, where the initial values fulfill $P_0\preceq\tilde{P}_0$, then $P_t\preceq\tilde{P}_t$ for all $t\geq 0$, see for example Theorem~10 in Ku\u{c}era~\cite{kucera_1973}. Inductively, we can deduce that in our setting $\gam{Z}{t}\preceq\gam{R}{t}$ for all $t\geq 0$.
	Also, one can show that $\gam{J}{t}\preceq\gam{R}{t}$ for all $t\geq 0$ in analogy to the proof of Proposition~3.1 in Sass et al.~\cite{sass_westphal_wunderlich_2017}. The key idea for the proof is to use the fact that $\mathcal{F}^R_t\subseteq\mathcal{F}^J_t$ for all $t\geq 0$.
	
	By Theorem~4.1 in Sass et al.~\cite{sass_westphal_wunderlich_2017} there exists a positive-semidefinite matrix $\gam{R}{\infty}$ such that
	\[ \lim_{t\to\infty} \gam{R}{t} = \gam{R}{\infty}. \]
	Hence, $\lVert\gam{R}{t}\rVert$ is bounded by some constant $C_{\gam{}{}}>0$, and the claim follows.
\end{proof}

\section{Diffusion Approximation of Filters for Deterministic Information Dates}\label{sec:diffusion_approximation_of_filters_for_deterministic_information_dates}

In this section we investigate the asymptotic behavior of the filters for a $Z$-investor when the frequency of expert opinion arrivals goes to infinity.
We consider first the case for deterministic and equidistant information dates. Therefore, let $n\in\N$ and $\Delta_n=\frac{T}{n}$. Now assume that $T_k=t_k$ for every $k=1,\dots,n$, where $(t_k)_{k=1,\dots,n}$ is the sequence of deterministic time points $t_k=k\Delta_n$. So there are $n$ expert opinions that arrive equidistantly in the time interval $[0,T]$, the distance between two information dates being $\Delta_n$.

In the following, we deduce convergence results for both the conditional means and the conditional covariance matrices of the $Z$-investor when sending $n$ to infinity.
Note that convergence of discrete-time filters is addressed in earlier papers, e.g.\ by Salgado et al.~\cite{salgado_middleton_goodwin_1988} or Aalto~\cite{aalto_2016}. There, the authors show convergence of the discrete-time Kalman filter to the continuous-time equivalent. In Aalto~\cite{aalto_2016} the discrete-time filter is based on discrete-time observations of the continuous-time observation process whereas in Salgado et al.~\cite{salgado_middleton_goodwin_1988} the authors approximate both the continuous-time signal and observation by discrete-time processes. Neither of these assumptions match our model for the discrete-time expert opinions which is why we need to prove convergence in the following.

We use an additional superscript $n$ to emphasize dependence on the number of expert opinions, writing for example $(\gam{Z,n}{t})_{t\in[0,T]}$ for the conditional covariance matrix of the filter corresponding to these $n$ expert opinions.
In Sass et al.~\cite{sass_westphal_wunderlich_2017} a convergence result is proven for the case where the expert opinions are of the form
\begin{equation}\label{eq:form_of_expert_opinions_fixed}
	Z_k^{(n)} = \mu_{t^{(n)}_k}+(\Gamma_k^{(n)})^{\frac{1}{2}}\varepsilon_k^{(n)}
\end{equation}
with expert's covariances $\Gamma_k^{(n)}$ that are bounded for all $n\in\N$ and $k=1,\dots,n$, see Theorem~3.1 in Sass et al.~\cite{sass_westphal_wunderlich_2017}. There it is shown that under the assumption of bounded expert's covariances it holds
\[ \lim_{n\to\infty} \lVert\gam{Z,n}{t}\rVert=0 \]
for any $t\in(0,T]$. Since $\gam{Z,n}{t}$ is a measure for the goodness of the estimator $\muhat{Z,n}{t}$, this means that the conditional mean of the $Z$-investor becomes an arbitrarily good estimator for the true state of the drift $\mu_t$. One can easily deduce that
\[ \lim_{n\to\infty} \E\Bigl[\bigl\lVert \muhat{Z,n}{t}-\mu_t\bigr\rVert^2\Bigr]=0 \]
for any $t\in(0,T]$. Hence, the $Z$-investor essentially approximates the fully informed $F$-investor.

This result heavily relies on the assumption that the expert covariances $\Gamma_k^{(n)}$ are all bounded, meaning that there is some minimal level of reliability of the experts. Here, we study a different situation where more frequent expert opinions are only available at the cost of accuracy. In other words, we assume that, as $\Delta_n$ goes to zero, the variance of expert opinions $Z_k^{(n)}$ increases. This is done for the purpose of approximating $\muhat{Z,n}{}$ and $\gam{Z,n}{}$ for large $n\in\N$ and large $\Gamma_k^{(n)}$.
In the following we assume for the sake of simplicity that $\Gamma_k^{(n)}=\Gamma^{(n)}$ is not time-dependent. We then show that for properly scaled $\Gamma^{(n)}$ which grows linearly in $n$, the information obtained from observing the discrete-time expert opinions is asymptotically the same as that from observing another diffusion process. This will be the diffusion $J$ already defined in \eqref{eq:continuous_expert_J}.

\begin{assumption}\label{ass:deterministic_time_points}
	Let $(T_k^{(n)})_{k=1,\dots,n}=(t_k^{(n)})_{k=1,\dots,n}$ where $t_k^{(n)}=k\Delta_n$ for $k=1,\dots,n$. Furthermore, let the experts' covariance matrices be given by
	\[ \Gamma_k^{(n)}=\Gamma^{(n)}=\frac{1}{\Delta_n}\sigma_J\sigma_J^\transp \]
	for $k=1,\dots,n$. Further, we assume that in~\eqref{eq:form_of_expert_opinions_fixed} the $\calN(0,I_d)$-distributed random variables $\varepsilon_k^{(n)}$ are linked with the Brownian motion $W^J$ from~\eqref{eq:continuous_expert_J} via $\varepsilon_k^{(n)}=\frac{1}{\sqrt{\Delta_n}}\int_{t_k^{(n)}}^{t_{k+1}^{(n)}}\rmd W^J_s$, so that the expert opinions are given as
	\begin{equation}\label{eq:expert_opinions_for_fixed_information_dates}
		Z^{(n)}_k = \mu_{t^{(n)}_k}+\frac{1}{\Delta_n}\sigma_J\int_{t^{(n)}_k}^{t^{(n)}_{k+1}} \rmd W^J_s
	\end{equation}
	for $k=1,\dots,n$.
\end{assumption}

Recall that the matrix $\sigma_J\in\R^{d\times l}$ is exactly the volatility of the diffusion process $J$ with the dynamics
\[ \rmd J_t =\mu_t\,\rmd t +\sigma_J\,\rmd W^J_t, \]
and that $\sigma_J$ has full rank. With $Z^{(n)}_k$ as defined above the discrete-time expert opinions and the continuous-time expert $J$ are obviously correlated. In fact, it holds
\[ Z^{(n)}_k \approx \frac{1}{\Delta_n}\int_{t^{(n)}_k}^{t^{(n)}_{k+1}} \rmd J_s = \frac{1}{\Delta_n}\Bigl(J_{t^{(n)}_{k+1}}-J_{t^{(n)}_k}\Bigr). \]
Further, one can easily show  by using Donsker's Theorem that the piecewise constant process $(\widetilde{J}_t)_{t\in[0,T]}$, defined by
\[ \widetilde{J}_t := \Delta_n\sum_{k=1}^{\lfloor t/\Delta_n\rfloor} Z_k^{(n)} \]
for all $t\in[0,T]$, converges in distribution to $J_t$ as $n$ goes to infinity. For our main convergence results that are given in the following, we however require stronger notions of convergence.

The following theorem now states uniform convergence of $\gam{Z,n}{t}$ to $\gam{J}{t}$ on $[0,T]$ for $n$ going to infinity.

\begin{theorem}\label{thm:q_C_n_goes_to_q_D}
	Under Assumption~\ref{ass:deterministic_time_points} there exists a constant $K_Q>0$ such that
	\[ \bigl\lVert\gam{Z,n}{t}-\gam{J}{t}\bigr\rVert \leq K_Q\Delta_n \]
	for all $t\in[0,T]$. In particular,
	\[ \lim_{n\to\infty} \sup_{t\in[0,T]} \bigl\lVert \gam{Z,n}{t}-\gam{J}{t} \bigr\rVert = 0. \]
\end{theorem}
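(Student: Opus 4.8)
The plan is to exploit the fact that $\gam{Z,n}{}$ and $\gam{J}{}$ solve the \emph{same} Riccati equation between updates --- namely the $R$-Riccati of Lemma~\ref{lem:filter_R_dynamics} --- and that the only discrepancy between the two dynamics is the extra term $-\gam{J}{t}(\sigma_J\sigma_J^\transp)^{-1}\gam{J}{t}$ appearing in \eqref{eq:Riccati_ode_D}, which I expect to be reproduced, on average, by the jumps of $\gam{Z,n}{}$. Writing $S=\sigma_J\sigma_J^\transp$, the choice $\Gamma^{(n)}=\tfrac{1}{\Delta_n}S$ in Assumption~\ref{ass:deterministic_time_points} gives $\rho_k(Q)-I_d=-Q(Q+\Gamma^{(n)})^{-1}=-\Delta_n Q(\Delta_n Q+S)^{-1}$, so that the update in Lemma~\ref{lem:filter_C_dynamics}(ii) becomes
\[ \gam{Z,n}{t_k}-\gam{Z,n}{t_k-} = -\Delta_n\,\gam{Z,n}{t_k-}(\Delta_n\gam{Z,n}{t_k-}+S)^{-1}\gam{Z,n}{t_k-}. \]
Since the covariances are uniformly bounded by Lemma~\ref{lem:boundedness_of_covariances}, a Neumann expansion $(\Delta_n Q+S)^{-1}=S^{-1}+O(\Delta_n)$ shows that each jump equals $-\Delta_n\,\gam{Z,n}{t_k-}(\sigma_J\sigma_J^\transp)^{-1}\gam{Z,n}{t_k-}+O(\Delta_n^2)$, with a remainder uniform in $k$ and $n$. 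This is precisely $\Delta_n$ times the missing Riccati term, evaluated at $\gam{Z,n}{t_k-}$.

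Next I would carry out a one-step error analysis on each interval $[t_{k-1},t_k]$. Setting $E_t=\gam{Z,n}{t}-\gam{J}{t}$ and denoting by $f_R$ and $f_J$ the right-hand sides of the two Riccati equations, I would integrate the continuous part over $[t_{k-1},t_k)$ and add the jump at $t_k$, writing $f_R(\gam{Z,n}{s})-f_J(\gam{J}{s})=\bigl(f_R(\gam{Z,n}{s})-f_R(\gam{J}{s})\bigr)+\gam{J}{s}(\sigma_J\sigma_J^\transp)^{-1}\gam{J}{s}$. The first bracket is controlled by the Lipschitz property of $f_R$ on the bounded set from Lemma~\ref{lem:boundedness_of_covariances}, contributing a term of size $O(\Delta_n\sup_{[t_{k-1},t_k]}\|E_s\|)$. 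The integral of the second term equals $\Delta_n\,\gam{J}{t_{k-1}}(\sigma_J\sigma_J^\transp)^{-1}\gam{J}{t_{k-1}}+O(\Delta_n^2)$, because $\gam{J}{}$ is $C^1$ with bounded derivative. Comparing this with the leading-order jump and using $\gam{Z,n}{t_k-}=\gam{J}{t_{k-1}}+E_{t_{k-1}}+O(\Delta_n)$ (both processes move by $O(\Delta_n)$ over an interval of length $\Delta_n$, with bounded velocity), the second-term integral and the jump cancel up to $O(\Delta_n\|E_{t_{k-1}}\|)+O(\Delta_n^2)$. Collecting everything yields the recursion
\[ \|E_{t_k}\| \le (1+C_1\Delta_n)\,\|E_{t_{k-1}}\| + C_2\Delta_n^2 \]
for constants $C_1,C_2$ independent of $n$.

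Since $\gam{Z,n}{0}=\Sigma_0=\gam{J}{0}$ we have $E_0=0$, and the discrete Grönwall inequality applied to this recursion gives $\|E_{t_k}\|\le \tfrac{C_2}{C_1}\bigl((1+C_1\Delta_n)^{k}-1\bigr)\Delta_n\le \tfrac{C_2}{C_1}(\rme^{C_1 T}-1)\Delta_n$ for every $k\le n$, using $k\Delta_n\le T$. To upgrade this grid-point bound to the claimed uniform bound, I would note that on each interior interval $[t_k,t_{k+1})$ both processes evolve smoothly with bounded velocity, so $\|E_t\|\le(1+C\Delta_n)\|E_{t_k}\|+C\Delta_n$; combined with the grid-point estimate this yields $\|E_t\|\le K_Q\Delta_n$ for all $t\in[0,T]$, and the $\sup$-convergence statement follows immediately.

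The main obstacle is the one-step estimate: one must match the accumulated jumps of the discrete filter against the continuous Riccati correction $-\gam{J}{t}(\sigma_J\sigma_J^\transp)^{-1}\gam{J}{t}$ and verify that \emph{all} error terms --- the Neumann remainder in the jump, the mismatch between $\gam{Z,n}{t_k-}$ and $\gam{J}{t_{k-1}}$, and the quadrature error in the integral --- are genuinely $O(\Delta_n^2)$ plus $O(\Delta_n\|E_{t_{k-1}}\|)$ uniformly in $k$ and $n$. Uniform boundedness of the covariances (Lemma~\ref{lem:boundedness_of_covariances}) is what makes the Lipschitz constants and the Neumann remainder uniform, and is therefore the linchpin of the argument; once it is in place, what remains is a standard discrete Grönwall estimate followed by the interpolation step.
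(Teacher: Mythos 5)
Your proposal is correct and follows essentially the same route as the paper's proof: the Neumann expansion of the Bayesian update showing each jump equals $-\Delta_n\,\gam{Z,n}{t_k-}(\sigma_J\sigma_J^\transp)^{-1}\gam{Z,n}{t_k-}+O(\Delta_n^2)$, a one-step recursion $a_{k+1}\leq(1+C_1\Delta_n)a_k+C_2\Delta_n^2$ closed by the discrete Gronwall lemma (with boundedness of the covariances from Lemma~\ref{lem:boundedness_of_covariances} supplying the uniform Lipschitz and remainder constants), and a final interpolation to arbitrary $t\in[0,T]$. The paper merely organizes the one-step estimate through an explicit Euler approximation of $\gam{J}{}$ and tracks the pre-jump values $\gam{Z,n}{t_k-}$, whereas you match the jump directly against the integral of the extra term $\gam{J}{s}(\sigma_J\sigma_J^\transp)^{-1}\gam{J}{s}$; this is a difference of bookkeeping, not of substance.
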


The proof of Theorem~\ref{thm:q_C_n_goes_to_q_D} is given in Appendix~\ref{app:long_proofs_fixed_times}. It makes use of a discrete version of Gronwall's Lemma for error accumulation, see Lemma~\ref{lem:discrete_gronwall} in Appendix~\ref{app:technical_lemmas}.

Using the uniform convergence of the conditional covariance matrices $\gam{Z,n}{}$ to $\gam{J}{}$ we can also deduce convergence of the corresponding conditional mean $\muhat{Z,n}{}$ to $\muhat{J}{}$ in an $\Lp$-sense.

\begin{theorem}\label{thm:muhat_C_n_goes_to_muhat_D}
	Let $p\in[1,\infty)$. Under Assumption~\ref{ass:deterministic_time_points} there exists a constant $K_{m,p}>0$ such that
	\[ \E\Bigl[\bigl\lVert \muhat{Z,n}{t}-\muhat{J}{t}\bigr\rVert^p\Bigr]\leq K_{m,p}\Delta_n^{p/2} \]
	for all $t\in[0,T]$. In particular,
	\[ \lim_{n\to\infty} \sup_{t\in[0,T]}\E\Bigl[\bigl\lVert \muhat{Z,n}{t}-\muhat{J}{t}\bigr\rVert^p\Bigr] = 0. \]
\end{theorem}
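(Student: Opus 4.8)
The plan is to analyze the difference process $e_t := \muhat{Z,n}{t}-\muhat{J}{t}$ directly: write down its dynamics, estimate the resulting forcing terms in $\Lp$, and close with a Gronwall argument. Subtracting the dynamics of Lemma~\ref{lem:filter_C_dynamics} and Lemma~\ref{lem:filter_D_dynamics}, I observe that between consecutive information dates the two conditional means share the same Ornstein--Uhlenbeck drift and the same return-innovation feedback, so that
\[ \rmd e_t = \bigl(-\alpha-\gam{J}{t}(\sigma_R\sigma_R^\transp)^{-1}\bigr)e_t\,\rmd t + (\gam{Z,n}{t}-\gam{J}{t})(\sigma_R\sigma_R^\transp)^{-1}(\rmd R_t-\muhat{Z,n}{t}\,\rmd t) - \gam{J}{t}(\sigma_J\sigma_J^\transp)^{-1}(\rmd J_t-\muhat{J}{t}\,\rmd t), \]
supplemented by the jumps $\Delta e_{t_k^{(n)}}=(I_d-\rho_k(\gam{Z,n}{t_k-}))(Z_k^{(n)}-\muhat{Z,n}{t_k-})$, since $\muhat{J}{}$ is continuous. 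Writing $e_t$ in variation-of-constants form against the fundamental solution of the linear part, which is uniformly bounded on $[0,T]$ because $\alpha$ and $\gam{J}{}$ are bounded (Lemma~\ref{lem:boundedness_of_covariances}), reduces everything to controlling the $\Lp$-size of two forcing contributions.

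The first contribution, coming from the covariance-prefactor mismatch $(\gam{Z,n}{t}-\gam{J}{t})$ in the return term, is immediately subdominant: the mismatch is $O(\Delta_n)$ uniformly by Theorem~\ref{thm:q_C_n_goes_to_q_D}, while the return innovation $\rmd R_t-\muhat{Z,n}{t}\,\rmd t=(\mu_t-\muhat{Z,n}{t})\,\rmd t+\sigma_R\,\rmd W^R_t$ has coefficients with uniformly bounded moments (all processes involved are Gaussian). A Burkholder--Davis--Gundy estimate then bounds this term at order $\Delta_n$. Every remaining piece that is linear in $e_t$ — the $(-\alpha-\gam{J}{t}(\sigma_R\sigma_R^\transp)^{-1})e_t$ drift and the $e_t$-dependent part of the expert drift — is precisely what the Gronwall step will absorb.

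The crux is the second contribution: showing that the accumulated discrete expert jumps approximate the continuous expert integral with $\Lp$-error of order $\Delta_n^{1/2}$. Here I would use the expansion $I_d-\rho_k(\gam{Z,n}{t_k-})=\gam{Z,n}{t_k-}(\gam{Z,n}{t_k-}+\Gamma^{(n)})^{-1}=\Delta_n\,\gam{Z,n}{t_k-}(\sigma_J\sigma_J^\transp)^{-1}+O(\Delta_n^2)$, valid uniformly by Lemma~\ref{lem:boundedness_of_covariances}, together with the coupling $Z_k^{(n)}=\mu_{t_k^{(n)}}+\frac{1}{\Delta_n}\sigma_J\int_{t_k^{(n)}}^{t_{k+1}^{(n)}}\rmd W^J_s$ of Assumption~\ref{ass:deterministic_time_points}. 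This identifies the stochastic part of the jump at $t_k^{(n)}$ as $\gam{Z,n}{t_k-}(\sigma_J\sigma_J^\transp)^{-1}\sigma_J\int_{t_k^{(n)}}^{t_{k+1}^{(n)}}\rmd W^J_s$ plus lower-order terms, to be compared with $\int_{t_k^{(n)}}^{t}\gam{J}{s}(\sigma_J\sigma_J^\transp)^{-1}\sigma_J\,\rmd W^J_s$. Two distinct mechanisms must be kept separate. At the grid points the integrands differ only by $\gam{Z,n}{t_k-}-\gam{J}{s}=O(\Delta_n)$ (Theorem~\ref{thm:q_C_n_goes_to_q_D} plus Lipschitz continuity of the Riccati solution $\gam{J}{}$), and since the increments over disjoint intervals are independent, Itô isometry and BDG let this martingale accumulate only to order $\Delta_n$. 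Within each interval, however, the discrete filter has already absorbed the \emph{whole} increment $\int_{t_k^{(n)}}^{t_{k+1}^{(n)}}\rmd W^J_s$ at the left endpoint while the continuous filter has absorbed only $\int_{t_k^{(n)}}^{t}\rmd W^J_s$; the residual lead $\int_{t}^{t_{k+1}^{(n)}}\rmd W^J_s$ is of order $\Delta_n^{1/2}$ in $\Lp$ and, crucially, does \emph{not} accumulate across intervals because it is reset at each grid point. This non-accumulating within-interval lead is the dominant term and produces exactly the rate $\Delta_n^{p/2}$.

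Finally I would assemble these estimates and apply Gronwall's lemma — either in continuous time or as a discrete recursion on the grid values in the spirit of Lemma~\ref{lem:discrete_gronwall}, combined with the within-interval bound — to absorb the terms linear in $e_t$ and obtain $\E[\lVert e_t\rVert^p]\le K_{m,p}\Delta_n^{p/2}$ uniformly in $t$, whence the supremum statement. I would carry this out first for $p\ge 2$, where BDG applies directly, and then deduce the range $p\in[1,2)$ from Jensen's inequality via $\E[\lVert e_t\rVert^p]\le(\E[\lVert e_t\rVert^2])^{p/2}$. I expect the main obstacle to be the second contribution: isolating the two error mechanisms (the accumulating $O(\Delta_n)$ martingale at grid points versus the non-accumulating $O(\Delta_n^{1/2})$ lead) and verifying that the matrix-prefactor expansion and the Gaussian moment bounds hold uniformly in both $n$ and $t$.
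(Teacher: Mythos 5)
Your proposal is correct and, at its core, follows the same route as the paper's proof: the paper likewise subtracts the two filter equations, writes the error over each interval $[t_k,t_{k+1})$ in variation-of-constants form (with the OU propagator $\rme^{-\alpha\Delta_n}$, the return-feedback term left as forcing for Gronwall), controls the return-innovation forcing by Theorem~\ref{thm:q_C_n_goes_to_q_D}, uses the first-order expansion $I_d-\rho_k(\gam{Z,n}{t_k-})=\Delta_n\gam{Z,n}{t_k-}(\sigma_J\sigma_J^\transp)^{-1}+O(\Delta_n^2)$ (Lemma~\ref{lem:estimation_lemma}) together with the coupling of $Z_k^{(n)}$ to the increments of $W^J$ for the expert terms, applies the It\^{o} isometry to the resulting martingale pieces, and closes with the discrete Gronwall lemma (Lemma~\ref{lem:discrete_gronwall}); your non-accumulating within-interval lead corresponds to the paper's final step extending the grid-point bound to arbitrary $t\in[t_k,t_{k+1})$. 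Two differences are worth recording. First, the handling of general $p$: the paper only ever estimates second moments and upgrades $p=2$ to all $p\in[1,\infty)$ using that the conditional means are jointly Gaussian (Rosi\'{n}ski--Suchanecki), whereas you propose direct BDG-type estimates for $p\ge 2$ and Jensen for $p<2$; your route is more work but does not rely on Gaussianity, and it is essentially the route the paper itself is forced to take for the Poisson-arrival analogue (Theorem~\ref{thm:muhat_C_lambda_goes_to_muhat_D}), where joint Gaussianity fails. Second, your sharper bookkeeping claim --- that the grid-point error accumulates only to $O(\Delta_n)$, making the $O(\Delta_n^{1/2})$ within-interval lead strictly dominant --- holds for the martingale (noise) comparison but is not delivered by the tools you list for the drift mismatches: the terms pairing the grid-point error $e_{t_k}$ with quantities like $\int_{t_k}^{t_{k+1}}(\mu_s-\mu_{t_k})\,\rmd s$ (the paper's mixed terms $(A^n)^\transp E^n$ and $(A^n)^\transp F^n$) are not martingale increments, and Cauchy--Schwarz plus Young feeds the grid recursion an additive $O(\Delta_n^2)$ per step, which yields only $O(\Delta_n^{1/2})$ in $\mathrm{L}^2$ at the grid points --- exactly what the paper's recursion gives; proving your stronger claim would require an additional conditional-independence argument to kill those cross terms. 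This is harmless for the theorem, since every mechanism in your decomposition is of order $\Delta_n^{1/2}$ or better, so the bound $K_{m,p}\Delta_n^{p/2}$ follows regardless of which term dominates.
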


The proof of Theorem~\ref{thm:muhat_C_n_goes_to_muhat_D} can also be found in Appendix~\ref{app:long_proofs_fixed_times}.
Theorems~\ref{thm:q_C_n_goes_to_q_D} and \ref{thm:muhat_C_n_goes_to_muhat_D} state that in the setting of Assumption~\ref{ass:deterministic_time_points} the filter of a $Z$-investor observing $n$ equidistant expert opinions on $[0,T]$ converges to the filter of the $J$-investor. Recalling that the $J$-investor observes the diffusion processes $R$ and $J$, this implies that the information obtained from observing the discrete-time expert opinions is for large $n$ arbitrarily close to the information that comes with observing the continuous-time diffusion-type expert $J$.
This diffusion approximation of the discrete expert opinions is useful since the associated filter equations for $\muhat{J}{}$ and $\gam{J}{}$ are much simpler than those for $\muhat{Z,n}{}$ and $\gam{Z,n}{}$ which contain updates at information dates. Computing $\gam{Z,n}{}$ on $[0,T]$ in the multivariate case requires the numerical solution of a Riccati differential equation on each subinterval $[t_k, t_{k+1})$. For high numbers $n$ of expert opinions this leads to very small time steps and high computing times. For computing the $J$-investor's filter one has to find the solution to only one Riccati differential equation on $[0,T]$ for which we can use more efficient numerical solvers.
We will see in Section~\ref{sec:application_utility_maximization} that the convergence results carry over to convergence of the value function in a portfolio optimization problem.

\begin{remark}
	Note that for the convergence of the conditional covariance matrices $\gam{Z,n}{}$ to $\gam{J}{}$ in Theorem~\ref{thm:q_C_n_goes_to_q_D} we do not need the assumption that $Z^{(n)}_k$ is given as in \eqref{eq:expert_opinions_for_fixed_information_dates}. This is because the conditional covariance matrices $\gam{Z,n}{t}$ do not depend on the actual form of the expert opinions, see Lemma~\ref{lem:filter_C_dynamics}. Hence, it would be sufficient to assume that the experts' covariance matrices are given by $\Gamma_k^{(n)}=\Gamma^{(n)}=\frac{1}{\Delta_n}\sigma_J\sigma_J^\transp$. The assumption on the form of $Z^{(n)}_k$ is only needed in Theorem~\ref{thm:muhat_C_n_goes_to_muhat_D} where the conditional mean $\muhat{Z,n}{t}$ is considered.
\end{remark}

\section{Diffusion Approximation of Filters for Random Information Dates}\label{sec:diffusion_approximation_of_filters_for_random_information_dates}

In this section we consider the situation where the experts' opinions do not arrive at deterministic time points but at random information dates $T_k$, where the waiting times $T_{k+1}-T_k$ between information dates are independent and exponentially distributed with rate $\lambda>0$. Recall that we have set $T_0=0$ for ease of notation. The information dates can therefore be seen as the jump times of a standard Poisson process with intensity $\lambda$.
In this situation, the total number of expert opinions arriving in $[0,T]$ is no longer deterministic. However, as the intensity $\lambda$ increases, expert opinions will arrive more and more frequently. So the question we address in this section is, in analogy to sending $n$ to infinity in the last section, what happens when $\lambda$ goes to infinity. We use a superscript $\lambda$ to emphasize the dependence on the intensity.
The expert opinions are of the form
\begin{equation}\label{eq:form_of_expert_opinions_random}
	Z_k^{(\lambda)} = \mu_{T_k^{(\lambda)}}+(\Gamma_k^{(\lambda)})^{\frac12}\varepsilon_k^{(\lambda)}.
\end{equation}

For constant variances $\Gamma_k^{(\lambda)}=\Gamma$, i.e.\ when there is some constant level of the expert's reliability which does not depend on the arrival intensity $\lambda$, one can derive a similar result for the convergence to full information as in the case of deterministic information dates. This result implies that for large $\lambda$ the $Z$-investor approximates the fully informed investor. More precisely, it holds
\[ \lim_{\lambda\to\infty} \E\bigl[\bigl\lVert \gam{Z,\lambda}{t} \bigr\rVert\bigr] = 0 \quad \text{and} \quad \lim_{\lambda\to\infty}  \E\bigl[\bigl\lVert \muhat{Z,\lambda}{t} - \mu_t \bigr\rVert^2\bigr] = 0 \]
for all $t\in (0,T]$, see Gabih et al.~\cite{gabih_kondakji_wunderlich_2018}.
In contrast to the above case we now again assume that, as the frequency of expert opinions increases, the variance of the expert opinions $Z_k^{(\lambda)}$ also increases. As in Section~\ref{sec:diffusion_approximation_of_filters_for_deterministic_information_dates} it will turn out that letting $\Gamma_k^{(\lambda)}$ grow linearly in $\lambda$ is the proper scaling for deriving diffusion limits.

\begin{assumption}\label{ass:random_time_points}
	Let $(N^{(\lambda)}_t)_{t\in[0,T]}$ be a standard Poisson process with intensity $\lambda>0$ that is independent of the Brownian motions in the model. Define the information dates $(T_k^{(\lambda)})_{k=1,\dots,N^{(\lambda)}_T}$ as the jump times of that process and set $T^{(\lambda)}_0=0$. Furthermore, let the experts' covariance matrices be given as $\Gamma_k^{(\lambda)}=\Gamma^{(\lambda)}=\lambda\sigma_J\sigma_J^\transp$ for $k=1,\dots,N^{(\lambda)}_T$. Further, we assume that in~\eqref{eq:form_of_expert_opinions_random} the $\calN(0,I_d)$-distributed random variables $\varepsilon_k^{(\lambda)}$ are linked with the Brownian motion $W^J$ from~\eqref{eq:continuous_expert_J} via
	\[ \varepsilon_k^{(\lambda)}=\sqrt{\lambda}\int_{\frac{k-1}{\lambda}}^{\frac{k}{\lambda}}\rmd W^J_s, \]
	so that
	\begin{equation}\label{eq:expert_opinions_for_random_information_dates}
		Z_k^{(\lambda)} = \mu_{T_k^{(\lambda)}}+\lambda\sigma_J\int_{\frac{k-1}{\lambda}}^{\frac{k}{\lambda}}\rmd W^J_s
	\end{equation}
	is the expert opinion at information date $T_k^{(\lambda)}$. Note that for defining the $Z_k^{(\lambda)}$, the Brownian motion $W^J$ has to be extended to a Brownian motion on $[0,\infty)$.
\end{assumption}

Given a realization of the drift process at the random information date $T_k^{(\lambda)}$, the only randomness in the expert opinion comes from the Brownian motion $W^J$ between the deterministic times $\frac{k-1}{\lambda}$ and $\frac{k}{\lambda}$. Recall that $W^J$ is the Brownian motion that drives the diffusion $J$ which we interpret as our continuous expert. Hence there is a direct connection between the discrete expert opinions $Z_k^{(\lambda)}$ and the continuous expert.

In the following, we will omit the superscript $\lambda$ at the time points $T_k^{(\lambda)}$ for better readability, keeping the dependence on the intensity in mind.

\begin{remark}
	At first glance, it seems more intuitive to construct the expert opinions as
	\[ \widetilde{Z}_k^{(\lambda)} = \mu_{T_k}+\sqrt{\lambda}\sigma_J\frac{1}{\sqrt{T_k-T_{k-1}}}\int_{T_{k-1}}^{T_k}\rmd W^J_s \]
	rather than in \eqref{eq:expert_opinions_for_random_information_dates}.
	However, we later want to prove convergence of $\muhat{Z,\lambda}{t}$ to $\muhat{J}{t}$, which requires to look at the difference of a weighted sum of $\frac{1}{\lambda}(Z_k^{(\lambda)}-\mu_{T_k})$ and $\int_0^t\gam{J}{s}\,\rmd W^J_s$. It turns out that when replacing $Z_k^{(\lambda)}$ with $\widetilde{Z}_k^{(\lambda)}$, this leads to an integral where the integrand is defined piecewisely as
	\[ \biggl(\frac{1}{\sqrt{\lambda(T_{k}-T_{k-1})}}-1\biggr)\gam{J}{s}. \]
	However, the term in brackets does not have a finite variance. This carries over to the weighted sum mentioned above. This is mainly due to the fact that for $X\sim\mathrm{Exp}(\lambda)$, the expectation of $\frac{1}{X}$ does not exist. When considering $Z_k^{(\lambda)}$ instead, the difference that appears has finite variance since the additional randomness from the information dates is missing.
	Intuitively, the problem with the $\widetilde{Z}_k^{(\lambda)}$ is that the expert opinions of this form put different weight on the paths of the Brownian motion $W^J$ in different intervals. This is in contrast to the continuous expert whose information comes from observing the diffusion $J$, driven by the Brownian motion $W^J$, continuously in time.
	Therefore, in terms of information about the Brownian motion $W^J$, the $Z_k^{(\lambda)}$ modelled as in \eqref{eq:expert_opinions_for_random_information_dates} are closer to the continuous expert than the $\widetilde{Z}_k^{(\lambda)}$.
\end{remark}

The aim of this section is to determine the behavior of the conditional covariance matrix $\gam{Z,\lambda}{}$ and of the conditional mean $\muhat{Z,\lambda}{}$ under Assumption~\ref{ass:random_time_points} when $\lambda$ goes to infinity, i.e.\ when expert opinions arrive more and more frequently, becoming at the same time less and less reliable. Here, it is useful to express the dynamics of $\gam{Z,\lambda}{}$ and $\muhat{Z,\lambda}{}$ in a way that comprises both the behavior between information dates and the jumps at times $T_k$. For this purpose, we work with a representation using a Poisson random measure as introduced in Cont and Tankov~\cite[Sec.~2.6]{cont_tankov_2004}.

\begin{definition}
	Let $(\Omega_0,\mathcal{A},\mathbb{Q})$ be a probability space and $\nu$ a measure on a measurable space $(E,\mathcal{E})$. A \emph{Poisson random measure} with intensity measure $\nu$ is a function $N\colon\Omega_0\times\mathcal{E}\to\N_0$ such that
	\begin{enumerate}
		\item For each $\omega\in\Omega_0$, $N(\omega,\cdot)$ is a measure on $(E,\mathcal{E})$.
		
		\item For every $B\in\mathcal{E}$, $N(\cdot,B)$ is a Poisson random variable with parameter $\nu(B)$.
		
		\item For disjoint $E_1,\dots,E_p\in\mathcal{E}$, the random variables $N(\cdot,E_1),\dots,N(\cdot,E_p)$ are independent.
	\end{enumerate}
	For a Poisson random measure $N$, the \emph{compensated measure} $\tilde{N}$ is defined by $\tilde{N}\colon\Omega_0\times\mathcal{E}\to\R$ with $\tilde{N}(\omega,B)=N(\omega,B)-\nu(B)$.
\end{definition}

The following proposition states the results we will need in the following. For a proof, see Cont and Tankov~\cite[Sec.~2.6.3]{cont_tankov_2004}.

\begin{proposition}\label{prop:properties_Poisson_random_measure}
	Let $E=[0,T]\times\R^d$. Let $(T_k)_{k\geq 1}$ be the jump times of a Poisson process with intensity $\lambda>0$ and let $U_k$, $k=1,2,\dots$, be a sequence of independent multivariate standard Gaussian random variables on $\R^d$. For any $I\in\mathcal{B}([0,T])$ and $B\in\mathcal{B}(\R^d)$ let
	\[ N(I\times B)=\sum_{k\colon T_k\in I} \mathbbm{1}_{\{U_k\in B\}} \]
	denote the number of jump times in $I$ where $U_k$ takes a value in $B$. Then $N$ defines a Poisson random measure and it holds:
	\begin{enumerate}[label=(\roman*)]
		\item The corresponding intensity measure $\nu$ satisfies
		\[ \nu([t_1,t_2]\times B)=\int_{[t_1,t_2]}\lambda\,\rmd t\int_B \varphi(u)\,\rmd u \]
		for $0\leq t_1\leq t_2\leq T$, where $\varphi$ is the multivariate standard normal density on $\R^d$.
		
		\item For Borel-measurable functions $g$ defined on $\R^d$ it holds
		\[ \sum_{k\colon T_k\in [0,t]} g(U_k)=\int_{[0,t]}\int_{\R^d}g(u)\,N(\rmd s,\rmd u). \]
	\end{enumerate}
\end{proposition}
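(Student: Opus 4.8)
The plan is to recognise $N$ as the counting measure of a \emph{marked Poisson process} and to establish the three defining properties of a Poisson random measure directly, with the marking (equivalently, thinning) property of Poisson processes as the main engine. The underlying picture is that the pairs $(T_k,U_k)$ are exactly the points of a Poisson process on $[0,T]$ with intensity $\lambda\,\rmd t$, each independently marked by a draw from the standard Gaussian law with density $\varphi$; the marking theorem then predicts that $\{(T_k,U_k)\}$ is itself Poisson on $E=[0,T]\times\R^d$ with intensity $\nu(\rmd s,\rmd u)=\lambda\,\rmd s\,\varphi(u)\,\rmd u$, which encapsulates both the distributional claims and part (i).

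First I would note that the natural extension of the given formula is the counting measure $A\mapsto\#\{k\colon(T_k,U_k)\in A,\ T_k\leq T\}$, which reduces to $\sum_{k\colon T_k\in I}\mathbbm{1}_{\{U_k\in B\}}$ on rectangles $A=I\times B$. Since the number of jumps in $[0,T]$ is almost surely finite (being Poisson distributed with parameter $\lambda T$), for each fixed $\omega$ this is a finite sum of unit point masses, hence a finite measure on $(E,\mathcal{E})$; this is the first defining property. To obtain the Poisson-distribution property together with the intensity formula, I would work first on a rectangle $A=I\times B$ with $I\in\mathcal{B}([0,T])$ and $B\in\mathcal{B}(\R^d)$, conditioning on $M_I:=\#\{k\colon T_k\in I\}$, which is Poisson with parameter $\lambda\,\mathrm{Leb}(I)$ by definition of the underlying process. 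Given $M_I=m$, the relevant marks are i.i.d.\ $\calN(0,I_d)$ and independent of the timing, so each $\mathbbm{1}_{\{U_k\in B\}}$ is an independent Bernoulli trial with success probability $p:=\int_B\varphi(u)\,\rmd u$. The standard Poisson thinning identity then gives that $N(I\times B)$ is Poisson distributed with parameter $\lambda\,\mathrm{Leb}(I)\,p=\nu(I\times B)$, which is precisely the intensity formula of part (i).

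The independence property follows, for disjoint rectangles, from combining the independent-increments property of the Poisson process over disjoint time components with the independence of the marks; the extension to arbitrary disjoint $E_1,\dots,E_p\in\mathcal{E}$ is then obtained by a monotone class (Dynkin $\pi$-$\lambda$) argument, since the rectangles generate $\mathcal{E}=\mathcal{B}([0,T])\otimes\mathcal{B}(\R^d)$ and form a $\pi$-system. Finally, part (ii) is an unwinding of the definition of integration against the point measure $N$: for $g=\mathbbm{1}_B$ one has $\int_{[0,t]\times\R^d}\mathbbm{1}_B(u)\,N(\rmd s,\rmd u)=N([0,t]\times B)=\sum_{k\colon T_k\leq t}\mathbbm{1}_{\{U_k\in B\}}$, and extending by linearity to simple functions and by monotone convergence to nonnegative measurable $g$ (then to integrable $g$ via its positive and negative parts) yields the stated identity.

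I expect the only genuinely technical point to be the justification of independence over arbitrary disjoint measurable sets, where one cannot argue coordinatewise and must pass from the rectangle case through a monotone class argument while carefully tracking the joint law of the timings and the marks. The thinning computation at the heart of the Poisson-distribution claim, though central to the argument, is routine, and part (ii) is purely definitional once $N$ has been identified as a point measure.
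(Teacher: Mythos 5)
The paper itself does not prove this proposition: it defers entirely to Cont and Tankov~\cite[Sec.~2.6.3]{cont_tankov_2004}, so any self-contained argument is by construction a different route. Your identification of $\{(T_k,U_k)\}_{k\geq 1}$ as a marked Poisson process is the right picture, and two of your three steps are complete and correct. The thinning computation (condition on $M_I$, note that the marks attached to the points in $I$ are i.i.d.\ $\calN(0,I_d)$ and independent of the timing because the random index set $\{k:T_k\in I\}$ is determined by $(T_k)_{k\geq 1}$ alone, then sum the binomial mixture against the Poisson law of $M_I$) gives both the Poisson distribution of $N(I\times B)$ and the intensity formula of part (i). Part (ii) is indeed definitional: since $N(\omega,\cdot)$ is almost surely a finite sum of unit point masses $\delta_{(T_k,U_k)}$, the identity holds $\omega$-wise for every measurable $g$, with the indicator--simple--monotone-convergence ladder only needed if one insists on arguing from the abstract integral.

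The genuine soft spot is exactly the step you flagged, and it is worse than you suggest: independence of $N(E_1),\dots,N(E_p)$ for arbitrary disjoint measurable sets does not follow from the rectangle case by a naive Dynkin $\pi$--$\lambda$ argument, because independence is not a property of a single set that can be varied over a $\lambda$-system while the others are held fixed. Concretely, the class $\bigl\{A\subseteq E\setminus B: N(A) \text{ independent of } N(B)\bigr\}$ is not closed under proper differences: knowing $N(A_1)$ and $N(A_2)$ are each independent of $N(B)$ does not give independence of $N(A_2)-N(A_1)$ from $N(B)$, which would require \emph{joint} independence of the pair $(N(A_1),N(A_2))$ from $N(B)$. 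The two standard repairs are: (a) compute the joint Laplace functional $\E\bigl[\exp\bigl(-\sum_{i}\theta_i N(E_i)\bigr)\bigr]$ and show it factorizes as $\prod_i \exp\bigl(-\nu(E_i)(1-\rme^{-\theta_i})\bigr)$, which is the route of Kingman's marking theorem; or (b) use the representation underlying the construction in the reference the paper cites: conditionally on the total number of points $N(E)=n$, the pairs $(T_k,U_k)$ are $n$ i.i.d.\ draws from $\nu/\nu(E)$, so for disjoint $E_1,\dots,E_p$ (and $E_0$ their complement in $E$) the vector $(N(E_0),N(E_1),\dots,N(E_p))$ is multinomial, and mixing over the Poisson number of points yields independent Poisson coordinates for arbitrary disjoint measurable sets in one stroke. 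Either repair completes your proof; as written, the extension step would not go through.
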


Now we can use the Poisson random measure for reformulating the dynamics of $\gam{Z,\lambda}{}$.

\begin{proposition}\label{prop:integral_equations_for_q_D_and_q_C_lambda}
	Let $L\colon\R^{d\times d}\to\R^{d\times d}$ denote the function with
	\[ L(\gam{}{})=-\alpha \gam{}{}-\gam{}{}\alpha+\beta\beta^\transp-\gam{}{}(\sigma_R\sigma_R^\transp)^{-1}\gam{}{}. \]
	Then under Assumption~\ref{ass:random_time_points} we can write
	\[ \gam{J}{t} = \Sigma_0 + \int_0^t \bigl(L(\gam{J}{s}) - \gam{J}{s}(\sigma_J\sigma_J^\transp)^{-1}\gam{J}{s}\bigr)\,\rmd s \]
	and
	\begin{equation*}
		\begin{aligned}
			\gam{Z,\lambda}{t} &= \Sigma_0+\int_0^t \bigl(L(\gam{Z,\lambda}{s})-\lambda\gam{Z,\lambda}{s-}(\gam{Z,\lambda}{s-}+\lambda\sigma_J\sigma_J^\transp)^{-1}\gam{Z,\lambda}{s-}\bigr)\,\rmd s \\
			&\quad-\int_0^t\int_{\R^d}\gam{Z,\lambda}{s-}(\gam{Z,\lambda}{s-}+\lambda\sigma_J\sigma_J^\transp)^{-1}\gam{Z,\lambda}{s-}\,\tilde{N}(\rmd s,\rmd u)
		\end{aligned}
	\end{equation*}
	for any $t\in[0,T]$.
\end{proposition}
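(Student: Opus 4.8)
The plan is to establish the two identities separately: the $J$-investor equation is a routine rewriting of the deterministic Riccati ODE, while the $Z$-investor equation requires encoding the jump structure of the filter through the Poisson random measure. For the $J$-investor I would start from the Riccati equation~\eqref{eq:Riccati_ode_D} of Lemma~\ref{lem:filter_D_dynamics} and split off the $(\sigma_J\sigma_J^\transp)^{-1}$ contribution,
\[ \ddt\gam{J}{t} = \bigl(-\alpha\gam{J}{t}-\gam{J}{t}\alpha+\beta\beta^\transp-\gam{J}{t}(\sigma_R\sigma_R^\transp)^{-1}\gam{J}{t}\bigr)-\gam{J}{t}(\sigma_J\sigma_J^\transp)^{-1}\gam{J}{t} = L(\gam{J}{t})-\gam{J}{t}(\sigma_J\sigma_J^\transp)^{-1}\gam{J}{t}, \]
so that integrating from $0$ to $t$ and using $\gam{J}{0}=\Sigma_0$ yields the first claimed identity.

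For the $Z$-investor under Assumption~\ref{ass:random_time_points} I would combine the between-jump dynamics and the updates from Lemma~\ref{lem:filter_C_dynamics}. Between information dates $\gam{Z,\lambda}{}$ solves $\ddt\gam{Z,\lambda}{t}=L(\gam{Z,\lambda}{t})$, and at each date $T_k$ it jumps by $\gam{Z,\lambda}{T_k}-\gam{Z,\lambda}{T_k-}=\bigl(\rho_k(\gam{Z,\lambda}{T_k-})-I_d\bigr)\gam{Z,\lambda}{T_k-}$. The first algebraic step is to simplify this jump: with $\Gamma_k^{(\lambda)}=\lambda\sigma_J\sigma_J^\transp$ one has $\rho_k(\gam{}{})-I_d=\lambda\sigma_J\sigma_J^\transp(\gam{}{}+\lambda\sigma_J\sigma_J^\transp)^{-1}-I_d=-\gam{}{}(\gam{}{}+\lambda\sigma_J\sigma_J^\transp)^{-1}$, so the jump equals $-\gam{Z,\lambda}{T_k-}(\gam{Z,\lambda}{T_k-}+\lambda\sigma_J\sigma_J^\transp)^{-1}\gam{Z,\lambda}{T_k-}$. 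Writing $f(\gam{}{}):=\gam{}{}(\gam{}{}+\lambda\sigma_J\sigma_J^\transp)^{-1}\gam{}{}$, the path of $\gam{Z,\lambda}{}$ then admits the decomposition
\[ \gam{Z,\lambda}{t} = \Sigma_0+\int_0^t L(\gam{Z,\lambda}{s})\,\rmd s-\sum_{k\colon T_k\le t} f(\gam{Z,\lambda}{T_k-}), \]
where, since $\gam{Z,\lambda}{s}$ and $\gam{Z,\lambda}{s-}$ differ only at the countably many jump times, the drift integrand may be read with or without the left limit.

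The central step is to rewrite the jump sum via the Poisson random measure $N$ of Proposition~\ref{prop:properties_Poisson_random_measure}. Because the covariance update $f(\gam{Z,\lambda}{T_k-})$ does not depend on the Gaussian marks $U_k$ — the covariance recursion is insensitive to the realization of the expert noise — I would regard the jump contribution as the integral of the predictable integrand $(s,u)\mapsto f(\gam{Z,\lambda}{s-})$, constant in $u$, against $N$, giving $\sum_{k\colon T_k\le t} f(\gam{Z,\lambda}{T_k-})=\int_0^t\int_{\R^d} f(\gam{Z,\lambda}{s-})\,N(\rmd s,\rmd u)$; this is well-defined since $\gam{Z,\lambda}{}$ is bounded by Lemma~\ref{lem:boundedness_of_covariances} and there are finitely many jumps on $[0,T]$ almost surely. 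Compensating, I split $N=\tilde N+\nu$ with $\nu(\rmd s,\rmd u)=\lambda\,\rmd s\,\varphi(u)\,\rmd u$; since $\int_{\R^d}\varphi(u)\,\rmd u=1$, the compensator term collapses to $\lambda\int_0^t f(\gam{Z,\lambda}{s-})\,\rmd s$. Substituting back produces both the $-\lambda\gam{Z,\lambda}{s-}(\gam{Z,\lambda}{s-}+\lambda\sigma_J\sigma_J^\transp)^{-1}\gam{Z,\lambda}{s-}$ drift correction and the stochastic integral against $\tilde N$, which is exactly the asserted formula.

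The two integrations are routine; the only genuine subtlety lies in the bookkeeping of the central step, namely confirming that the covariance jump is indeed independent of the marks $U_k$ so that a $u$-constant integrand is legitimate, and verifying the predictability and integrability needed for the compensated integral against $\tilde N$ to be meaningful. I expect both to be settled cleanly — the former by the explicit simplification of $\rho_k(\cdot)-I_d$, which shows the update depends only on $\gam{Z,\lambda}{T_k-}$, and the latter by the uniform bound on $\gam{Z,\lambda}{}$ from Lemma~\ref{lem:boundedness_of_covariances} together with the finiteness of the jump count on $[0,T]$.
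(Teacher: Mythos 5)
Your proposal is correct and follows essentially the same route as the paper's proof: both split the $J$-investor Riccati equation to isolate the $(\sigma_J\sigma_J^\transp)^{-1}$ term, both use the algebraic identity $\bigl(\rho^{(\lambda)}(\gam{}{})-I_d\bigr)\gam{}{}=-\gam{}{}(\gam{}{}+\lambda\sigma_J\sigma_J^\transp)^{-1}\gam{}{}$ for the jumps, both encode the jump sum as an integral against the Poisson random measure $N$ via Proposition~\ref{prop:properties_Poisson_random_measure}, and both split $N=\tilde{N}+\nu$ and collapse the compensator using that $\varphi$ is a probability density. The only difference is cosmetic ordering (you simplify the jump before compensating, the paper after), and your explicit remarks on the $u$-independence of the integrand and its integrability are sound points the paper passes over silently.
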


The proof of Proposition~\ref{prop:integral_equations_for_q_D_and_q_C_lambda} is given in Appendix~\ref{app:technical_lemmas}.

In the following, we give convergence results in analogy to those in Theorems~\ref{thm:q_C_n_goes_to_q_D} and~\ref{thm:muhat_C_n_goes_to_muhat_D} stating that the conditional covariance matrix and the conditional mean of the $Z$-investor converge to the conditional covariance matrix and conditional mean of the $J$-investor as $\lambda$ goes to infinity.
In the setting with deterministic equidistant information dates in Section~\ref{sec:diffusion_approximation_of_filters_for_deterministic_information_dates} the conditional covariance matrices were deterministic. For the conditional means we proved $\Lp$-convergence. Due to the joint Gaussian distribution of the conditional means it was enough to prove $\mathrm{L}^2$-convergence and use a result from Rosi\'{n}ski and Suchanecki~\cite{rosinski_suchanecki_1980} to generalize to $\Lp$-convergence. In the setting of this section with random information dates, the conditional covariance matrices of the $Z$-investor are random and the joint distribution of the conditional means is no longer Gaussian. Therefore, the generalization mentioned above does not apply here. Hence, we directly prove $\Lp$-convergence in the following.
The next theorem states $\Lp$-convergence of $\gam{Z,\lambda}{}$ to $\gam{J}{}$ on $[0,T]$ as $\lambda$ goes to infinity.

\begin{theorem}\label{thm:q_C_lambda_goes_to_q_D}
	Let $p\in[1,\infty)$. Under Assumption~\ref{ass:random_time_points} there exists a constant $\widetilde{K}_{Q, p}>0$ such that
	\[ \E\Bigl[\bigl\lVert\gam{Z,\lambda}{t}-\gam{J}{t}\bigr\rVert^p\Bigr] \leq \frac{\widetilde{K}_{Q,p}}{\lambda^{\overline{p}}} \]
	for all $t\in[0,T]$ and $\lambda\geq 1$, where $\overline{p}=\min\{\frac{p}{2},1\}$. In particular,
	\[ \lim_{\lambda\to\infty} \sup_{t\in[0,T]} \E\Bigl[\bigl\lVert\gam{Z,\lambda}{t}-\gam{J}{t}\bigr\rVert^p\Bigr] = 0. \]
\end{theorem}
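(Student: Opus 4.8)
The plan is to reduce the whole statement to the single case $p=2$ and then to prove one $\mathrm{L}^2$-estimate, namely $\E\bigl[\lVert\gam{Z,\lambda}{t}-\gam{J}{t}\rVert^2\bigr]\le C/\lambda$, by subtracting the two integral equations of Proposition~\ref{prop:integral_equations_for_q_D_and_q_C_lambda} and applying Gronwall's Lemma. Writing $D^\lambda_t:=\gam{Z,\lambda}{t}-\gam{J}{t}$, Lemma~\ref{lem:boundedness_of_covariances} gives $\lVert D^\lambda_t\rVert\le 2C_{\gam{}{}}$ almost surely, uniformly in $t$ and $\lambda$. Hence for $p\ge 2$ the crude interpolation $\lVert D^\lambda_t\rVert^p\le(2C_{\gam{}{}})^{p-2}\lVert D^\lambda_t\rVert^2$ turns an $\mathrm{L}^2$-bound of order $1/\lambda=\lambda^{-\overline{p}}$ into the claimed bound, while for $1\le p<2$ Jensen's inequality gives $\E[\lVert D^\lambda_t\rVert^p]\le(\E[\lVert D^\lambda_t\rVert^2])^{p/2}\le C\lambda^{-p/2}=C\lambda^{-\overline{p}}$. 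So the entire content of the theorem lies in the $\mathrm{L}^2$-estimate, and the factor $\lambda^{-\overline{p}}$ with $\overline{p}=\min\{p/2,1\}$ is exactly what this two-regime reduction produces.

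For that estimate I would subtract the two displayed equations in Proposition~\ref{prop:integral_equations_for_q_D_and_q_C_lambda}, representing $D^\lambda_t$ as a sum of three pieces: a drift difference $\int_0^t\bigl(L(\gam{Z,\lambda}{s})-L(\gam{J}{s})\bigr)\,\rmd s$, a bias integral $\int_0^t\bigl(G(\gam{J}{s})-G_\lambda(\gam{Z,\lambda}{s-})\bigr)\,\rmd s$ with $G(\gam{}{})=\gam{}{}(\sigma_J\sigma_J^\transp)^{-1}\gam{}{}$ and $G_\lambda(\gam{}{})=\lambda\gam{}{}(\gam{}{}+\lambda\sigma_J\sigma_J^\transp)^{-1}\gam{}{}$, and the compensated Poisson integral $M_t:=-\int_0^t\int_{\R^d}\gam{Z,\lambda}{s-}(\gam{Z,\lambda}{s-}+\lambda\sigma_J\sigma_J^\transp)^{-1}\gam{Z,\lambda}{s-}\,\tilde{N}(\rmd s,\rmd u)$. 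Since $L$ and $G$ are quadratic with arguments bounded by $C_{\gam{}{}}$ (Lemma~\ref{lem:boundedness_of_covariances}), both are Lipschitz on the relevant set, so $\lVert L(\gam{Z,\lambda}{s})-L(\gam{J}{s})\rVert$ and $\lVert G(\gam{J}{s})-G(\gam{Z,\lambda}{s-})\rVert$ are each bounded by a constant times $\lVert D^\lambda_s\rVert$ (using that $\gam{J}{}$ is continuous and that $\gam{Z,\lambda}{s-}=\gam{Z,\lambda}{s}$ Lebesgue-a.e.).

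The gain of order $1/\lambda$ comes from comparing $G_\lambda$ with $G$. Writing $S=\sigma_J\sigma_J^\transp$ and using $\lambda(\gam{}{}+\lambda S)^{-1}=(S+\tfrac1\lambda\gam{}{})^{-1}$ together with the resolvent identity $(S+\tfrac1\lambda\gam{}{})^{-1}-S^{-1}=-(S+\tfrac1\lambda\gam{}{})^{-1}\tfrac1\lambda\gam{}{}\,S^{-1}$ and the Loewner bound $(S+\tfrac1\lambda\gam{}{})^{-1}\preceq S^{-1}$, one gets $\lVert G_\lambda(\gam{}{})-G(\gam{}{})\rVert\le\lVert S^{-1}\rVert^2\,\lVert\gam{}{}\rVert^3/\lambda\le C/\lambda$ uniformly for $\lVert\gam{}{}\rVert\le C_{\gam{}{}}$. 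The same bound shows the integrand of $M_t$ has spectral norm at most $C/\lambda$, so the Itô isometry for compensated Poisson integrals yields $\E[\lVert M_t\rVert^2]=\E\bigl[\int_0^t\lVert\gam{Z,\lambda}{s-}(\gam{Z,\lambda}{s-}+\lambda S)^{-1}\gam{Z,\lambda}{s-}\rVert^2\,\lambda\,\rmd s\bigr]\le C\lambda\,T\,(1/\lambda)^2=CT/\lambda$, up to a dimensional constant from passing between spectral and Frobenius norms. Collecting the three bounds, using $(a+b+c)^2\le 3(a^2+b^2+c^2)$ and the Cauchy--Schwarz estimate $(\int_0^t\lVert D^\lambda_s\rVert\,\rmd s)^2\le T\int_0^t\lVert D^\lambda_s\rVert^2\,\rmd s$, gives $\E[\lVert D^\lambda_t\rVert^2]\le C_1\int_0^t\E[\lVert D^\lambda_s\rVert^2]\,\rmd s+C_2/\lambda$ with constants independent of $\lambda\ge 1$; the ordinary Gronwall Lemma then delivers $\E[\lVert D^\lambda_t\rVert^2]\le(C_2/\lambda)\,\rme^{C_1T}$.

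I expect the main obstacle to be the resolvent estimate $\lVert G_\lambda(\gam{}{})-G(\gam{}{})\rVert\le C/\lambda$: it is precisely the statement that the Bayesian update correction of the high-frequency, low-accuracy discrete experts converges to the continuous filter gain $\gam{J}{}(\sigma_J\sigma_J^\transp)^{-1}\gam{J}{}$ at rate $1/\lambda$, and it is what simultaneously controls the bias integral and, through the integrand bound, the martingale part $M_t$. Verifying the Itô isometry for the matrix-valued compensated Poisson integral (keeping track of the switch between spectral and Frobenius norms, and of the harmless $s-$ in the drift integral) is routine but must be carried out carefully, since the predictability and square-integrability of the integrand against the compensator $\nu$ are what make the isometry applicable.
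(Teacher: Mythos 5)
Your proposal is correct and follows essentially the same route as the paper's proof: subtract the two integral equations of Proposition~\ref{prop:integral_equations_for_q_D_and_q_C_lambda}, bound the difference between the discrete update term $\lambda\gam{}{}(\gam{}{}+\lambda\sigma_J\sigma_J^\transp)^{-1}\gam{}{}$ and $\gam{}{}(\sigma_J\sigma_J^\transp)^{-1}\gam{}{}$ by $C/\lambda$ (the paper does your resolvent computation directly as an algebraic identity), bound the compensated-Poisson integrand by $C/\lambda$ and apply the isometry of Lemma~\ref{lem:variance_of_integral_with_respect_to_compensated_measure}, conclude with Gronwall's Lemma, and extend to general $p$ exactly as you do (boundedness of the covariances for $p>2$, Lyapunov/Jensen for $p<2$). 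The only difference is that the paper works with $\E\bigl[\sup_{t\leq r}\lVert\gam{Z,\lambda}{t}-\gam{J}{t}\rVert^2\bigr]$ and invokes Doob's inequality for the martingale part, which yields a slightly stronger uniform-in-$t$ statement than your pointwise bound, but your version suffices for the theorem as stated.
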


The proof of Theorem~\ref{thm:q_C_lambda_goes_to_q_D} is given in Appendix~\ref{app:long_proofs_random_times}. It is based on applying Gronwall's Lemma in integral form which we recall in Lemma~\ref{lem:gronwall}.
We also prove $\Lp$-convergence of the conditional means.

\begin{theorem}\label{thm:muhat_C_lambda_goes_to_muhat_D}
	Let $p\in[1,\infty)$. Under Assumption~\ref{ass:random_time_points} there exists a constant $\widetilde{K}_{m,p}>0$ such that
	\[ \E\Bigl[\bigl\lVert \muhat{Z,\lambda}{t}-\muhat{J}{t}\bigr\rVert^p\Bigr]\leq \frac{\widetilde{K}_{m,p}}{\lambda^{\frac{\overline{p}}{2}}} \]
	for all $t\in[0,T]$ and $\lambda\geq 1$, where $\overline{p}=\min\{\frac{p}{2},1\}$. In particular,
	\[ \lim_{\lambda\to\infty} \sup_{t\in[0,T]}\E\Bigl[\bigl\lVert \muhat{Z,\lambda}{t}-\muhat{J}{t}\bigr\rVert^p\Bigr] = 0. \]
\end{theorem}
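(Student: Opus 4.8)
The plan is to reduce everything to a single linear integral inequality for $D_t:=\muhat{Z,\lambda}{t}-\muhat{J}{t}$ and to close it with Gronwall's Lemma (Lemma~\ref{lem:gronwall}). Using Lemmas~\ref{lem:filter_D_dynamics} and~\ref{lem:filter_C_dynamics} together with $\rmd R_t-\muhat{H}{t}\,\rmd t=(\mu_t-\muhat{H}{t})\,\rmd t+\sigma_R\,\rmd W^R_t$ and the analogue for $J$, I would subtract the two representations and collect all terms that are linear in $D_s$ with bounded coefficients (those from $\alpha$, from $\gam{J}{s}(\sigma_R\sigma_R^\transp)^{-1}$, and from the $J$-expert term $\gam{J}{s}(\sigma_J\sigma_J^\transp)^{-1}$, all bounded by Lemma~\ref{lem:boundedness_of_covariances}) into a term $\int_0^t A(s)D_s\,\rmd s$; everything else is a forcing term $\Phi_t$. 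Then $\E[\lVert D_t\rVert^p]\le C\,\E[\lVert\Phi_t\rVert^p]+C\int_0^t\E[\lVert D_s\rVert^p]\,\rmd s$ and Gronwall reduce the theorem to the estimate $\sup_{t\in[0,T]}\E[\lVert\Phi_t\rVert^p]\le C\lambda^{-1/2}$ for $p\ge 2$; the range $1\le p<2$ then follows from the case $p=2$ by Jensen's inequality, which reproduces precisely the exponent $\overline p/2=p/4$.

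To control $\Phi_t$ I would write the jumps of $\muhat{Z,\lambda}{}$ through the Poisson random measure as in Propositions~\ref{prop:properties_Poisson_random_measure} and~\ref{prop:integral_equations_for_q_D_and_q_C_lambda}. The update at $T_k$ is $\gam{Z,\lambda}{T_k-}(\gam{Z,\lambda}{T_k-}+\lambda\sigma_J\sigma_J^\transp)^{-1}(Z_k-\muhat{Z,\lambda}{T_k-})$, and I would split $Z_k-\muhat{Z,\lambda}{T_k-}$ into the drift-type part $\mu_{T_k}-\muhat{Z,\lambda}{T_k-}$ and the noise part $\lambda\sigma_J\int_{(k-1)/\lambda}^{k/\lambda}\rmd W^J_s$, compensating each separately. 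The drift-type part does not depend on the Gaussian mark, so its compensator is $\int_0^t\gam{Z,\lambda}{s-}(\tfrac1\lambda\gam{Z,\lambda}{s-}+\sigma_J\sigma_J^\transp)^{-1}(\mu_s-\muhat{Z,\lambda}{s-})\,\rmd s$; matching this against $\int_0^t\gam{J}{s}(\sigma_J\sigma_J^\transp)^{-1}(\mu_s-\muhat{J}{s})\,\rmd s$ produces one more $D_s$-linear piece for $A(s)$ and a remainder that I would bound by Theorem~\ref{thm:q_C_lambda_goes_to_q_D}, the resolvent estimate $\lVert(\tfrac1\lambda\gam{Z,\lambda}{s-}+\sigma_J\sigma_J^\transp)^{-1}-(\sigma_J\sigma_J^\transp)^{-1}\rVert\le C/\lambda$, and the uniform moment bounds for $\mu_s-\muhat{Z,\lambda}{s}$ (which hold since $\muhat{Z,\lambda}{s}$ is a conditional expectation of the Gaussian $\mu_s$). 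The associated compensated martingale has jumps of order $\lambda^{-1}$ occurring at rate $\lambda$, hence is $O(\lambda^{-1/2})$ by the Kunita/Burkholder--Davis--Gundy inequality, and the return-driven term $\int_0^t(\gam{Z,\lambda}{s}-\gam{J}{s})(\sigma_R\sigma_R^\transp)^{-1}\sigma_R\,\rmd W^R_s$ is $O(\lambda^{-1/2})$ by the same inequality and Theorem~\ref{thm:q_C_lambda_goes_to_q_D}.

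The noise part has vanishing compensator, so it survives as the martingale $M^Z_t:=\sum_{k\colon T_k\le t}\gam{Z,\lambda}{T_k-}(\tfrac1\lambda\gam{Z,\lambda}{T_k-}+\sigma_J\sigma_J^\transp)^{-1}\sigma_J\int_{(k-1)/\lambda}^{k/\lambda}\rmd W^J_s$, and comparing it with $M^J_t:=\int_0^t\gam{J}{s}(\sigma_J\sigma_J^\transp)^{-1}\sigma_J\,\rmd W^J_s$ is the main obstacle. The decisive structural fact is that, by Lemma~\ref{lem:filter_C_dynamics}, $\gam{Z,\lambda}{}$ depends only on the information dates and not on the observed expert values; thus all coefficients in $M^Z_t$, as well as $N^{(\lambda)}_t$ and the indicators $\{T_k\le t\}$, are measurable with respect to the Poisson configuration $\mathcal{T}=\sigma((T_k)_k)$, which is independent of $W^J$. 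Conditioning on $\mathcal{T}$ therefore makes $M^Z_t$ a sum of independent, centred Gaussian vectors with deterministic coefficients, so its moments factor over $k$ using $\E[\int_{(k-1)/\lambda}^{k/\lambda}\rmd W^J_s\,(\int_{(k-1)/\lambda}^{k/\lambda}\rmd W^J_s)^\transp]=\tfrac1\lambda I_l$; this circumvents the adaptedness difficulties created by the unusual coupling in Assumption~\ref{ass:random_time_points}. I would then interpolate, first replacing the coefficient by $\gam{J}{T_k}(\sigma_J\sigma_J^\transp)^{-1}\sigma_J$ (via Theorem~\ref{thm:q_C_lambda_goes_to_q_D} and the resolvent bound) and then replacing $\gam{J}{T_k}$ by $\gam{J}{k/\lambda}$, using that the deterministic ODE solution $\gam{J}{}$ is Lipschitz and $\E[(T_k-k/\lambda)^2]=k\lambda^{-2}$; both replacement errors are negligible at the rate $\lambda^{-1/2}$.

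After these replacements the surviving martingale equals $\int_0^{N^{(\lambda)}_t/\lambda}f^{(\lambda)}(s)\,\rmd W^J_s$, where $f(s):=\gam{J}{s}(\sigma_J\sigma_J^\transp)^{-1}\sigma_J$ and $f^{(\lambda)}$ is its piecewise-constant sampling on the grid of mesh $\lambda^{-1}$. Writing $\int_0^{N^{(\lambda)}_t/\lambda}f^{(\lambda)}\,\rmd W^J-\int_0^t f\,\rmd W^J=\int_0^t(f^{(\lambda)}-f)\,\rmd W^J_s+\int_t^{N^{(\lambda)}_t/\lambda}f^{(\lambda)}\,\rmd W^J_s$, the first integral is $O(\lambda^{-1})$ by the Itô isometry and the mesh bound $\lVert f^{(\lambda)}-f\rVert\le C\lambda^{-1}$, while the second is the dominant contribution: conditioning on the $W^J$-independent endpoint $N^{(\lambda)}_t/\lambda$, its $p$-th moment is controlled by $\E[\lvert N^{(\lambda)}_t/\lambda-t\rvert^{p/2}]\lesssim\lambda^{-p/4}\le\lambda^{-1/2}$ for $p\ge 2$ and $\lambda\ge 1$. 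This mismatch between the running time $t$ and the time $N^{(\lambda)}_t/\lambda$ up to which the $W^J$-increments have accumulated is exactly what sets the $\lambda^{-1/2}$ rate. Collecting all pieces gives $\sup_{t\in[0,T]}\E[\lVert\Phi_t\rVert^p]\le C\lambda^{-1/2}$ for $p\ge 2$, and Gronwall's Lemma then yields the asserted bound.
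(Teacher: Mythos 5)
Your proposal is correct and follows essentially the same route as the paper's proof: both reduce the claim via Gronwall's Lemma (Lemma~\ref{lem:gronwall}) to a $\lambda^{-1/2}$ bound on the non-Gronwall ``forcing'' terms, use Theorem~\ref{thm:q_C_lambda_goes_to_q_D} and the resolvent estimate of Lemma~\ref{lem:estimation_lemma} for the covariance-dependent coefficients, exploit that $\gam{Z,\lambda}{}$ and the Poisson configuration are independent of $W^J$ so that conditional Gaussian computations are legitimate, identify the time mismatch between $N^{(\lambda)}_t/\lambda$ and $t$ (controlled by Erlang/Poisson moment bounds) as the dominant error setting the rate, and obtain the range $1\le p<2$ from $p=2$ by Jensen's inequality, reproducing the exponent $\overline{p}/2$. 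The only substantive difference is bookkeeping: you compensate the jump sums through the Poisson random measure and then interpolate the integrands, whereas the paper compares sums to integrals directly by splitting at $n_t=\min\{N^{(\lambda)}_t,\lambda t\}$ (its terms $D^{1,\lambda}_t$, $D^{2,\lambda}_t$, $D^{3,\lambda}_t$ and the analogous treatment of $E^\lambda_t$); both devices yield the same estimates.
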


The proof of Theorem~\ref{thm:muhat_C_lambda_goes_to_muhat_D} can be found in Appendix~\ref{app:long_proofs_random_times}.

Theorems~\ref{thm:q_C_lambda_goes_to_q_D} and \ref{thm:muhat_C_lambda_goes_to_muhat_D} show that under Assumption~\ref{ass:random_time_points}, the filter of the $Z$-investor converges to the filter of the $J$-investor. These are the analogous results to those in Section~\ref{sec:diffusion_approximation_of_filters_for_deterministic_information_dates} where we have assumed deterministic and equidistant information dates. Here, we see that the convergence result also holds for non-deterministic information dates $T_k$ being defined as the jump times of a standard Poisson process, i.e.\ where the time between information dates is exponentially distributed with a parameter $\lambda>0$. When sending $\lambda$ to infinity, the frequency of expert opinions goes to infinity.

Again, as for the case with deterministic information dates, the assumption that $Z_k^{(\lambda)}$ is given as in \eqref{eq:expert_opinions_for_random_information_dates} is only needed for the proof of Theorem~\ref{thm:muhat_C_lambda_goes_to_muhat_D}. For the proof of Theorem~\ref{thm:q_C_lambda_goes_to_q_D} it is sufficient to assume that the experts' covariance matrices are of the form $\Gamma_k^{(\lambda)}=\Gamma^{(\lambda)}=\lambda\sigma_J\sigma_J^\transp$.

\begin{remark}
	Note that when comparing the convergence results from Theorems~\ref{thm:q_C_n_goes_to_q_D} and \ref{thm:q_C_lambda_goes_to_q_D} for the conditional covariance matrices in the case $p=2$, there is a difference in the speed of convergence that we have shown. For deterministic equidistant information dates, the speed of convergence of $\lVert \gam{Z,n}{t}-\gam{J}{t}\rVert^2$ to zero is of the order $\frac{1}{n^2}$. For random information dates, however, we only get a speed of $\frac{1}{\lambda}$ for the convergence of
	\[ \E\Bigl[\bigl\lVert\gam{Z,\lambda}{t}-\gam{J}{t}\bigr\rVert^2\Bigr] \]
	to zero.
	This can be explained by the additional randomness coming from the Poisson process that determines the information dates $T_k$ in this situation.
\end{remark}

The above theorems provide a useful diffusion approximation since the filter of the $J$-investor is easier to compute than the filter of the $Z$-investor for which there are updates at each information date. Further, the conditional covariance $\gam{J}{}$ is deterministic and can be computed offline in advance while $\gam{Z,\lambda}{}$ is a stochastic process that has to be updated when a new expert opinion arrives. For high-frequency expert opinions one may simplify the computation of $\muhat{Z,\lambda}{}$ by replacing the exact conditional covariance $\gam{Z,\lambda}{}$ by its diffusion approximation $\gam{J}{}$.
Given the discrete-time expert's covariance matrix $\Gamma$ and the arrival intensity $\lambda$ the volatility $\sigma_J$ is chosen such that $\sigma_J\sigma_J^\transp=\lambda^{-1}\Gamma$.

Even more important are the benefits from the simpler filter equations if we consider utility maximization  problems for financial markets with partial information and discrete-time expert opinions. See the next section for an application to logarithmic utility and Remark~\ref{rem:power_utility} as well as Kondakji~\cite[Ch.~7,8]{kondakji_2019} for the more involved power utility case where closed-form expressions for the optimal strategies are available for the $J$-investor but not for the $Z$-investor.

\section{Application to Utility Maximization}\label{sec:application_utility_maximization}

As an application of the convergence results from the last two sections we now consider a portfolio optimization problem in our financial market. For the sake of convenience, we assume here that the interest rate $r$ of the risk-free asset is equal to zero. However, the results below can easily be extended to a market model with $r\neq 0$.

An investor's trading in the market can be described by a self-financing trading strategy $(\pi_t)_{t\in[0,T]}$ with values in $\R^d$. Here, $\pi_t^i$, $i=1,\dots,d$, is the proportion of wealth that is invested in asset $i$ at time $t$. The corresponding wealth process $(X^{\pi}_t)_{t\in[0,T]}$ is then governed by the stochastic differential equation
\[ \rmd X^\pi_t = X^\pi_t \pi_t^\transp\bigl(\mu_t\,\rmd t + \sigma_R \,\rmd W^R_t\bigr) \]
with initial capital $X^\pi_0=x_0>0$. An investor's trading strategy has to be adapted to her investor filtration. To ensure strictly positive wealth, we also impose some integrability constraint on the trading strategies. Then we denote by
\[ \mathcal{A}^H(x_0) = \biggl\{\pi=(\pi_t)_{t\in[0,T]} \;\bigg|\; \pi \text{ is } \mathbb{F}^H\text{-adapted}, \; X^\pi_0=x_0, \; \E\biggl[\int_0^T \lVert\sigma^\transp\pi_t\rVert^2\,\rmd t\biggr]<\infty\biggr\} \]
the class of admissible trading strategies for the $H$-investor. The optimization problem we address is a utility maximization problem where investors want to maximize expected logarithmic utility of terminal wealth. Hence,
\begin{equation}\label{eq:optimization_problem}
	V^H(x_0) = \sup\Bigl\{\E\bigl[\log(X^\pi_T)\bigr] \;\Big|\; \pi\in\mathcal{A}^H(x_0)\Bigr\}
\end{equation}
is the value function of our optimization problem. This utility maximization problem under partial information has been solved in Brendle~\cite{brendle_2006} for the case of power utility. Karatzas and Zhao~\cite{karatzas_zhao_2001} address also the case with logarithmic utility. In Sass et al.~\cite{sass_westphal_wunderlich_2017}, the optimization problem has been solved for an $H$-investor with logarithmic utility in the context of the different information regimes addressed in this paper. We recall the result in the proposition below.

\begin{proposition}\label{prop:representation_of_value_function}
	The optimal strategy for the optimization problem \eqref{eq:optimization_problem} is $(\pi^{H,*}_t)_{t\in[0,T]}$ with $\pi^{H,*}_t=(\sigma_R\sigma_R^\transp)^{-1}\muhat{H}{t}$, and the optimal value is
	\begin{equation*}
		\begin{aligned}
			V^H(x_0) &= \log(x_0)+\frac{1}{2}\int_0^T \tr\bigl((\sigma_R\sigma_R^\transp)^{-1}\E[\muhat{H}{t}(\muhat{H}{t})^\transp]\bigr)\,\rmd t \\
			&= \log(x_0)+\frac{1}{2}\int_0^T \tr\bigl((\sigma_R\sigma_R^\transp)^{-1}\bigl(\Sigma_t+m_tm_t^\transp-\E[\gam{H}{t}]\bigr)\bigr)\,\rmd t.
		\end{aligned}
	\end{equation*}
\end{proposition}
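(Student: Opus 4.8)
The plan is to reduce the dynamic optimization to a static, pointwise maximization by exploiting the special structure of logarithmic utility. First I would apply Itô's formula to $\log(X^\pi_t)$. Writing $\frac{\rmd X^\pi_t}{X^\pi_t}=\pi_t^\transp\mu_t\,\rmd t+\pi_t^\transp\sigma_R\,\rmd W^R_t$ and computing the quadratic variation gives
\[ \log(X^\pi_T)=\log(x_0)+\int_0^T\Bigl(\pi_t^\transp\mu_t-\tfrac12\pi_t^\transp\sigma_R\sigma_R^\transp\pi_t\Bigr)\,\rmd t+\int_0^T\pi_t^\transp\sigma_R\,\rmd W^R_t. \]
Taking expectations, the integrability constraint in $\mathcal{A}^H(x_0)$ ensures that the stochastic integral is a true martingale, so its expectation vanishes. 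Since $\pi_t$ is $\mathbb{F}^H$-adapted, the tower property lets me replace $\mu_t$ by its projection: $\E[\pi_t^\transp\mu_t]=\E[\pi_t^\transp\E[\mu_t\,|\,\mathcal{F}^H_t]]=\E[\pi_t^\transp\muhat{H}{t}]$, while the quadratic term is already $\mathcal{F}^H_t$-measurable. Hence
\[ \E[\log(X^\pi_T)]=\log(x_0)+\E\Bigl[\int_0^T\Bigl(\pi_t^\transp\muhat{H}{t}-\tfrac12\pi_t^\transp\sigma_R\sigma_R^\transp\pi_t\Bigr)\,\rmd t\Bigr]. \]

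Next I would maximize the integrand pointwise in $(t,\omega)$. For fixed $\muhat{H}{t}$ the map $\pi\mapsto\pi^\transp\muhat{H}{t}-\tfrac12\pi^\transp\sigma_R\sigma_R^\transp\pi$ is strictly concave because $\sigma_R\sigma_R^\transp$ is positive definite (as $\sigma_R$ has full rank), and its unique maximizer solves $\sigma_R\sigma_R^\transp\pi=\muhat{H}{t}$, i.e.\ $\pi^{H,*}_t=(\sigma_R\sigma_R^\transp)^{-1}\muhat{H}{t}$. This candidate is $\mathbb{F}^H$-adapted since $\muhat{H}{t}$ is, and it is admissible once the integrability condition is verified (see below). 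Substituting $\pi^{H,*}_t$ into the integrand collapses it to $\tfrac12(\muhat{H}{t})^\transp(\sigma_R\sigma_R^\transp)^{-1}\muhat{H}{t}$, which is the pointwise upper bound and is attained, whence
\[ V^H(x_0)=\log(x_0)+\tfrac12\E\Bigl[\int_0^T(\muhat{H}{t})^\transp(\sigma_R\sigma_R^\transp)^{-1}\muhat{H}{t}\,\rmd t\Bigr]. \]

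To reach the stated first representation I would rewrite the scalar as a trace, $(\muhat{H}{t})^\transp(\sigma_R\sigma_R^\transp)^{-1}\muhat{H}{t}=\tr\bigl((\sigma_R\sigma_R^\transp)^{-1}\muhat{H}{t}(\muhat{H}{t})^\transp\bigr)$, and interchange expectation and time integral by Fubini. For the second representation I would invoke the law of total variance: conditioning on $\mathcal{F}^H_t$ gives $\Sigma_t=\cov(\mu_t)=\E[\gam{H}{t}]+\cov(\muhat{H}{t})$, and since $\E[\muhat{H}{t}]=m_t$ we have $\cov(\muhat{H}{t})=\E[\muhat{H}{t}(\muhat{H}{t})^\transp]-m_tm_t^\transp$. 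Solving for $\E[\muhat{H}{t}(\muhat{H}{t})^\transp]=\Sigma_t+m_tm_t^\transp-\E[\gam{H}{t}]$ and substituting yields the second formula.

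The main obstacle is the analytic bookkeeping rather than the optimization itself: I must confirm that the stochastic integral $\int_0^T\pi_t^\transp\sigma_R\,\rmd W^R_t$ is a genuine martingale, so that taking expectations annihilates it, and that the candidate $\pi^{H,*}$ satisfies $\E[\int_0^T\lVert\sigma_R^\transp\pi^{H,*}_t\rVert^2\,\rmd t]<\infty$. Both rest on uniform bounds for $\E[\lVert\muhat{H}{t}\rVert^2]$ on $[0,T]$, which are available from the identity $\E[\muhat{H}{t}(\muhat{H}{t})^\transp]=\Sigma_t+m_tm_t^\transp-\E[\gam{H}{t}]$ together with the boundedness of $\Sigma_t$, $m_t$ and of the conditional covariances (Lemma~\ref{lem:boundedness_of_covariances}).
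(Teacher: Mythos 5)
Your proof is correct, but it is considerably more self-contained than the paper's own argument. The paper does not re-derive the optimal strategy or the first representation of $V^H(x_0)$ at all: it cites Proposition~5.1 and Theorem~5.1 of Sass et al.~\cite{sass_westphal_wunderlich_2017} for these, and its proof consists solely of the algebraic step linking the two representations, namely $\gam{H}{t}=\E[\mu_t\mu_t^\transp\,|\,\mathcal{F}^H_t]-\muhat{H}{t}(\muhat{H}{t})^\transp$ followed by taking expectations to obtain $\E[\muhat{H}{t}(\muhat{H}{t})^\transp]=\Sigma_t+m_tm_t^\transp-\E[\gam{H}{t}]$ --- which is exactly your law-of-total-variance step in different notation. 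What you add is the verification argument that the paper outsources: It\^{o}'s formula for $\log X^\pi_T$, annihilation of the stochastic integral using the square-integrability built into $\mathcal{A}^H(x_0)$, the projection $\E[\pi_t^\transp\mu_t]=\E[\pi_t^\transp\muhat{H}{t}]$ via the tower property (valid since $\pi_t$ is $\mathcal{F}^H_t$-measurable), pointwise maximization of the strictly concave quadratic $\pi\mapsto\pi^\transp\muhat{H}{t}-\tfrac12\pi^\transp\sigma_R\sigma_R^\transp\pi$, and the admissibility check for $\pi^{H,*}$ through the uniform bound on $\E[\lVert\muhat{H}{t}\rVert^2]$ coming from the same second-moment identity. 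Your route buys a complete, citation-free proof that makes explicit where adaptedness, the martingale property and admissibility enter; the paper's version buys brevity by delegating precisely the dynamic-optimization part you work out, keeping only the filtering identity that connects the two displayed formulas.
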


\begin{proof}
	The form of the optimal strategy and the first representation of the value function are already given in Proposition~5.1, respectively Theorem~5.1 of Sass et al.~\cite{sass_westphal_wunderlich_2017}. For the second representation of the value function, note that
	\begin{equation*}
		\begin{aligned}
			\gam{H}{t}&=\E[(\mu_t-\muhat{H}{t})(\mu_t-\muhat{H}{t})^\transp\,|\, \mathcal{F}^H_t] \\
			&=\E[\mu_t\mu_t^\transp-\muhat{H}{t}\mu_t^\transp-\mu_t(\muhat{H}{t})^\transp+\muhat{H}{t}(\muhat{H}{t})^\transp\,|\, \mathcal{F}^H_t] \\
			&=\E[\mu_t\mu_t^\transp\,|\, \mathcal{F}^H_t]-\muhat{H}{t}(\muhat{H}{t})^\transp.
		\end{aligned}
	\end{equation*}
	Therefore, by taking expectation on both sides,
	\[ \E[m^H_t(m^H_t)^\transp]=\E[\mu_t\mu_t^\transp]-\E[\gam{H}{t}]=\Sigma_t+m_tm_t^\transp-\E[\gam{H}{t}], \]
	which we can plug into the first representation.
\end{proof}

Due to the representation of the optimal strategy via the conditional means it follows directly from Theorems~\ref{thm:muhat_C_n_goes_to_muhat_D} and~\ref{thm:muhat_C_lambda_goes_to_muhat_D} that the optimal strategy of the $Z$-investor converges in the $\Lp$-sense to the optimal strategy of the $J$-investor as $n$, respectively $\lambda$, goes to infinity.

Further, note that the value function of the $H$-investor is an integral functional of the expectation of $(\gam{H}{t})_{t\in[0,T]}$. The convergence results of Theorems~\ref{thm:q_C_n_goes_to_q_D} and \ref{thm:q_C_lambda_goes_to_q_D} therefore carry over to convergence results for the respective value functions.
First, we address the situation with deterministic information dates $t_k$ from Section~\ref{sec:diffusion_approximation_of_filters_for_deterministic_information_dates} where we have shown uniform convergence of $\gam{Z,n}{}$ to $\gam{J}{}$.

\begin{corollary}\label{cor:convergence_value_functions_fixed_time_points}
	Under Assumption~\ref{ass:deterministic_time_points} it holds
	\[ \bigl\lvert V^{Z,n}(x_0)-V^J(x_0)\bigr\rvert \leq K_V\Delta_n \]
	for any initial wealth $x_0>0$, where $K_V=\frac{1}{2}K_QT\tr((\sigma_R\sigma_R^\transp)^{-1})$ with $K_Q$ from Theorem~\ref{thm:q_C_n_goes_to_q_D}. In particular, $\lim_{n\to\infty} V^{Z,n}(x_0) = V^J(x_0)$.
\end{corollary}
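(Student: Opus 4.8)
The plan is to start from the second representation of the value function in Proposition~\ref{prop:representation_of_value_function} and simply subtract the two value functions, since the $Z,n$-investor and the $J$-investor share the same underlying drift process $\mu$ and hence the same deterministic quantities $\Sigma_t$, $m_t$, and $\sigma_R$. The difference $V^{Z,n}(x_0)-V^J(x_0)$ therefore collapses: the $\log(x_0)$ terms cancel, the $\Sigma_t+m_tm_t^\transp$ terms cancel, and we are left with
\[ V^{Z,n}(x_0)-V^J(x_0) = \frac{1}{2}\int_0^T \tr\bigl((\sigma_R\sigma_R^\transp)^{-1}\bigl(\E[\gam{J}{t}]-\E[\gam{Z,n}{t}]\bigr)\bigr)\,\rmd t. \]
Under Assumption~\ref{ass:deterministic_time_points} the matrices $\gam{J}{t}$ and $\gam{Z,n}{t}$ are both deterministic (see the remarks after Lemmas~\ref{lem:filter_D_dynamics} and~\ref{lem:filter_C_dynamics}), so the expectations are superfluous and the integrand reduces to $\tr((\sigma_R\sigma_R^\transp)^{-1}(\gam{J}{t}-\gam{Z,n}{t}))$.

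Next I would bound the absolute value of this difference. Taking absolute values inside the integral and using the triangle inequality for integrals gives
\[ \bigl\lvert V^{Z,n}(x_0)-V^J(x_0)\bigr\rvert \leq \frac{1}{2}\int_0^T \bigl\lvert \tr\bigl((\sigma_R\sigma_R^\transp)^{-1}(\gam{J}{t}-\gam{Z,n}{t})\bigr)\bigr\rvert\,\rmd t. \]
The key estimate is a bound on $\lvert\tr(AB)\rvert$ in terms of $\lVert A\rVert$ and $\lVert B\rVert$. I would use the standard inequality $\lvert\tr(AB)\rvert \leq \tr(A)\,\lVert B\rVert$ valid when $A$ is symmetric positive semidefinite (which $(\sigma_R\sigma_R^\transp)^{-1}$ is), or alternatively the cruder $\lvert\tr(AB)\rvert \leq d\,\lVert A\rVert\,\lVert B\rVert$; the cleanest route is to write $\lvert\tr((\sigma_R\sigma_R^\transp)^{-1}(\gam{J}{t}-\gam{Z,n}{t}))\rvert \leq \tr((\sigma_R\sigma_R^\transp)^{-1})\,\lVert\gam{J}{t}-\gam{Z,n}{t}\rVert$. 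Inserting the uniform bound $\lVert\gam{Z,n}{t}-\gam{J}{t}\rVert \leq K_Q\Delta_n$ from Theorem~\ref{thm:q_C_n_goes_to_q_D}, which holds for every $t\in[0,T]$, makes the integrand independent of $t$.

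Carrying out the integration over $[0,T]$ then yields the factor $T$, and collecting constants gives exactly
\[ \bigl\lvert V^{Z,n}(x_0)-V^J(x_0)\bigr\rvert \leq \frac{1}{2}\,\tr((\sigma_R\sigma_R^\transp)^{-1})\,K_Q\,\Delta_n\,T = K_V\Delta_n \]
with $K_V=\frac{1}{2}K_QT\tr((\sigma_R\sigma_R^\transp)^{-1})$ as claimed. The final convergence statement $\lim_{n\to\infty}V^{Z,n}(x_0)=V^J(x_0)$ follows immediately since $\Delta_n=T/n\to 0$. There is no serious obstacle here: the only point requiring a little care is the trace–norm inequality and the observation that the expectations drop out because both conditional covariances are deterministic under Assumption~\ref{ass:deterministic_time_points}. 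The entire argument is essentially an application of the already-established Theorem~\ref{thm:q_C_n_goes_to_q_D} combined with the explicit value-function formula, so I expect the proof to be short.
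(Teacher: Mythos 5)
Your proposal is correct and follows essentially the same route as the paper's own proof: both start from the representation in Proposition~\ref{prop:representation_of_value_function}, note the cancellation and the determinism of the covariance matrices, apply the trace--norm inequality $\lvert\tr(AB)\rvert\leq\tr(A)\lVert B\rVert$ (which the paper attributes to Lemma~1 of Wang et al.), and conclude with the uniform bound from Theorem~\ref{thm:q_C_n_goes_to_q_D}. No gaps; the argument is complete as written.
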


\begin{proof}
	From Proposition~\ref{prop:representation_of_value_function} we deduce
	\begin{equation}\label{eq:difference_of_value_functions}
		\begin{aligned}
			\bigl\lvert V^{Z,n}(x_0)-V^J(x_0)\bigr\rvert &= \biggl\lvert\frac{1}{2}\int_0^T \tr\bigl((\sigma_R\sigma_R^\transp)^{-1}(\gam{J}{t}-\gam{Z,n}{t})\bigr)\,\rmd t\biggr\rvert \\
			&\leq \frac{1}{2}\int_0^T \bigl\lvert\tr\bigl((\sigma_R\sigma_R^\transp)^{-1}(\gam{J}{t}-\gam{Z,n}{t})\bigr)\bigr\rvert\,\rmd t,
		\end{aligned}
	\end{equation}
	noting that $\gam{Z,n}{t}$ and $\gam{J}{t}$ are deterministic for every $t\in[0,T]$.
	Since $(\sigma_R\sigma_R^\transp)^{-1}$ is symmetric and positive definite, and $\gam{J}{t}-\gam{Z,n}{t}$ is symmetric, it follows from Lemma~1 in Wang et al.~\cite{wang_kuo_hsu_1986} that
	\[ \bigl\lvert\tr\bigl((\sigma_R\sigma_R^\transp)^{-1}(\gam{J}{t}-\gam{Z,n}{t})\bigr)\bigr\rvert \leq \tr\bigl((\sigma_R\sigma_R^\transp)^{-1}\bigr) \bigl\lVert \gam{J}{t}-\gam{Z,n}{t} \bigr\rVert. \]
	Inserting this into \eqref{eq:difference_of_value_functions} we then get from Theorem~\ref{thm:q_C_n_goes_to_q_D} that
	\[ \bigl\lvert V^{Z,n}(x_0)-V^J(x_0)\bigr\rvert \leq \frac{1}{2}T\tr\bigl((\sigma_R\sigma_R^\transp)^{-1}\bigr)K_Q\Delta_n \]
	which proves the claim when setting $K_V=\frac{1}{2}K_QT\tr((\sigma_R\sigma_R^\transp)^{-1})$.
\end{proof}

The analogous result also holds in the setting of Section~\ref{sec:diffusion_approximation_of_filters_for_random_information_dates} where information dates $T_k$ are the jump times of a Poisson process. Recall that in Theorem~\ref{thm:q_C_lambda_goes_to_q_D} we have shown convergence of $\gam{Z,\lambda}{}$ to $\gam{J}{}$.

\begin{corollary}\label{cor:convergence_value_functions_random_time_points}
	Under Assumption~\ref{ass:random_time_points} it holds
	\[ \bigl\lvert V^{Z,\lambda}(x_0)-V^J(x_0)\bigr\rvert \leq \frac{\widetilde{K}_V}{\sqrt{\lambda}} \]
	for any initial wealth $x_0>0$ and all $\lambda\geq 1$, where $\widetilde{K}_V=\frac{1}{2}\widetilde{K}_{Q,1}T\tr((\sigma_R\sigma_R^\transp)^{-1})$ with $\widetilde{K}_{Q,1}$ from Theorem~\ref{thm:q_C_lambda_goes_to_q_D}. In particular, $\lim_{\lambda\to\infty} V^{Z,\lambda}(x_0) = V^J(x_0)$.
\end{corollary}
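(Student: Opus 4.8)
The plan is to mirror the proof of Corollary~\ref{cor:convergence_value_functions_fixed_time_points} exactly, replacing the deterministic bound on the conditional covariance matrices by an expectation bound and invoking Theorem~\ref{thm:q_C_lambda_goes_to_q_D} in place of Theorem~\ref{thm:q_C_n_goes_to_q_D}. First I would start from the second representation of the value function in Proposition~\ref{prop:representation_of_value_function}. Since both $V^{Z,\lambda}(x_0)$ and $V^J(x_0)$ share the common term $\log(x_0)+\tfrac12\int_0^T\tr((\sigma_R\sigma_R^\transp)^{-1}(\Sigma_t+m_tm_t^\transp))\,\rmd t$, this cancels in the difference and leaves
\[ \bigl\lvert V^{Z,\lambda}(x_0)-V^J(x_0)\bigr\rvert = \biggl\lvert\frac{1}{2}\int_0^T \tr\bigl((\sigma_R\sigma_R^\transp)^{-1}(\E[\gam{J}{t}]-\E[\gam{Z,\lambda}{t}])\bigr)\,\rmd t\biggr\rvert. \]
The key difference from the deterministic case is that $\gam{Z,\lambda}{t}$ is now a random matrix, so the expectation $\E[\,\cdot\,]$ must be retained; note that $\gam{J}{t}$ remains deterministic so $\E[\gam{J}{t}]=\gam{J}{t}$.

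Next I would pull the absolute value inside the time integral and use linearity of trace and expectation to write the integrand as $\bigl\lvert\E[\tr((\sigma_R\sigma_R^\transp)^{-1}(\gam{J}{t}-\gam{Z,\lambda}{t}))]\bigr\rvert$, then bound this by $\E\bigl[\bigl\lvert\tr((\sigma_R\sigma_R^\transp)^{-1}(\gam{J}{t}-\gam{Z,\lambda}{t}))\bigr\rvert\bigr]$ via Jensen's inequality (or simply the triangle inequality for expectations). As in the previous corollary, since $(\sigma_R\sigma_R^\transp)^{-1}$ is symmetric positive definite and $\gam{J}{t}-\gam{Z,\lambda}{t}$ is symmetric, Lemma~1 in Wang et al.~\cite{wang_kuo_hsu_1986} gives the pointwise (in $\omega$) bound
\[ \bigl\lvert\tr\bigl((\sigma_R\sigma_R^\transp)^{-1}(\gam{J}{t}-\gam{Z,\lambda}{t})\bigr)\bigr\rvert \leq \tr\bigl((\sigma_R\sigma_R^\transp)^{-1}\bigr)\bigl\lVert\gam{J}{t}-\gam{Z,\lambda}{t}\bigr\rVert. \]
Taking expectations and applying Theorem~\ref{thm:q_C_lambda_goes_to_q_D} with $p=1$ (so that $\overline{p}=\min\{\tfrac12,1\}=\tfrac12$) yields $\E[\lVert\gam{J}{t}-\gam{Z,\lambda}{t}\rVert]\leq\widetilde{K}_{Q,1}/\sqrt{\lambda}$ uniformly in $t\in[0,T]$.

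Combining these estimates and integrating the constant bound over $[0,T]$ produces
\[ \bigl\lvert V^{Z,\lambda}(x_0)-V^J(x_0)\bigr\rvert \leq \frac{1}{2}T\tr\bigl((\sigma_R\sigma_R^\transp)^{-1}\bigr)\frac{\widetilde{K}_{Q,1}}{\sqrt{\lambda}}, \]
which is the claim with $\widetilde{K}_V=\tfrac12\widetilde{K}_{Q,1}T\tr((\sigma_R\sigma_R^\transp)^{-1})$. I do not anticipate a genuine obstacle here, since this is a routine transfer of the deterministic argument; the only point requiring mild care is the bookkeeping around expectations, namely moving $\E$ inside the trace and correctly applying the $L^1$ case of Theorem~\ref{thm:q_C_lambda_goes_to_q_D} rather than the deterministic spectral-norm bound used before. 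The convergence statement $\lim_{\lambda\to\infty}V^{Z,\lambda}(x_0)=V^J(x_0)$ then follows immediately by letting $\lambda\to\infty$.
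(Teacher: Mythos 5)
Your proposal is correct and follows essentially the same route as the paper's proof: both start from the representation in Proposition~\ref{prop:representation_of_value_function}, bound the trace term via Lemma~1 of Wang et al.~\cite{wang_kuo_hsu_1986} applied pathwise, and then invoke Theorem~\ref{thm:q_C_lambda_goes_to_q_D} with $p=1$ (so $\overline{p}=\tfrac12$) to obtain the $\widetilde{K}_{Q,1}/\sqrt{\lambda}$ bound, yielding the same constant $\widetilde{K}_V$. The extra bookkeeping you flag (keeping the expectation since $\gam{Z,\lambda}{t}$ is random, while $\gam{J}{t}$ stays deterministic) is exactly the only point where the paper's argument differs from the deterministic case.
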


\begin{proof}
	As in the proof of Corollary~\ref{cor:convergence_value_functions_fixed_time_points} we first use Proposition~\ref{prop:representation_of_value_function} to obtain
	\begin{equation*}
		\begin{aligned}
			\bigl\lvert V^{Z,\lambda}(x_0)-V^J(x_0)\bigr\rvert \leq \frac{1}{2}\int_0^T \E\Bigl[\bigl\lvert\tr\bigl((\sigma_R\sigma_R^\transp)^{-1}(\gam{J}{t}-\gam{Z,\lambda}{t})\bigr)\bigl\rvert\Bigr]\,\rmd t.
		\end{aligned}
	\end{equation*}
	By the same reasoning as in the proof of Corollary~\ref{cor:convergence_value_functions_fixed_time_points} and by applying Theorem~\ref{thm:q_C_lambda_goes_to_q_D} we get
	\begin{equation*}
		\begin{aligned}
			\bigl\lvert V^{Z,\lambda}(x_0)-V^J(x_0)\bigr\rvert &\leq \frac{1}{2}\int_0^T \E\Bigl[\tr\bigl((\sigma_R\sigma_R^\transp)^{-1}\bigr) \bigl\lVert \gam{J}{t}-\gam{Z,\lambda}{t} \bigr\rVert\Bigr]\,\rmd t \\
			&\leq \frac{1}{2} T\tr\bigl((\sigma_R\sigma_R^\transp)^{-1}\bigr)\frac{\widetilde{K}_{Q,1}}{\sqrt{\lambda}},
		\end{aligned}
	\end{equation*}
	for all $\lambda\geq 1$.
\end{proof}

Corollary~\ref{cor:convergence_value_functions_fixed_time_points} and Corollary~\ref{cor:convergence_value_functions_random_time_points} show that both under Assumption~\ref{ass:deterministic_time_points} and Assumption~\ref{ass:random_time_points}, the value function of the $Z$-investor converges to the value function of the $J$-investor when the frequency of information dates goes to infinity.

The following proposition shows that not only does the value function of the $Z$-investor converge to the value function of the $J$-investor, also the absolute difference of the utility attained by $\pi^{Z,*}$, respectively $\pi^{J,*}$, goes to zero when increasing the number or the frequency of discrete-time expert opinions. This implies that the utility of the $Z$-investor observing the discrete-time expert opinions also pathwise becomes arbitrarily close to the utility of the $J$-investor when the number of discrete-time expert opinions becomes large. For this result, we need the strong $\mathrm{L}^2$-convergence of the conditional expectations, convergence in distribution would not be enough here.

\begin{proposition}
	Under Assumption~\ref{ass:deterministic_time_points} it holds
	\[ \lim_{n\to\infty} \E\Bigl[\bigl|\log(X^{\pi^{Z,n,*}}_T)-\log(X^{\pi^{J,*}}_T)\bigr|\Bigr]=0, \]
	under Assumption~\ref{ass:random_time_points} it holds
	\[ \lim_{\lambda\to\infty} \E\Bigl[\bigl|\log(X^{\pi^{Z,\lambda,*}}_T)-\log(X^{\pi^{J,*}}_T)\bigr|\Bigr]=0. \]
\end{proposition}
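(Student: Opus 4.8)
The plan is to reduce the claim to the already-established $\mathrm{L}^2$-convergence of the conditional means. Writing the optimal wealth under strategy $\pi$ via the explicit solution of the wealth SDE, one has
\[ \log(X^{\pi}_T) = \log(x_0) + \int_0^T \pi_t^\transp\bigl(\mu_t\,\rmd t + \sigma_R\,\rmd W^R_t\bigr) - \frac{1}{2}\int_0^T \lVert\sigma_R^\transp\pi_t\rVert^2\,\rmd t. \]
Plugging in the optimal strategies $\pi^{H,*}_t=(\sigma_R\sigma_R^\transp)^{-1}\muhat{H}{t}$ for $H\in\{Z,n\}$ (or $\{Z,\lambda\}$) and $H=J$, the difference $\log(X^{\pi^{Z,n,*}}_T)-\log(X^{\pi^{J,*}}_T)$ becomes an expression that is linear, respectively quadratic, in the conditional means $\muhat{Z,n}{t}$ and $\muhat{J}{t}$. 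So first I would subtract the two representations and collect the terms into a drift-type integral $\int_0^T (\pi^{Z,n,*}_t-\pi^{J,*}_t)^\transp\mu_t\,\rmd t$, a stochastic integral $\int_0^T (\pi^{Z,n,*}_t-\pi^{J,*}_t)^\transp\sigma_R\,\rmd W^R_t$, and a quadratic correction $\frac{1}{2}\int_0^T\bigl(\lVert\sigma_R^\transp\pi^{J,*}_t\rVert^2-\lVert\sigma_R^\transp\pi^{Z,n,*}_t\rVert^2\bigr)\,\rmd t$.

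Next I would bound the $\mathrm{L}^1$-norm of each of the three pieces separately, using throughout that $\pi^{Z,n,*}_t-\pi^{J,*}_t=(\sigma_R\sigma_R^\transp)^{-1}(\muhat{Z,n}{t}-\muhat{J}{t})$, so each piece is controlled by the quantity $\E[\lVert\muhat{Z,n}{t}-\muhat{J}{t}\rVert^2]$ that Theorem~\ref{thm:muhat_C_n_goes_to_muhat_D} (respectively Theorem~\ref{thm:muhat_C_lambda_goes_to_muhat_D}) drives to zero uniformly in $t$. For the drift integral, Cauchy--Schwarz in $(\omega,t)$ and the fact that $\mu$ has uniformly bounded second moment on $[0,T]$ (being a Gaussian Ornstein--Uhlenbeck process) give a bound of the form $C\sqrt{\sup_t\E[\lVert\muhat{Z,n}{t}-\muhat{J}{t}\rVert^2]}$. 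For the stochastic integral I would pass to its $\mathrm{L}^2$-norm via the Itô isometry, which turns $\E\bigl[\bigl(\int_0^T(\pi^{Z,n,*}_t-\pi^{J,*}_t)^\transp\sigma_R\,\rmd W^R_t\bigr)^2\bigr]$ into $\int_0^T\E[\lVert\sigma_R^\transp(\pi^{Z,n,*}_t-\pi^{J,*}_t)\rVert^2]\,\rmd t$, again bounded by $\sup_t\E[\lVert\muhat{Z,n}{t}-\muhat{J}{t}\rVert^2]$ up to a constant, and then use $\lVert\cdot\rVert_{\mathrm{L}^1}\leq\lVert\cdot\rVert_{\mathrm{L}^2}$.

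The quadratic term is the one requiring a little care, since it is not linear in the means. I would factor the difference of squares as
\[ \lVert\sigma_R^\transp\pi^{J,*}_t\rVert^2-\lVert\sigma_R^\transp\pi^{Z,n,*}_t\rVert^2 = \bigl(\sigma_R^\transp(\pi^{J,*}_t+\pi^{Z,n,*}_t)\bigr)^\transp\sigma_R^\transp(\pi^{J,*}_t-\pi^{Z,n,*}_t), \]
apply Cauchy--Schwarz in $L^2(\Omega)$, and control the sum factor $\E[\lVert\sigma_R^\transp(\pi^{J,*}_t+\pi^{Z,n,*}_t)\rVert^2]$ by a uniform constant. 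That uniform bound follows from the $\mathrm{L}^2$-boundedness of the conditional means, which I would obtain either from the tower property ($\E[\lVert\muhat{H}{t}\rVert^2]\leq\E[\lVert\mu_t\rVert^2]$, since $\muhat{H}{t}$ is a conditional expectation of $\mu_t$) together with $\E[\lVert\mu_t\rVert^2]=\tr(\Sigma_t)+\lVert m_t\rVert^2$ being bounded on $[0,T]$. Combining all three bounds yields $\E[|\log(X^{\pi^{Z,n,*}}_T)-\log(X^{\pi^{J,*}}_T)|]\leq C\sqrt{\sup_{t\in[0,T]}\E[\lVert\muhat{Z,n}{t}-\muhat{J}{t}\rVert^2]}$, and the right-hand side tends to zero by Theorem~\ref{thm:muhat_C_n_goes_to_muhat_D}; the random-dates case is identical, invoking Theorem~\ref{thm:muhat_C_lambda_goes_to_muhat_D} instead. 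The main obstacle I anticipate is handling the quadratic term cleanly — ensuring the uniform $\mathrm{L}^2$-bound on the sum of the two strategies is genuinely available — but the conditional-expectation contraction makes this routine.
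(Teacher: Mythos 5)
Your proposal is correct and follows essentially the same route as the paper: expand the log-wealth difference via the explicit wealth dynamics and the formula $\pi^{H,*}_t=(\sigma_R\sigma_R^\transp)^{-1}\muhat{H}{t}$, then bound the drift, quadratic and stochastic-integral pieces by Cauchy--Schwarz and the It\^{o} isometry, reducing everything to the $\mathrm{L}^2$-convergence of Theorems~\ref{thm:muhat_C_n_goes_to_muhat_D} and~\ref{thm:muhat_C_lambda_goes_to_muhat_D}. The only (immaterial) difference is bookkeeping: the paper groups the linear and quadratic terms into $\frac{1}{2}(\muhat{Z,n}{t}-\muhat{J}{t})^\transp(\sigma_R\sigma_R^\transp)^{-1}(2\mu_t-\muhat{Z,n}{t}-\muhat{J}{t})$ and bounds the factors $\mu_t-\muhat{Z,n}{t}$ and $\mu_t-\muhat{J}{t}$ via boundedness of the conditional covariances (Lemma~\ref{lem:boundedness_of_covariances}), whereas you keep the $\mu_t$-drift and the difference of squares separate and use conditional Jensen plus boundedness of $\E[\lVert\mu_t\rVert^2]$ --- both yield the same uniform second-moment control.
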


\begin{proof}
	Consider the setting of Assumption~\ref{ass:deterministic_time_points}. Note that
	\begin{equation*}
		\begin{aligned}
			&\log(X^{\pi^{Z,n,*}}_T)-\log(X^{\pi^{J,*}}_T) \\
			&= \int_0^T \Bigl((\pi^{Z,n,*}_t-\pi^{J,*}_t)^\transp\mu_t-\frac{1}{2}\bigl(\lVert\sigma_R^\transp\pi^{Z,n,*}_t\rVert^2-\lVert\sigma_R^\transp\pi^{J,*}_t\rVert^2\bigr)\Bigr)\rmd t + \int_0^T (\pi^{Z,n,*}_t-\pi^{J,*}_t)^\transp\sigma_R\,\rmd W^R_t \\
			&= \int_0^T \Bigl((\muhat{Z,n}{t}-\muhat{J}{t})^\transp(\sigma_R\sigma_R^\transp)^{-1}\mu_t -\frac{1}{2}\bigl((\muhat{Z,n}{t})^\transp(\sigma_R\sigma_R^\transp)^{-1}\muhat{Z,n}{t}-(\muhat{J}{t})^\transp(\sigma_R\sigma_R^\transp)^{-1}\muhat{J}{t}\bigr)\Bigr)\rmd t \\
			&\quad+ \int_0^T (\muhat{Z,n}{t}-\muhat{J}{t})^\transp(\sigma_R\sigma_R^\transp)^{-1}\sigma_R\,\rmd W^R_t \\
			&= \frac{1}{2}\int_0^T (\muhat{Z,n}{t}-\muhat{J}{t})^\transp(\sigma_R\sigma_R^\transp)^{-1}(2\mu_t-\muhat{Z,n}{t}-\muhat{J}{t})\,\rmd t + \int_0^T (\muhat{Z,n}{t}-\muhat{J}{t})^\transp(\sigma_R\sigma_R^\transp)^{-1}\sigma_R\,\rmd W^R_t,
		\end{aligned}
	\end{equation*}
	where we have used the representation of the optimal strategies from Proposition~\ref{prop:representation_of_value_function}.
	When applying the absolute value and the expectation we obtain
	\begin{equation}\label{eq:difference_of_objectives}
		\begin{aligned}
			\E\Bigl[\bigl|\log(X^{\pi^{Z,n,*}}_T)-\log(X^{\pi^{J,*}}_T)\bigr|\Bigr] &\leq \frac{1}{2}\E\biggl[\biggl|\int_0^T (\muhat{Z,n}{t}-\muhat{J}{t})^\transp(\sigma_R\sigma_R^\transp)^{-1}(\mu_t-\muhat{Z,n}{t})\,\rmd t\biggr|\biggr]\\
			&\quad+\frac{1}{2}\E\biggl[\biggl|\int_0^T (\muhat{Z,n}{t}-\muhat{J}{t})^\transp(\sigma_R\sigma_R^\transp)^{-1}(\mu_t-\muhat{J}{t})\,\rmd t\biggr|\biggr] \\
			&\quad+ \E\biggl[\biggl|\int_0^T (\muhat{Z,n}{t}-\muhat{J}{t})^\transp(\sigma_R\sigma_R^\transp)^{-1}\sigma_R\,\rmd W^R_t\biggr|\biggr].
		\end{aligned}
	\end{equation}
	For the first summand in~\eqref{eq:difference_of_objectives} we have, due to the Cauchy--Schwarz inequality,
	\begin{equation*}
		\begin{aligned}
			&\E\biggl[\biggl|\int_0^T (\muhat{Z,n}{t}-\muhat{J}{t})^\transp(\sigma_R\sigma_R^\transp)^{-1}(\mu_t-\muhat{Z,n}{t})\,\rmd t\biggr|\biggr] \\
			&\leq \E\biggl[\int_0^T \bigl|(\muhat{Z,n}{t}-\muhat{J}{t})^\transp(\sigma_R\sigma_R^\transp)^{-1}(\mu_t-\muhat{Z,n}{t})\bigr|\,\rmd t\biggr] \\
			&\leq \lVert(\sigma_R\sigma_R^\transp)^{-1}\rVert \E\biggl[\int_0^T \Vert\muhat{Z,n}{t}-\muhat{J}{t}\rVert\,\lVert\mu_t-\muhat{Z,n}{t}\rVert\,\rmd t\biggr] \\
			&\leq \lVert(\sigma_R\sigma_R^\transp)^{-1}\rVert \E\biggl[\int_0^T \Vert\muhat{Z,n}{t}-\muhat{J}{t}\rVert^2\,\rmd t\biggr]^{1/2}\E\biggl[\int_0^T \Vert\mu-\muhat{Z,n}{t}\rVert^2\,\rmd t\biggr]^{1/2}.
		\end{aligned}
	\end{equation*}
	The right-hand side of this expression goes to zero when $n$ goes to infinity by Theorem~\ref{thm:muhat_C_n_goes_to_muhat_D} and by boundedness of $\gam{Z,n}{}$, see Lemma~\ref{lem:boundedness_of_covariances}.
	The second summand in~\eqref{eq:difference_of_objectives} goes to zero by an analogous argumentation.
	For the third summand in~\eqref{eq:difference_of_objectives}, note that
	\begin{equation*}
		\begin{aligned}
			\E\biggl[\biggl|\int_0^T (\muhat{Z,n}{t}-\muhat{J}{t})^\transp(\sigma_R\sigma_R^\transp)^{-1}\sigma_R\,\rmd W^R_t\biggr|\biggr]
			&\leq \E\biggl[\biggl(\int_0^T (\muhat{Z,n}{t}-\muhat{J}{t})^\transp(\sigma_R\sigma_R^\transp)^{-1}\sigma_R\,\rmd W^R_t\biggr)^2\biggr]^{1/2} \\
			&= \E\biggl[\int_0^T \lVert\sigma_R^\transp(\sigma_R\sigma_R^\transp)^{-1}(\muhat{Z,n}{t}-\muhat{J}{t})\rVert^2\,\rmd t\biggr]^{1/2} \\
			&\leq \lVert\sigma_R^\transp(\sigma_R\sigma_R^\transp)^{-1}\rVert\E\biggl[\int_0^T \lVert\muhat{Z,n}{t}-\muhat{J}{t}\rVert^2\,\rmd t\biggr]^{1/2}.
		\end{aligned}
	\end{equation*}
	In the second step we have used the It\^{o} isometry. Again, the right-hand side of the above inequality goes to zero as $n$ goes to infinity by Theorem~\ref{thm:muhat_C_n_goes_to_muhat_D}.
	The proof for the convergence under Assumption~\ref{ass:random_time_points} is completely analogous.
\end{proof}

Note that the convergence of the value functions can also be deduced directly from the previous proposition. However, the proofs that we give in Corollaries~\ref{cor:convergence_value_functions_fixed_time_points}, respectively~\ref{cor:convergence_value_functions_random_time_points} using the convergence of the conditional covariance matrices are more direct and thus yield a sharper bound for the order of convergence than what we would get from the previous proposition.

\begin{remark}\label{rem:power_utility}
	For simplicity, we have restricted ourselves in this section to the case with logarithmic utility, where $\mathrm{L}^2$-convergence of the conditional covariance matrices and the conditional means is sufficient for proving convergence of the value functions and optimal strategies.
	Portfolio problems that consider maximization of expected power utility instead of logarithmic utility are typically much more demanding.
	We have seen that for logarithmic utility the value function is given in terms of an integral functional of the expected conditional variance of the filter. The resulting optimal portfolio strategy is myopic and depends on the current drift estimate only.
	
	For power utility, the value functions can be expressed as the expectation of the exponential of a quite involved integral functional of the conditional mean. Hence it depends on the complete filter distribution and not only on its second-order moments.
	Further, the optimal strategies are no longer myopic and do not depend only on the current drift estimate but contain correction terms depending on the distribution of the future drift estimates.
	Therefore, for power utility, one needs the $\Lp$-convergence for $p>2$ for proving convergence of the value functions, $\mathrm{L}^2$-convergence would not be enough.
	
	For the portfolio problem of the $Z$-investor in the power utility case closed-form expressions as above for the optimal strategies in terms of the filter are no longer available. One can apply the dynamic programming approach to the associated stochastic optimal control problem. For the $Z$-investor this leads to dynamic programming equations (DPEs) for the value function in form of a partial integro-differential equation (PIDE), see Kondakji~\cite[Ch.~7]{kondakji_2019}. Solutions of those DPEs can usually only be determined numerically.
	The optimal strategy can be given in terms of that value function and the filter processes $\muhat{Z,\lambda}{}$ and $\gam{Z,\lambda}{}$. Meanwhile, for the $J$-investor the above approach leads to DPEs which can be solved explicitly such that the value function can be given in terms of solutions to some Riccati equations. Again, the optimal strategies can be computed in terms of the value function and the filter processes $\muhat{J}{}$ and $\gam{J}{}$.
	
	Diffusion approximations for the filter and the value function thus allow us to find approximate solutions for the $Z$-investor which can be given in closed form and with less numerical effort. This is extremely helpful for financial markets with multiple assets since the numerical solution of the resulting problem suffers from the curse of dimensionality and becomes intractable. While for a model with a single asset the PIDE has two spatial variables, for two assets there are already five and for three assets nine variables. For details we refer to our forthcoming papers on that topic.
\end{remark}

\section{Numerical Example}\label{sec:numerical_example}

In this section we illustrate our convergence results from the previous sections by a numerical example. We consider a financial market with investment horizon one year. For simplicity, we assume that there is only one risky asset in the market, i.e.\ $d=1$. Let the parameters of our model be defined as in Table~\ref{tab:model_parameters_for_numerical_example}.

\begin{table}[ht]   
	\centering
	\begin{tabular}{llll}
		\hline
		investment horizon			& $T$		&$=$	& 1 \tstrut\\
		interest rate				& $r$		&$=$	& 0 \\
		mean reversion speed of drift process	& $\alpha$	&$=$	& 3 \\
		volatility of drift process		& $\beta$	&$=$	& 1 \\
		mean reversion level of drift process	& $\delta$	&$=$	& 0.05 \\
		initial mean of drift process		& $m_0$		&$=$	& 0.05 \\
		initial variance of drift process	& $\Sigma_0$	&$=$	& 0.2 \\
		volatility of returns			& $\sigma_R$	&$=$	& 0.25 \\
		volatility of continuous expert		& $\sigma_J$	&$=$	& 0.2 \\\hline
	\end{tabular}
	\caption{Model parameters for numerical example}\label{tab:model_parameters_for_numerical_example}
\end{table}

First, we illustrate our results from Section~\ref{sec:diffusion_approximation_of_filters_for_deterministic_information_dates} in the setting with deterministic equidistant information dates $t_k=k\Delta_n$, $k=1,\dots,n$, where $\Delta_n=\frac{T}{n}$. Recall that the variance of the discrete-time expert in that case is
\[ \Gamma^{(n)}=\frac{1}{\Delta_n}\sigma_J^2 \]
and that expert opinions are defined as in \eqref{eq:expert_opinions_for_fixed_information_dates} by
\[ Z^{(n)}_k = \mu_{t_k}+\frac{1}{\Delta_n}\sigma_J\int_{t_k}^{t_{k+1}} \rmd W^J_s \]
for $k=1,\dots,n$.
In Figure~\ref{fig:fixed_times_simulation} we plot the filters of the $R$-, $J$- and $Z$-investor against time. For the $Z$-investor we consider the cases $n=10,20,100$. In the upper plot one sees the conditional variances $\gam{R}{}$ and $\gam{J}{}$ as well as $\gam{Z,n}{}$ plotted against time. The lower plot shows a realization of the conditional means $\muhat{R}{}$, $\muhat{J}{}$ and $\muhat{Z,n}{}$ for the same parameters.

Recall that $\gam{R}{}$ and $\gam{J}{}$ as well as $\gam{Z,n}{}$ for any $n\in\N$ are deterministic. In the upper plot of Figure~\ref{fig:fixed_times_simulation} one sees that for any fixed $t\in[0,T]$, the value of $\gam{J}{t}$ as well as the value of $\gam{Z,n}{t}$ for any $n$ is less or equal than the value of $\gam{R}{t}$. This is due to Lemma~\ref{lem:boundedness_of_covariances}. For the $Z$-investors one sees that the updates at information dates lead to a decrease in the conditional variance. As the number $n$ increases, the conditional variances $\gam{Z,n}{t}$ approach $\gam{J}{t}$ for any $t\in[0,T]$. This is due to what has been shown in Theorem~\ref{thm:q_C_n_goes_to_q_D}.

Note that for $t$ going to infinity, $\gam{R}{t}$ and $\gam{J}{t}$ approach a finite value. Convergence has been proven in Proposition~4.6 of Gabih et al.~\cite{gabih_kondakji_sass_wunderlich_2014} for markets with $d=1$ stock and generalized in Theorem~4.1 of Sass et al.~\cite{sass_westphal_wunderlich_2017} for markets with an arbitrary number of stocks. For $(\gam{Z,n}{t})_{t\geq 0}$ we observe a periodic behavior with asymptotic upper and lower bounds in the limit. This has been studied in detail in Sass et al.~\cite[Sec.~4.2]{sass_westphal_wunderlich_2017}.

In the lower subplot we show a realization of the various conditional means. For $\muhat{Z,n}{}$ the updating steps at information dates are visible. In general, we observe that when increasing the value of $n$, the distance between the paths of $\muhat{J}{}$ and $\muhat{Z,n}{}$ becomes smaller, as shown in Theorem~\ref{thm:muhat_C_n_goes_to_muhat_D}.

\begin{figure}[ht]
	\centering
	\setlength\figureheight{4cm}
	\setlength\figurewidth{0.9\textwidth}
	\includegraphics[]{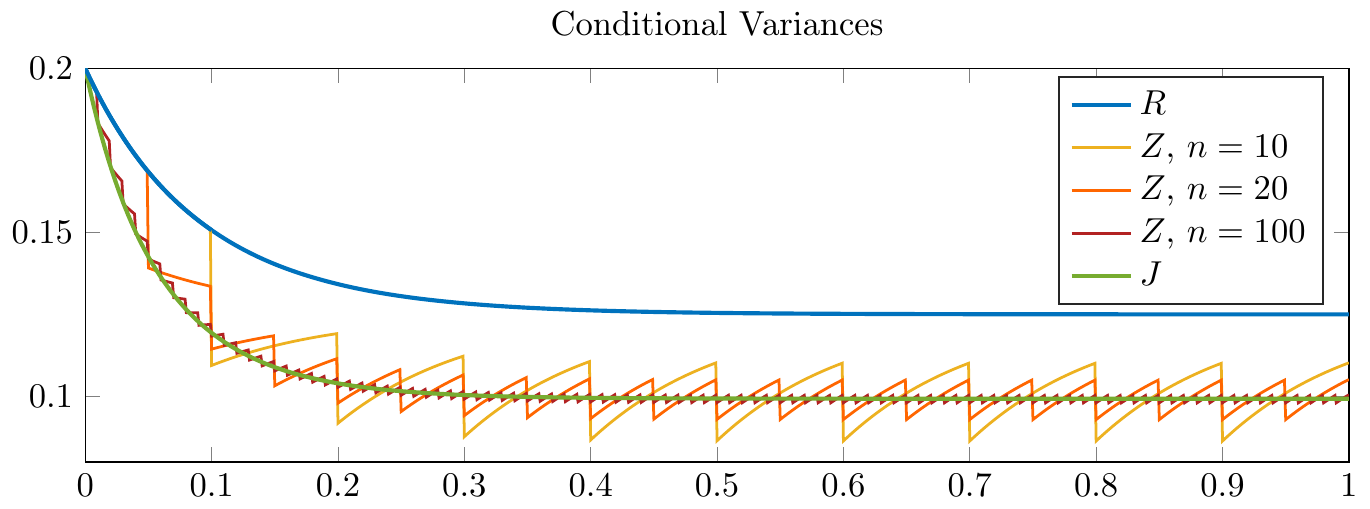}
	\includegraphics[]{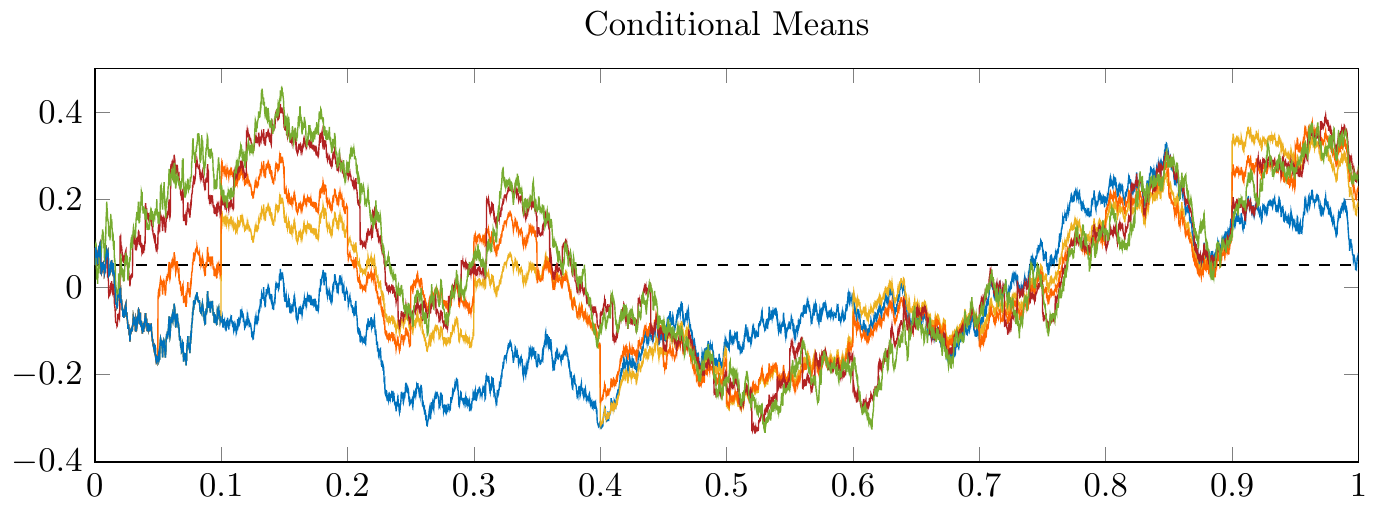}
	\caption{A simulation of the filters for deterministic equidistant information dates. The upper subplot shows the conditional variances of the $R$-, $J$- and $Z$-investor for various values of $n$, the lower subplot shows a realization of the corresponding conditional means. The dashed black line is the mean reversion level $\delta$ of the drift.}\label{fig:fixed_times_simulation}
\end{figure}

The analogous simulation can be done for the setting of Section~\ref{sec:diffusion_approximation_of_filters_for_random_information_dates} with random information dates $T_k$ defined as the jump times of a Poisson process. We again suppose that the model parameters are as given in Table~\ref{tab:model_parameters_for_numerical_example}. Recall that under Assumption~\ref{ass:random_time_points} the expert's variance is of the form $\Gamma^{(\lambda)}=\lambda\sigma_J^2$ with expert opinions given as in \eqref{eq:expert_opinions_for_random_information_dates} via
\[ Z_k^{(\lambda)} = \mu_{T_k}+\lambda\sigma_J\int_{\frac{k-1}{\lambda}}^{\frac{k}{\lambda}}\rmd W^J_s. \]
Figure~\ref{fig:random_times_simulation} shows, in addition to the filters of the $R$- and $J$-investor, the filters of the $Z$-investor for different intensities $\lambda$. Note that the conditional variances of the filter in the case of the $Z$-investor behave qualitatively like in the situation with deterministic information dates. The time at which the expert opinions arrive is now random, however. The waiting times between two information dates are exponentially distributed with parameter $\lambda$. As a consequence, the updates for the $Z$-investor do not take place as regularly as in Figure~\ref{fig:fixed_times_simulation}.

The upper plot of Figure~\ref{fig:random_times_simulation} shows realizations for $\lambda=10,100,1000$. In general, by increasing the value of the parameter $\lambda$, one can increase the frequency of information dates, causing convergence of $\gam{Z,\lambda}{t}$ to $\gam{J}{t}$ for any $t\in[0,T]$, as shown in Theorem~\ref{thm:q_C_lambda_goes_to_q_D}.
In the lower subplot, we see the corresponding realizations of $\muhat{Z,\lambda}{}$, in addition to $\muhat{R}{}$ and $\muhat{J}{}$ as before. Again, the updates in the conditional mean of the $Z$-investor are visible. What is also striking is that, when we consider the $Z$-investor with intensity $\lambda=10$, there are times where the distance between two sequent information dates is rather big. During those times, the conditional mean of the $Z$-investor comes closer to the conditional mean of the $R$-investor who does not observe any expert opinion. When the intensity $\lambda$ is increased, however, the conditional mean of the $Z$-investor approaches the conditional mean of the $J$-investor. For $\lambda=1000$, the conditional means $\muhat{Z,\lambda}{}$ and $\muhat{J}{}$ already behave quite similarly. Note, however, that for fixed information dates $\muhat{Z,n}{}$ is rather close to $\muhat{J}{}$ for $n=100$ already.
Convergence of $\muhat{Z,\lambda}{}$ to $\muhat{J}{}$ has been shown in Theorem~\ref{thm:muhat_C_lambda_goes_to_muhat_D}. The difference in the speed of convergence when comparing the situation with equidistant information dates to the situation with random information dates is also discussed there.

\begin{figure}[ht]
	\centering
	\setlength\figureheight{4cm}
	\setlength\figurewidth{0.9\textwidth}
	\includegraphics[]{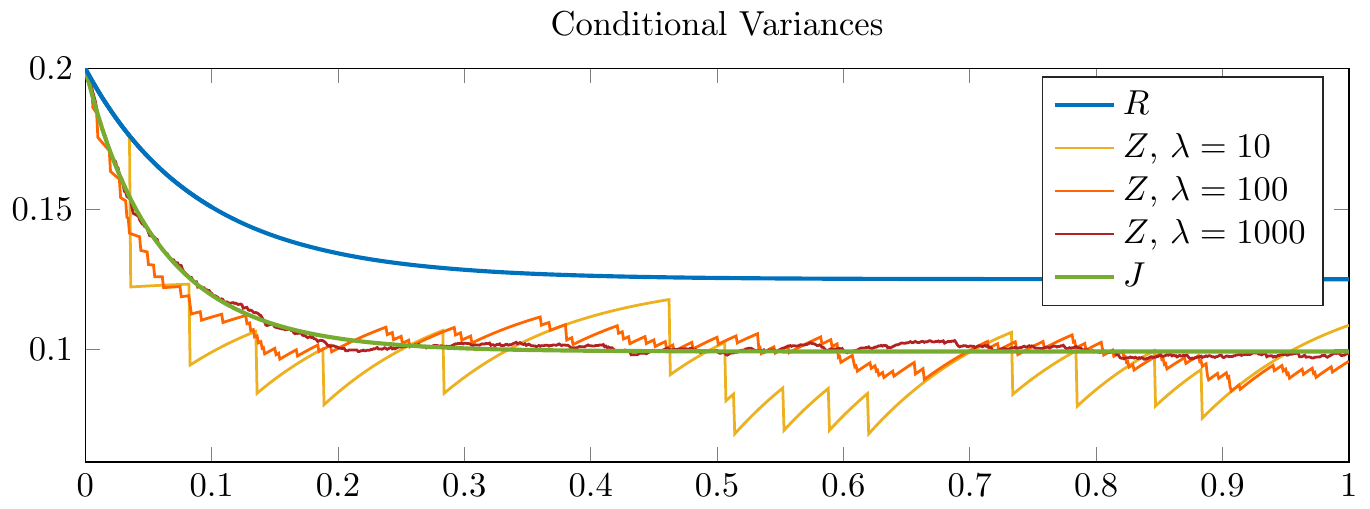}
	\includegraphics[]{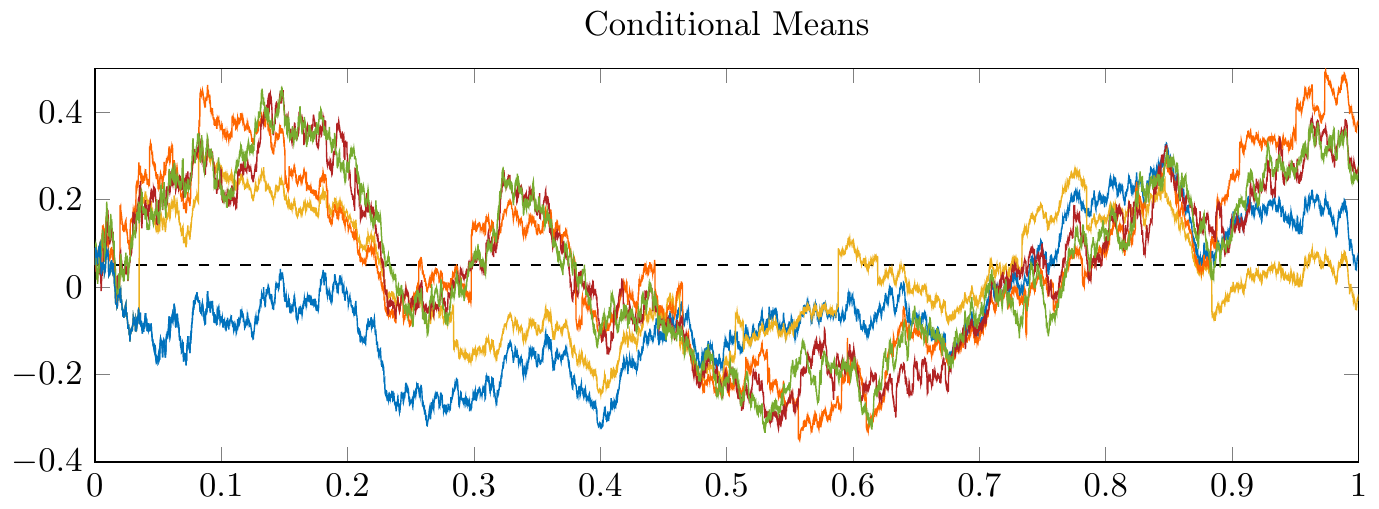}
	\caption{A simulation of the filters for random information dates coming as jump times of a Poisson process. The upper subplot shows the conditional variances of the $R$- and $J$-investor as well as realizations of $\gam{Z,\lambda}{}$ for various intensities $\lambda$, the lower subplot shows a realization of the corresponding conditional means. The dashed black line is the mean reversion level $\delta$ of the drift.}\label{fig:random_times_simulation}
\end{figure}

In the remaining part of this section we want to illustrate the convergence results in the portfolio optimization problem that was introduced in Section~\ref{sec:application_utility_maximization}. Recall that the value function of the $H$-investor has the form
\begin{equation}\label{eq:reminder_value_function}
	V^H(x_0) = \log(x_0)+\frac{1}{2}\int_0^T \tr\bigl((\sigma_R\sigma_R^\transp)^{-1}\bigl(\Sigma_t+m_tm_t^\transp-\E[\gam{H}{t}]\bigr)\bigr)\,\rmd t,
\end{equation}
i.e.\ it is an integral functional of the conditional covariance matrices $(\gam{H}{t})_{t\in[0,T]}$. This leads to convergence of $V^{Z,n}$ and $V^{Z,\lambda}$ to $V^J$ when $n$, respectively $\lambda$, goes to infinity, see Corollaries~\ref{cor:convergence_value_functions_fixed_time_points} and~\ref{cor:convergence_value_functions_random_time_points}. In Table~\ref{tab:value_functions_fixed_times} we list the value functions of the $R$-investor and of the $J$-investor as well as the value function of the $Z$-investor in the setting with $n$ equidistant information dates for different values of $n$. We assume that investors have initial capital $x_0=1$ and that the model parameters are again those from Table~\ref{tab:model_parameters_for_numerical_example}.
We see that the value functions $V^{Z,n}(1)$ are increasing in $n$ and approach the value $V^J(1)$ for large values of $n$.

Calculating the value function of the $Z$-investor in the situation with non-deterministic information dates is a little more involved. This is because the conditional covariance matrices $(\gam{Z,\lambda}{t})_{t\in[0,T]}$ are then also non-deterministic. The value function, see again \eqref{eq:reminder_value_function}, depends on the expectation of $\gam{Z,\lambda}{t}$ for $t\in[0,T]$. This value cannot be calculated easily. To determine the value function numerically for the parameters in Table~\ref{tab:model_parameters_for_numerical_example}, we therefore perform for each value of $\lambda$ a Monte Carlo simulation with $10\,000$ iterations. In each iteration, we generate a sequence of information dates as jump times of a Poisson process with intensity $\lambda$ and calculate the corresponding conditional variances $(\gam{Z,\lambda}{t})_{t\in[0,T]}$. By taking an average of all simulations this leads to a good approximation of $V^{Z,\lambda}(1)$. The diffusion approximation $V^J(1)$ is available in closed form, its computation does not require numerical methods. Table~\ref{tab:value_functions_random_times} shows the resulting estimations for $V^{Z,\lambda}(1)$ and in brackets the corresponding 95\% confidence intervals. 

The values $V^{Z,\lambda}(1)$ lie between $V^R(1)$ and $V^J(1)$, they are increasing in the intensity $\lambda$ and for large values of $\lambda$ they approach the value $V^J(1)$. This is in line with Corollary~\ref{cor:convergence_value_functions_random_time_points}. We also observe that $V^{Z,\lambda}(1)\leq V^{Z,n}(1)$ when setting the intensity $\lambda$ equal to the deterministic number $n$. Recall that an intensity $\lambda=n$ means that there are on average $n$ information dates in the time interval $[0,1]$. The randomness coming from the Poisson process however leads to a lower value function, compared to $V^{Z,n}(1)$. This difference is negligible for large intensities.

\begin{table}[ht]
	\centering
	\begin{subtable}{0.4\textwidth}
		\centering
		\begin{tabular}{lll}
			$H$	&	$n$	&	$V^{H,n}(1)$	\\\hline
			$R$	&		&	0.3410		\tstrut\vspace{1mm}\\
			$Z$	&	$10$	&	0.5245		\\
			$Z$	&	$100$	&	0.5511		\\
			$Z$	&	$1000$	&	0.5531		\\
			$Z$	&	$10\,000$	&	0.5533		\vspace{1mm}\\
			$J$	&		&	0.5533
		\end{tabular}
		\caption{Equidistant information dates}\label{tab:value_functions_fixed_times}
	\end{subtable}%
	\begin{subtable}{0.5\textwidth}
		\centering
		\begin{tabular}{lll}
			$H$	&	$\lambda$	&	$V^{H,\lambda}(1)$	\\\hline
			$R$	&			&	0.3410			\tstrut\vspace{1mm}\\
			$Z$	&	$10$		&	0.5221 (0.5211, 0.5230)	\\
			$Z$	&	$100$		&	0.5499 (0.5496, 0.5502)	\\
			$Z$	&	$1000$		&	0.5530 (0.5529, 0.5531)	\\
			$Z$	&	$10\,000$		&	0.5533 (0.5533, 0.5533)	\vspace{1mm}\\
			$J$	&			&	0.5533
		\end{tabular}
	\caption{Random information dates}\label{tab:value_functions_random_times}
	\end{subtable}
	\caption{Value function for different investors and in the situation with random information dates in brackets the 95\% confidence intervals for the $Z$-investor}
\end{table}

\appendix

\section{Auxiliary Results}\label{app:technical_lemmas}

In this appendix we give the proof of Proposition~\ref{prop:integral_equations_for_q_D_and_q_C_lambda} and collect some auxiliary results that are used in the proofs of our main results.  

\begin{proof}[Proof of Proposition~\ref{prop:integral_equations_for_q_D_and_q_C_lambda}]
	From Lemma~\ref{lem:filter_D_dynamics} one directly obtains
	\[ \ddt \gam{J}{t} = L(\gam{J}{t})-\gam{J}{t}(\sigma_J\sigma_J^\transp)^{-1}\gam{J}{t}, \]
	and the representation of $\gam{J}{t}$ follows immediately.
	From Lemma~\ref{lem:filter_C_dynamics} recall that between information dates the matrix differential equation for $\gam{Z,\lambda}{}$ reads
	\[ \frac{\rmd}{\rmd t} \gam{Z,\lambda}{t} = L(\gam{Z,\lambda}{t}). \]
	Now we can use Proposition~\ref{prop:properties_Poisson_random_measure} to include the updates of $\gam{Z,\lambda}{}$ at information dates and write
	\begin{equation}\label{eq:jump_sde_for_q_C_lambda}
		\rmd \gam{Z,\lambda}{t} = L(\gam{Z,\lambda}{t})\,\rmd t+\int_{\R^d} \bigl(\rho^{(\lambda)}(\gam{Z,\lambda}{t-})-I_d\bigr)\gam{Z,\lambda}{t-}\,N(\rmd t,\rmd u)
	\end{equation}
	for $\rho^{(\lambda)}(\gam{}{})=\Gamma^{(\lambda)}(\gam{}{}+\Gamma^{(\lambda)})^{-1}$. Note that the integrand is matrix-valued and the integral is defined componentwise.
	By \eqref{eq:jump_sde_for_q_C_lambda} we can write
	\begin{equation}\label{eq:integral_equation_1_for_q_C_lambda}
		\begin{aligned}
			\gam{Z,\lambda}{t}&=\Sigma_0+\int_0^t L(\gam{Z,\lambda}{s})\,\rmd s+\int_0^t\int_{\R^d}\bigl(\rho^{(\lambda)}(\gam{Z,\lambda}{s-})-I_d\bigr)\gam{Z,\lambda}{s-}\,N(\rmd s,\rmd u) \\
			&=\Sigma_0+\int_0^t L(\gam{Z,\lambda}{s})\,\rmd s+\int_0^t\int_{\R^d}\bigl(\rho^{(\lambda)}(\gam{Z,\lambda}{s-})-I_d\bigr)\gam{Z,\lambda}{s-}\,\tilde{N}(\rmd s,\rmd u) \\
			&\quad+\int_0^t\int_{\R^d}\bigl(\rho^{(\lambda)}(\gam{Z,\lambda}{s-})-I_d\bigr)\gam{Z,\lambda}{s-}\,\nu(\rmd s,\rmd u).
		\end{aligned}
	\end{equation}
	We see that
	\begin{equation*}
		\bigl(\rho^{(\lambda)}(\gam{}{})-I_d\bigr)\gam{}{}=\bigl(\Gamma^{(\lambda)}(\gam{}{}+\Gamma^{(\lambda)})^{-1}-I_d\bigr)\gam{}{}=-\gam{}{}(\gam{}{}+\Gamma^{(\lambda)})^{-1}\gam{}{}=-\gam{}{}(\gam{}{}+\lambda\sigma_J\sigma_J^\transp)^{-1}\gam{}{}.
	\end{equation*}
	Therefore, the last integral in \eqref{eq:integral_equation_1_for_q_C_lambda} can be written as
	\begin{equation*}
		\begin{aligned}
			\int_0^t\int_{\R^d}\bigl(\rho^{(\lambda)}(\gam{Z,\lambda}{s-})-I_d\bigr)\gam{Z,\lambda}{s-}\,\nu(\rmd s,\rmd u) &= -\int_0^t\int_{\R^d}\gam{Z,\lambda}{s-}(\gam{Z,\lambda}{s-}+\lambda\sigma_J\sigma_J^\transp)^{-1}\gam{Z,\lambda}{s-}\,\nu(\rmd s,\rmd u) \\
			&= -\int_0^t\int_{\R^d}\gam{Z,\lambda}{s-}(\gam{Z,\lambda}{s-}+\lambda\sigma_J\sigma_J^\transp)^{-1}\gam{Z,\lambda}{s-}\varphi(u)\lambda\,\rmd u\,\rmd s \\
			&= -\int_0^t \lambda\gam{Z,\lambda}{s-}(\gam{Z,\lambda}{s-}+\lambda\sigma_J\sigma_J^\transp)^{-1}\gam{Z,\lambda}{s-}\,\rmd s,
		\end{aligned}
	\end{equation*}
	where the second equality follows from Proposition~\ref{prop:properties_Poisson_random_measure} and the last equality is due to $\varphi$ being a probability density. Plugging back in into \eqref{eq:integral_equation_1_for_q_C_lambda} yields
	\begin{equation*}
		\begin{aligned}
			\gam{Z,\lambda}{t} &= \Sigma_0+\int_0^t \bigl(L(\gam{Z,\lambda}{s})-\lambda\gam{Z,\lambda}{s-}(\gam{Z,\lambda}{s-}+\lambda\sigma_J\sigma_J^\transp)^{-1}\gam{Z,\lambda}{s-}\bigr)\,\rmd s \\
			&\quad-\int_0^t\int_{\R^d}\gam{Z,\lambda}{s-}(\gam{Z,\lambda}{s-}+\lambda\sigma_J\sigma_J^\transp)^{-1}\gam{Z,\lambda}{s-}\,\tilde{N}(\rmd s,\rmd u),
		\end{aligned}
	\end{equation*}
	and the representation of $\gam{Z,\lambda}{t}$ is also proven.
\end{proof}

The following lemma can be interpreted as a discrete version of Gronwall's Lemma for error accumulation. A statement similar to Lemma~\ref{lem:discrete_gronwall} can be found in Demailly~\cite[Sec.~8.2.4]{demailly_1994}.

\begin{lemma}\label{lem:discrete_gronwall}
	Let $(a_j)_{j=0,\dots,n}$, $(h_j)_{j=0,\dots,n}$ be real-valued sequences with $a_j\geq 0$, $h_j>0$, and $L>0$, $b\geq 0$ real numbers such that
	\[ a_{j+1}\leq (1+h_jL)a_j+h_jb, \qquad j=0,1,\dots, n-1. \]
	Then for all $j=0,1,\dots, n$ it holds
	\[ a_j\leq \frac{\rme^{Lt_j}-1}{L}b+\rme^{Lt_j}a_0, \]
	where $t_j=\sum_{i=0}^{j-1}h_i.$
\end{lemma}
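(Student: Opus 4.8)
The plan is to prove the bound by induction on $j$, using the elementary inequality $1+x\le \rme^x$ (valid for all $x\in\R$) as the single nontrivial ingredient. For the base case $j=0$ one notes that $t_0=\sum_{i=0}^{-1}h_i=0$ is the empty sum, so $\rme^{Lt_0}=1$ and $\frac{\rme^{Lt_0}-1}{L}b=0$; the claimed bound therefore reduces to $a_0\le a_0$, which holds trivially.

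For the inductive step I would assume the bound holds for some index $j$ and insert the induction hypothesis into the given recursion $a_{j+1}\le(1+h_jL)a_j+h_jb$. Since $1+h_jL>0$ and the factors $b,a_0$ are nonnegative, multiplying the bound for $a_j$ by $1+h_jL$ preserves the inequality, giving
\[ a_{j+1}\le(1+h_jL)\Bigl(\tfrac{\rme^{Lt_j}-1}{L}b+\rme^{Lt_j}a_0\Bigr)+h_jb. \]
I would then treat the $a_0$-coefficient and the $b$-coefficient separately. For the $a_0$-term the key step is $1+h_jL\le \rme^{h_jL}$, so that $(1+h_jL)\rme^{Lt_j}\le \rme^{L(t_j+h_j)}=\rme^{Lt_{j+1}}$, using $t_{j+1}=t_j+h_j$ from the definition of $t_j$.

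For the $b$-term the only real computation is an algebraic simplification: expanding
\[ (1+h_jL)\tfrac{\rme^{Lt_j}-1}{L}+h_j=\tfrac{(1+h_jL)(\rme^{Lt_j}-1)+h_jL}{L}=\tfrac{(1+h_jL)\rme^{Lt_j}-1}{L}, \]
after which the same estimate $(1+h_jL)\rme^{Lt_j}\le \rme^{Lt_{j+1}}$ yields the coefficient $\frac{\rme^{Lt_{j+1}}-1}{L}$. Combining the two bounded coefficients gives exactly $a_{j+1}\le\frac{\rme^{Lt_{j+1}}-1}{L}b+\rme^{Lt_{j+1}}a_0$, closing the induction. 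There is no genuine obstacle here; the statement is a standard discrete Gronwall estimate and the only point requiring mild care is the bookkeeping with the empty-sum convention $t_0=0$ and the cancellation in the $b$-coefficient, both of which are routine.
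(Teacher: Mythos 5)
Your proposal is correct and takes essentially the same route as the paper: induction on $j$ combined with the elementary bound $1+h_jL\le \rme^{h_jL}$, differing only in the bookkeeping (you keep the factor $1+h_jL$ exact and simplify the $b$-coefficient before applying the exponential bound, while the paper applies it immediately). Both arguments are complete and the algebra in your $b$-term cancellation checks out.
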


\begin{proof}
	The proof can be done by induction. For $j=0$ the claim is obvious. For the induction step we observe that $1+x\leq \rme^x$ for all $x\in\R$ and hence
	\[ a_{j+1} \leq (1+h_jL)a_j+h_jb \leq \rme^{h_jL}a_j+h_jb. \]
	Due to the induction hypothesis we therefore have
	\begin{equation*}
		\begin{aligned}
			a_{j+1} &\leq \rme^{h_jL}\Bigl(\frac{\rme^{Lt_j}-1}{L}b+\rme^{Lt_j}a_0\Bigr)+h_jb \\
			&=\Bigl(\frac{\rme^{L(t_j+h_j)}-\rme^{Lh_j}+h_jL}{L}\Bigr)b+\rme^{L(t_j+h_j)}a_0 \\
			&\leq \frac{\rme^{Lt_{j+1}}-1}{L}b+\rme^{Lt_{j+1}}a_0,
		\end{aligned}
	\end{equation*}
	which completes the proof.
\end{proof}

The next lemmas are used in the proof of Theorem~\ref{thm:muhat_C_n_goes_to_muhat_D}. Firstly, the following lemma is a Cauchy--Schwarz inequality for multidimensional integrals.

\begin{lemma}\label{lem:cauchy_schwarz_multidimensional}
	Let $(X_s)_{s\in[0,t]}$ be an $\R^d$-valued stochastic process. Then
	\[ \E\biggl[\biggl\lVert\int_0^t X_s\,\rmd s\biggr\rVert^2\biggr]\leq t\int_0^t \E\bigl[\lVert X_s\rVert^2\bigr]\,\rmd s. \]
\end{lemma}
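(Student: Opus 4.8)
The plan is to split the argument into a purely deterministic (pathwise) estimate followed by taking expectations. First I would fix $\omega$ and establish the deterministic inequality
\[ \biggl\lVert \int_0^t X_s \, \rmd s\biggr\rVert^2 \leq t \int_0^t \lVert X_s\rVert^2 \, \rmd s, \]
where $\lVert\cdot\rVert$ is the Euclidean norm on $\R^d$. Once this holds for (almost) every $\omega$, the lemma follows by monotonicity of the expectation together with an interchange of $\E$ and $\int_0^t$.

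For the pathwise inequality I would use the convexity of the map $x\mapsto\lVert x\rVert^2$ on $\R^d$ and apply Jensen's inequality with respect to the normalized Lebesgue measure $\frac{1}{t}\,\rmd s$ on $[0,t]$, giving
\[ \biggl\lVert \frac{1}{t}\int_0^t X_s \, \rmd s\biggr\rVert^2 \leq \frac{1}{t}\int_0^t \lVert X_s\rVert^2 \, \rmd s; \]
multiplying by $t^2$ then yields the claimed bound. Alternatively, one can avoid invoking Jensen by writing the integral componentwise as $\int_0^t X_s\,\rmd s=(\int_0^t X_s^{(1)}\,\rmd s,\dots,\int_0^t X_s^{(d)}\,\rmd s)$ and applying the scalar Cauchy--Schwarz inequality $(\int_0^t 1\cdot X_s^{(i)}\,\rmd s)^2\le t\int_0^t (X_s^{(i)})^2\,\rmd s$ to each coordinate, then summing over $i=1,\dots,d$ and recognizing $\sum_i (X_s^{(i)})^2=\lVert X_s\rVert^2$.

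Taking expectations on both sides of the pathwise inequality gives
\[ \E\biggl[\biggl\lVert \int_0^t X_s\,\rmd s\biggr\rVert^2\biggr] \leq t\,\E\biggl[\int_0^t \lVert X_s\rVert^2\,\rmd s\biggr], \]
and it remains to pull the expectation inside the time integral to recover $t\int_0^t \E[\lVert X_s\rVert^2]\,\rmd s$. Since the integrand $\lVert X_s\rVert^2$ is nonnegative, this interchange is justified by Tonelli's theorem, provided the map $(s,\omega)\mapsto\lVert X_s(\omega)\rVert^2$ is jointly measurable. That measurability is the only genuine technical point in the argument, and it holds under the standing measurability assumptions on the processes in the paper. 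The main obstacle is therefore not any hard estimate but merely ensuring this joint measurability so that Tonelli applies; everything else reduces to the elementary Cauchy--Schwarz/Jensen step.
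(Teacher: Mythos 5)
Your proposal is correct and follows essentially the same route as the paper: the paper pulls the norm inside the integral and applies scalar Cauchy--Schwarz to $\int_0^t \lVert X_s\rVert\,\rmd s$, then invokes Fubini, which is just a minor variant of your pathwise Jensen/componentwise Cauchy--Schwarz estimate followed by Tonelli. The only (cosmetic) difference is that you justify the interchange via Tonelli for the nonnegative integrand rather than citing Fubini, and you prove the deterministic inequality by convexity instead of the triangle inequality plus scalar Cauchy--Schwarz; the mathematical content is identical.
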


\begin{proof}
	Firstly, pulling the norm into the integral increases the expression on the left-hand side, so
	\[ \E\biggl[\biggl\lVert\int_0^t X_s\,\rmd s\biggr\rVert^2\biggr]\leq\E\biggl[\biggl(\int_0^t \lVert X_s\rVert\,\rmd s\biggr)^2\biggr]. \]
	Now we can apply the usual Cauchy--Schwarz inequality to the one-dimensional integral and get
	\[ \E\biggl[\biggl(\int_0^t \lVert X_s\rVert\,\rmd s\biggr)^2\biggr]\leq\E\biggl[t\int_0^t \lVert X_s\rVert^2\,\rmd s\biggr]=t\int_0^t \E\bigl[\lVert X_s\rVert^2\bigr]\,\rmd s. \]
	The last step is due to Fubini.
\end{proof}

A key tool for estimations involving stochastic integrals is the It\^{o} isometry. The following lemma uses the isometry to obtain an estimation for multivariate integrals.

\begin{lemma}\label{lem:ito_isometry_multidimensional}
	Let $W=(W_s)_{s\in[0,t]}$ be an $m$-dimensional Brownian motion. Let $(H_s)_{s\in[0,t]}$ be an $\R^{d\times m}$-valued stochastic process that is independent of $W$, and $\tau$ a stopping time that is bounded by $t$ and also independent of $W$. Then
	\[ \E\biggl[\biggl\lVert\int_0^\tau H_s\,\rmd W_s\biggr\rVert^2\biggr] = \E\biggl[\int_0^\tau \lVert H_s\rVert_F^2\,\rmd s\biggr]\leq C_{\textrm{norm}}\E\biggl[\int_0^\tau \lVert H_s\rVert^2\,\rmd s\biggr], \]
	where $\lVert\cdot\rVert_F$ denotes the Frobenius norm and $C_{\textrm{norm}}>0$ only depends on the dimension $d\times m$ of the integrand $H$.
\end{lemma}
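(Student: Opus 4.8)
The plan is to reduce the assertion to the classical (scalar) Itô isometry by first absorbing the randomness of $H$ and $\tau$ into the filtration. Set $\mathcal{H}=\sigma\bigl((H_s)_{s\in[0,t]},\tau\bigr)$ and let $\mathbb{K}=(\mathcal{K}_s)_{s\in[0,t]}$ be the enlarged filtration with $\mathcal{K}_s=\mathcal{H}\vee\sigma\bigl((W_u)_{u\in[0,s]}\bigr)$. Because $(H,\tau)$ is independent of $W$, the process $W$ remains a $\mathbb{K}$-Brownian motion; moreover each $H_s$ is $\mathcal{H}$-measurable, hence $\mathbb{K}$-adapted, and $\{\tau\le s\}\in\mathcal{H}\subseteq\mathcal{K}_s$, so that $\tau$ is a $\mathbb{K}$-stopping time bounded by $t$. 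In this filtration $\int_0^\tau H_s\,\rmd W_s$ is a genuine stochastic integral of a $\mathbb{K}$-adapted integrand with a stopping-time upper limit, which I would rewrite as $\int_0^t \mathbbm{1}_{\{s\le\tau\}}H_s\,\rmd W_s$, using that $s\mapsto\mathbbm{1}_{\{s\le\tau\}}$ is $\mathbb{K}$-adapted. If $\E\bigl[\int_0^t\lVert H_s\rVert_F^2\,\rmd s\bigr]=\infty$ the asserted inequality is trivial, so one may assume this quantity is finite, placing the integrand in the space where the isometry is valid.

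Next I would establish the equality by expanding the Euclidean norm componentwise. Writing $H_s=(H_s^{ij})$ and $W=(W^1,\dots,W^m)$, the $i$-th component of the integral is $\sum_{j=1}^m\int_0^\tau H_s^{ij}\,\rmd W_s^j$, so that
\[ \Bigl\lVert\int_0^\tau H_s\,\rmd W_s\Bigr\rVert^2=\sum_{i=1}^d\sum_{j,k=1}^m\Bigl(\int_0^\tau H_s^{ij}\,\rmd W_s^j\Bigr)\Bigl(\int_0^\tau H_s^{ik}\,\rmd W_s^k\Bigr). \]
Taking expectations and using that distinct coordinates $W^j,W^k$ with $j\neq k$ are independent Brownian motions with vanishing covariation, the off-diagonal terms drop out, and the scalar Itô isometry applied to each diagonal term yields
\[ \E\Bigl[\Bigl\lVert\int_0^\tau H_s\,\rmd W_s\Bigr\rVert^2\Bigr]=\E\Bigl[\int_0^\tau\sum_{i=1}^d\sum_{j=1}^m (H_s^{ij})^2\,\rmd s\Bigr]=\E\Bigl[\int_0^\tau\lVert H_s\rVert_F^2\,\rmd s\Bigr], \]
which is the first claimed identity.

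For the inequality I would invoke the elementary norm comparison on $\R^{d\times m}$: if $\sigma_1,\dots,\sigma_r$ are the singular values of a matrix $A$, with $r\le\min\{d,m\}$ its rank, then $\lVert A\rVert_F^2=\sum_{k}\sigma_k^2\le r\,\sigma_{\max}^2\le\min\{d,m\}\,\lVert A\rVert^2$. Applying this pathwise to $A=H_s$ and integrating gives the stated bound with $C_{\mathrm{norm}}=\min\{d,m\}$, a constant depending only on the dimensions $d\times m$.

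I expect the only genuine subtlety to be the reduction in the first step: the integrand $H$ is in general \emph{not} adapted to the filtration generated by $W$, and $\tau$ need not be a $W$-stopping time, so the isometry cannot be applied directly. The independence hypotheses are exactly what make the filtration enlargement work, since they guarantee that $W$ stays a Brownian motion after adjoining $\mathcal{H}$, keeping the standard isometry valid in $\mathbb{K}$. Everything after this reduction is routine bookkeeping with the componentwise expansion and the finite-dimensional norm equivalence.
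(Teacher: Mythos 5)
Your proof is correct and follows essentially the same route as the paper: enlarge the filtration by the (independent) integrand information so that $W$ remains a Brownian motion, expand the Euclidean norm componentwise, apply the scalar It\^{o} isometry to the diagonal terms after the cross terms vanish, and finish with finite-dimensional norm equivalence. The only cosmetic differences are that you fold $\tau$ into the enlarged filtration from the start, whereas the paper first proves the fixed-time case and then treats $\mathbbm{1}_{\{s\le\tau\}}H_s$ as a new integrand independent of $W$, and that you exhibit the explicit constant $C_{\mathrm{norm}}=\min\{d,m\}$, which the paper leaves implicit via equivalence of norms.
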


\begin{proof}
	Note that for fixed, deterministic $t$, the integral $\int_0^t H_s\,\rmd W_s$ is a random variable with values in $\R^d$. The $i$-th entry is
	\[ \sum_{j=1}^m \int_0^t H^{ij}_s\,\rmd W^j_s. \]
	Hence,
	\[ \biggl\lVert\int_0^t H_s\,\rmd W_s\biggr\rVert^2 = \sum_{i=1}^d \biggl(\sum_{j=1}^m \int_0^t H^{ij}_s\,\rmd W^j_s\biggr)^2. \]
	When applying the expectation, we get due to independence
	\begin{equation}\label{eq:ito_first_step}
		\begin{aligned}
			\E\biggl[\biggl\lVert\int_0^t H_s\,\rmd W_s\biggr\rVert^2\biggr] &= \sum_{i=1}^d \sum_{j,k=1}^m \E\biggl[\int_0^t H^{ij}_s\,\rmd W^j_s \int_0^t H^{ik}_s\,\rmd W^k_s\biggr] \\
			&= \sum_{i=1}^d \sum_{j=1}^m \E\biggl[\biggl(\int_0^t H^{ij}_s\,\rmd W^j_s\biggr)^2 \biggr].
		\end{aligned}
	\end{equation}
	Note that we can consider the filtration $(\mathcal{G}_s)_{s\in[0,t]}$ where $\mathcal{G}_s=\sigma(W_u, u\leq s)\vee \sigma(H_u, u\in[0,t])$. Since $H$ and $W$ are independent, $W$ is a Brownian motion with respect to $(\mathcal{G}_s)_{s\in[0,t]}$. Also, $H$ is obviously adapted with respect to $(\mathcal{G}_s)_{s\in[0,t]}$. Hence, we can apply the usual It\^{o} isometry and obtain that the right-hand side of \eqref{eq:ito_first_step} equals
	\[ \sum_{i=1}^d \sum_{j=1}^m \E\biggl[\int_0^t (H^{ij}_s)^2\,\rmd s \biggr] = \E\biggl[\int_0^t \lVert H_s\rVert_F^2\,\rmd s\biggr]. \]
	Now when taking into account the stopping time $\tau$, we can write
	\[ \E\biggl[\biggl\lVert\int_0^\tau H_s\,\rmd W_s\biggr\rVert^2\biggr]=\E\biggl[\biggl\lVert\int_0^t \mathbbm{1}_{\{s\leq\tau\}} H_s\,\rmd W_s\biggr\rVert^2\biggr]. \]
	Since $\tau$ is independent of $W$ we can deduce from the previous part of the proof that
	\[ \E\biggl[\biggl\lVert\int_0^t \mathbbm{1}_{\{s\leq\tau\}} H_s\,\rmd W_s\biggr\rVert^2\biggr] = \E\biggl[\int_0^t \lVert \mathbbm{1}_{\{s\leq\tau\}} H_s\rVert_F^2\,\rmd s\biggr] = \E\biggl[\int_0^\tau \lVert H_s\rVert_F^2\,\rmd s\biggr]. \]
	Equivalence of norms implies the existence of the constant $C_{\textrm{norm}}>0$ with the property that
	\[ \E\biggl[\int_0^\tau \lVert H_s\rVert_F^2\,\rmd s\biggr]\leq C_{\textrm{norm}}\E\biggl[\int_0^\tau \lVert H_s\rVert^2\,\rmd s\biggr], \]
	which concludes the proof.
\end{proof}

Another estimate that is useful in the convergence proofs is given in the following lemma.

\begin{lemma}\label{lem:estimation_lemma}
	Let $\kappa>0$ and let $\gam{\kappa}{}$ be a symmetric and positive-definite matrix in $\R^{d\times d}$ with $\lVert \gam{\kappa}{}\rVert\leq C_{\gam{}{}}$ for all $\kappa$. Then there exists a constant $\bar{C}>0$ such that
	\[ \Bigl\lVert \gam{\kappa}{}-\gam{\kappa}{}\bigl(\gam{\kappa}{}+\kappa\sigma_J\sigma_J^\transp\bigr)^{-1}\kappa\sigma_J\sigma_J^\transp \Bigr\rVert\leq\frac{\bar{C}}{\kappa}. \]
\end{lemma}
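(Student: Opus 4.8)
The plan is to first collapse the matrix expression by an elementary algebraic identity and then close the estimate with a single operator-norm bound. Writing $S:=\sigma_J\sigma_J^\transp$, which is symmetric and positive definite because $\sigma_J$ has full rank $d$, I would use the decomposition $\kappa S=(\gam{\kappa}{}+\kappa S)-\gam{\kappa}{}$ to rewrite
\[ \gam{\kappa}{}\bigl(\gam{\kappa}{}+\kappa S\bigr)^{-1}\kappa S=\gam{\kappa}{}-\gam{\kappa}{}\bigl(\gam{\kappa}{}+\kappa S\bigr)^{-1}\gam{\kappa}{}. \]
Substituting this back shows that the expression inside the norm in the statement equals exactly $\gam{\kappa}{}(\gam{\kappa}{}+\kappa S)^{-1}\gam{\kappa}{}$, so the whole task reduces to bounding $\bigl\lVert\gam{\kappa}{}(\gam{\kappa}{}+\kappa S)^{-1}\gam{\kappa}{}\bigr\rVert$.

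For that bound I would combine submultiplicativity of the spectral norm with a lower eigenvalue estimate on $\gam{\kappa}{}+\kappa S$. Since $\gam{\kappa}{}$ is positive semidefinite, we have $\gam{\kappa}{}+\kappa S\succeq\kappa S\succeq\kappa\,\lmin(S)\,I_d$ with $\lmin(S)>0$, whence $\lVert(\gam{\kappa}{}+\kappa S)^{-1}\rVert\leq(\kappa\,\lmin(S))^{-1}$. Together with the assumed uniform bound $\lVert\gam{\kappa}{}\rVert\leq C_{\gam{}{}}$ this yields
\[ \bigl\lVert\gam{\kappa}{}\bigl(\gam{\kappa}{}+\kappa S\bigr)^{-1}\gam{\kappa}{}\bigr\rVert\leq\lVert\gam{\kappa}{}\rVert^2\,\bigl\lVert(\gam{\kappa}{}+\kappa S)^{-1}\bigr\rVert\leq\frac{C_{\gam{}{}}^2}{\kappa\,\lmin(S)}, \]
so the claim follows with $\bar C=C_{\gam{}{}}^2/\lmin(\sigma_J\sigma_J^\transp)$, a constant that does not depend on $\kappa$.

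There is no serious obstacle here; the only non-routine step is spotting the telescoping identity $\kappa S=(\gam{\kappa}{}+\kappa S)-\gam{\kappa}{}$ that turns the difference into the manifestly $O(1/\kappa)$ quantity $\gam{\kappa}{}(\gam{\kappa}{}+\kappa S)^{-1}\gam{\kappa}{}$. Everything afterwards is a one-line operator-norm estimate, and it is precisely the uniform bound $C_{\gam{}{}}$ on $\lVert\gam{\kappa}{}\rVert$ together with the (fixed, hence uniform) positive definiteness of $\sigma_J\sigma_J^\transp$ that guarantees $\bar C$ is independent of $\kappa$.
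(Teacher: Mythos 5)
Your proof is correct, and its first half is exactly the paper's: the identity $\kappa\sigma_J\sigma_J^\transp=(\gam{\kappa}{}+\kappa\sigma_J\sigma_J^\transp)-\gam{\kappa}{}$ is the same telescoping step the paper uses (there written as factoring out $A(A+\kappa B)^{-1}$ from $A+\kappa B-\kappa B$), and both reduce the claim to bounding $\gam{\kappa}{}(\gam{\kappa}{}+\kappa\sigma_J\sigma_J^\transp)^{-1}\gam{\kappa}{}$. Where you diverge is the closing estimate: the paper rewrites this term as $\bigl(\gam{\kappa}{}^{-1}+\kappa\,\gam{\kappa}{}^{-1}\sigma_J\sigma_J^\transp\gam{\kappa}{}^{-1}\bigr)^{-1}$ and then uses superadditivity of $\lmin$ to extract the factor $1/\kappa$, which requires $\gam{\kappa}{}$ to be invertible; you instead bound the three-factor product directly by submultiplicativity together with the Loewner estimate $\gam{\kappa}{}+\kappa\sigma_J\sigma_J^\transp\succeq\kappa\,\lmin(\sigma_J\sigma_J^\transp)I_d$. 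Your route is slightly more economical and in fact needs only positive semidefiniteness of $\gam{\kappa}{}$ rather than positive definiteness, while producing the identical constant $\bar{C}=C_{\gam{}{}}^2\lVert(\sigma_J\sigma_J^\transp)^{-1}\rVert=C_{\gam{}{}}^2/\lmin(\sigma_J\sigma_J^\transp)$; the paper's detour through $\gam{\kappa}{}^{-1}$ buys nothing extra here, so your simplification is harmless.
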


\begin{proof}
	For abbreviation let $A:=\gam{\kappa}{}$, $B:=\sigma_J\sigma_J^\transp$. Then we can write
	\begin{equation*}
		\begin{aligned}
			A-A(A+\kappa B)^{-1}\kappa B &= A(A+\kappa B)^{-1}(A+\kappa B-\kappa B)=A(A+\kappa B)^{-1}A\\
			&=\bigl(A^{-1}(A+\kappa B)A^{-1}\bigr)^{-1}=\bigl(A^{-1}+\kappa A^{-1}BA^{-1}\bigr)^{-1},
		\end{aligned}
	\end{equation*}
	and therefore
	\begin{equation*}
		\begin{aligned}
			\bigl\lVert A-A(A+\kappa B)^{-1}\kappa B\bigr\rVert &= \bigl\lVert \bigl(A^{-1}+\kappa A^{-1}BA^{-1}\bigr)^{-1}\bigr\rVert = \Bigl(\lmin(A^{-1}+\kappa A^{-1}BA^{-1})\Bigr)^{-1}\\
			&\leq \Bigl(\lmin(A^{-1})+\lmin(\kappa A^{-1}BA^{-1})\Bigr)^{-1} \leq \Bigl(\lmin(\kappa A^{-1}BA^{-1})\Bigr)^{-1} \\
			&= \frac{1}{\kappa} \lVert AB^{-1}A\rVert.
		\end{aligned}
	\end{equation*}
	Hence, we obtain
	\[ \Bigl\lVert \gam{\kappa}{}-\gam{\kappa}{}\bigl(\gam{\kappa}{}+\kappa\sigma_J\sigma_J^\transp\bigr)^{-1}\kappa\sigma_J\sigma_J^\transp \Bigr\rVert\leq\frac{C_{\gam{}{}}^2\lVert(\sigma_J\sigma_J^\transp)^{-1}\rVert}{\kappa} = \frac{\bar{C}}{\kappa}, \]
	where $\bar{C}=C_{\gam{}{}}^2\lVert(\sigma_J\sigma_J^\transp)^{-1}\rVert$.
\end{proof}

The next lemma states Gronwall's Lemma in integral form which we use in the proofs of Theorems~\ref{thm:q_C_lambda_goes_to_q_D} and \ref{thm:muhat_C_lambda_goes_to_muhat_D}. A proof can be found for example in Pachpatte~\cite[Sec.~1.3]{pachpatte_1997}.

\begin{lemma}[Gronwall]\label{lem:gronwall}
	Let $I=[a,b]$ be an interval and let $u,\alpha$ and $\beta\colon I\to[0,\infty)$ be continuous functions with
	\[ u(t)\leq\alpha(t)+\int_a^t \beta(s)u(s)\,\rmd s \]
	for all $t\in I$. Then
	\[ u(t)\leq\alpha(t)+\int_a^t \alpha(s)\beta(s)\rme^{\int_s^t \beta(r)\,\rmd r}\,\rmd s \]
	for all $t\in I$.
\end{lemma}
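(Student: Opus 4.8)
The plan is to reduce the integral inequality to a linear first-order differential inequality and then resolve it with an integrating factor. First I would introduce the auxiliary function $v(t):=\int_a^t \beta(s)u(s)\,\rmd s$, which is well defined and continuously differentiable on $I$ because $\beta u$ is continuous, with $v(a)=0$. The hypothesis can be rewritten as $u(t)\le \alpha(t)+v(t)$, and since $\beta\ge 0$ this gives
\[ v'(t)=\beta(t)u(t)\le \beta(t)\bigl(\alpha(t)+v(t)\bigr), \]
that is, $v'(t)-\beta(t)v(t)\le \beta(t)\alpha(t)$.

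Next I would multiply this differential inequality by the strictly positive integrating factor $\mu(t):=\rme^{-\int_a^t\beta(r)\,\rmd r}$. Since $\mu>0$ the direction of the inequality is preserved, and the left-hand side collapses to an exact derivative:
\[ \ddt\bigl(\mu(t)v(t)\bigr)=\mu(t)\bigl(v'(t)-\beta(t)v(t)\bigr)\le \mu(t)\beta(t)\alpha(t). \]
Integrating from $a$ to $t$ and using $v(a)=0$ and $\mu(a)=1$ yields $\mu(t)v(t)\le \int_a^t \mu(s)\beta(s)\alpha(s)\,\rmd s$, hence
\[ v(t)\le \int_a^t \frac{\mu(s)}{\mu(t)}\,\alpha(s)\beta(s)\,\rmd s=\int_a^t \alpha(s)\beta(s)\,\rme^{\int_s^t\beta(r)\,\rmd r}\,\rmd s, \]
where I have used $\mu(s)/\mu(t)=\rme^{\int_s^t\beta(r)\,\rmd r}$. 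Adding $\alpha(t)$ to both sides and invoking $u(t)\le \alpha(t)+v(t)$ once more delivers the claimed bound.

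The argument is largely mechanical; the only points requiring care are that $v$ is genuinely differentiable, which is guaranteed by continuity of $\beta u$ through the fundamental theorem of calculus, and that the inequality survives both multiplication by $\mu$ and integration, which is fine because $\mu>0$ and the right-hand integrand is nonnegative. The only genuinely conceptual obstacle is recognizing that a non-constant $\alpha$ rules out the naive exponentiation used in the familiar constant-bound form of Gronwall's lemma; the integrating-factor device is precisely what accommodates a variable $\alpha$ and produces the kernel $\rme^{\int_s^t\beta(r)\,\rmd r}$ appearing in the stated estimate.
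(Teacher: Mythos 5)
Your proof is correct: the reduction $v(t)=\int_a^t\beta(s)u(s)\,\rmd s$, the differential inequality $v'-\beta v\le\alpha\beta$, and the integrating factor $\rme^{-\int_a^t\beta(r)\,\rmd r}$ are all handled properly, and the identity $\mu(s)/\mu(t)=\rme^{\int_s^t\beta(r)\,\rmd r}$ gives exactly the stated kernel. The paper itself offers no proof of this lemma but only cites Pachpatte (Sec.~1.3), and your argument is precisely the standard integrating-factor proof found there, so you have in effect supplied the omitted reference proof.
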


In Section~\ref{sec:diffusion_approximation_of_filters_for_random_information_dates} we work with a Poisson random measure. An important property of the compensated Poisson measure that we use for the proof of Theorem~\ref{thm:q_C_lambda_goes_to_q_D} is given in the following lemma, see Proposition~2.16 in Cont and Tankov~\cite{cont_tankov_2004}.

\begin{lemma}\label{lem:variance_of_integral_with_respect_to_compensated_measure}
	For an integrable real-valued function $f\colon[0,T]\times\R^d\to\R$, the process $(X_t)_{t\geq 0}$ with
	\[ X_t=\int_0^t\int_{\R^d} f(s,u)\,\tilde{N}(\rmd s,\rmd u) \]
	is a martingale with $\E[X_t]=0$ and
	\begin{equation}
		\var(X_t)=\E[X_t^2]=\E\biggl[\int_0^t\int_{\R^d}f^2(s,u)\lambda\varphi(u)\,\rmd u\,\rmd s\biggr].
	\end{equation}
\end{lemma}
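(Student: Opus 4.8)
The plan is to establish the statement through the standard construction of the stochastic integral against a compensated random measure: first prove the mean-zero property, the isometry, and the martingale property for elementary integrands, and then extend to general $f$ by an $\mathrm{L}^2$-density argument. Throughout I would work with the natural filtration $\mathcal{F}_s:=\sigma\bigl(N(B):B\subseteq[0,s]\times\R^d\bigr)\vee\sigma(\calN_{\mathbb{P}})$ generated by the Poisson random measure.

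First I would treat simple functions $f=\sum_{i=1}^k c_i\mathbbm{1}_{A_i}$ with pairwise disjoint Borel sets $A_i\subseteq[0,T]\times\R^d$ of finite $\nu$-measure, for which
\[ X_t=\sum_{i=1}^k c_i\,\tilde{N}\bigl(A_i\cap([0,t]\times\R^d)\bigr). \]
By the defining properties of the Poisson random measure, each $N\bigl(A_i\cap([0,t]\times\R^d)\bigr)$ is Poisson distributed with parameter $\nu\bigl(A_i\cap([0,t]\times\R^d)\bigr)$, so the corresponding compensated term has mean zero and variance equal to that parameter; disjointness of the $A_i$ renders these terms independent. Hence $\E[X_t]=0$ and
\[ \var(X_t)=\sum_{i=1}^k c_i^2\,\nu\bigl(A_i\cap([0,t]\times\R^d)\bigr)=\int_0^t\int_{\R^d}f^2(s,u)\,\lambda\varphi(u)\,\rmd u\,\rmd s, \]
where the last equality uses the form of $\nu$ from Proposition~\ref{prop:properties_Poisson_random_measure}. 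The outer expectation in the claimed formula is trivial here since $f$ is deterministic.

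Next I would verify the martingale property for these elementary integrands. For $0\le s<t$ the increment $X_t-X_s=\sum_{i=1}^k c_i\,\tilde{N}\bigl(A_i\cap((s,t]\times\R^d)\bigr)$ is measurable with respect to the restriction of $N$ to $(s,t]\times\R^d$, which is independent of $\mathcal{F}_s$ by the independence of $N$ on disjoint sets. As the increment has mean zero, $\E[X_t-X_s\mid\mathcal{F}_s]=\E[X_t-X_s]=0$, so $(X_t)_{t\ge0}$ is an $(\mathcal{F}_t)$-martingale.

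Finally I would pass to a general integrand, interpreting the integrability hypothesis as $f\in\mathrm{L}^2(\nu)$. Choosing simple functions $f_m\to f$ in $\mathrm{L}^2(\nu)$, the isometry from the first step shows that the associated integrals $X_t^{(m)}$ form a Cauchy sequence in $\mathrm{L}^2(\Omega)$ and hence converge in $\mathrm{L}^2$ to $X_t$. The mean-zero property and the isometry carry over by continuity of the $\mathrm{L}^2$-norm, and the martingale identity $\E[X_t^{(m)}\mid\mathcal{F}_s]=X_s^{(m)}$ survives the limit because conditional expectation is an $\mathrm{L}^2$-contraction. I expect this last step to be the main obstacle: one has to confirm that simple functions are dense in $\mathrm{L}^2(\nu)$ and, more delicately, that the $\mathrm{L}^2$-limit of the martingales $X^{(m)}$ is again a martingale, which is precisely where the contraction property of conditional expectation is needed.
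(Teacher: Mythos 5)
Your proof is correct, but it takes a genuinely different route from the paper: the paper does not prove this lemma at all, it quotes it directly as Proposition~2.16 of Cont and Tankov~\cite{cont_tankov_2004}, so its ``proof'' is a citation. What you have written is the standard self-contained construction underlying that cited result: mean zero, isometry and the martingale property for simple integrands --- where everything reduces to the facts that $N(B)$ is Poisson with parameter $\nu(B)$ and that evaluations of $N$ on disjoint sets are independent --- followed by an $\mathrm{L}^2(\nu)$-density extension. Two routine details are left implicit and would each need a line in a full write-up: upgrading independence of $N$ on finitely many disjoint sets to independence of the generated $\sigma$-algebras (a monotone-class argument), which is what your claim ``$X_t-X_s$ is independent of $\mathcal{F}_s$'' really uses; and, since the intensity measure here is finite ($\nu([0,T]\times\R^d)=\lambda T<\infty$), the integral also has a pathwise meaning as a finite sum over the atoms of $N$, so one should check that your $\mathrm{L}^2$-limit coincides with it, which follows from $\E\bigl[\int_0^t\int_{\R^d}|f_m-f|\,N(\rmd s,\rmd u)\bigr]=\int_0^t\int_{\R^d}|f_m-f|\,\rmd\nu\to 0$. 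Your observation that the outer expectation in the variance formula is vacuous for deterministic $f$ is accurate and exposes a real subtlety: in the proof of Theorem~\ref{thm:q_C_lambda_goes_to_q_D} the lemma is applied with a random left-limit integrand involving $\gam{Z,\lambda}{s-}$, a predictable-integrand case that neither the statement as literally written nor your construction covers, and which would require the further standard extension of the integral to predictable integrands. So the trade-off is clear: the citation buys brevity, while your argument buys transparency about exactly which hypotheses (deterministic $f\in\mathrm{L}^2(\nu)$) are used and where the statement stops short of the form in which it is actually invoked.
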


\section{Proofs for Deterministic Information Dates}\label{app:long_proofs_fixed_times}

\subsection{Proof of Theorem \ref{thm:q_C_n_goes_to_q_D}: Convergence of Covariance Matrices}

Throughout the proof, we omit the superscript $n$ at information dates $t_k^{(n)}$ for the sake of better readability, keeping the dependence on $n$ in mind.
The proof is based on finding a recursive formula for the distance between $\gam{Z,n}{t_k-}$ and $\gam{J}{t_k}$ where we make use of an Euler approximation of $\gam{J}{}$.

\paragraph{Euler scheme approximation of $\boldsymbol{\gam{J}{}}$.}
Recall the dynamics of $\gam{J}{}$ from Lemma~\ref{lem:filter_D_dynamics}.
To shorten notation, let $G\colon\R^{d\times d}\to\R^{d\times d}$ with
\[ G(\gam{}{})=-\alpha \gam{}{}-\gam{}{}\alpha+\beta\beta^\transp-\gam{}{}\bigl((\sigma_R\sigma_R^\transp)^{-1}+(\sigma_J\sigma_J^\transp)^{-1}\bigr)\gam{}{} \]
denote the right-hand side of the differential equation~\eqref{eq:Riccati_ode_D}. Then~\eqref{eq:Riccati_ode_D} reads as
\[ \ddt \gam{J}{t} = G(\gam{J}{t}). \]
The first step is to approximate $\gam{J}{}$ by an Euler scheme. Therefore, define $\gam{J,n}{}$ by setting
\begin{equation}\label{eq:definition_q_D_n}
	\gam{J,n}{t} := \gam{J}{t_k}+G(\gam{J}{t_k})(t-t_k)
\end{equation}
for all $t\in[t_k,t_{k+1})$. From a Taylor expansion we get that
\[ \gam{J}{t}=\gam{J}{t_k}+G(\gam{J}{t_k})(t-t_k)+\xi_t(t-t_k)^2 \]
where $\xi$ is a matrix-valued function involving the second derivative of $\gam{J}{t}$. Since $\gam{J}{}$ and its derivatives are bounded on $[0,T]$, see Lemma~\ref{lem:boundedness_of_covariances}, the matrices $\xi_t$ are bounded, hence the local truncation error is proportional to $\Delta_n^2$. In other words, there exists some $C_{\text{Euler}}>0$ such that
\begin{equation}\label{eq:Euler_approximation}
	\bigl\lVert \gam{J}{t}-\gam{J,n}{t} \bigr\rVert \leq C_{\text{Euler}}\Delta_n^2
\end{equation}
for all $t\in[0,T]$.

\paragraph{Estimation of the error in $\boldsymbol{G}$.}
Let $C_e$, $C_{\gam{}{}}>0$ and let $\varepsilon\in\R^{d\times d}$ with $\lVert\varepsilon\rVert\leq C_e$, $\gam{}{}\in\R^{d\times d}$ with $\lVert \gam{}{}\rVert\leq C_{\gam{}{}}$. Then
\begin{equation*}
	\begin{aligned}
		G(\gam{}{}+\varepsilon) &= -\alpha(\gam{}{}+\varepsilon)-(\gam{}{}+\varepsilon)\alpha+\beta\beta^\transp-(\gam{}{}+\varepsilon)\bigl((\sigma_R\sigma_R^\transp)^{-1}+(\sigma_J\sigma_J^\transp)^{-1}\bigr)(\gam{}{}+\varepsilon) \\
		&= G(\gam{}{})-\varepsilon\bigl(\alpha+\bigl((\sigma_R\sigma_R^\transp)^{-1}+(\sigma_J\sigma_J^\transp)^{-1}\bigr)(\gam{}{}+\varepsilon)\bigr)\\
		&\quad- \bigl(\alpha+\gam{}{}\bigl((\sigma_R\sigma_R^\transp)^{-1}+(\sigma_J\sigma_J^\transp)^{-1}\bigr)\bigr)\varepsilon.
	\end{aligned}
\end{equation*}
Hence,
\[ \lVert G(\gam{}{}+\varepsilon)-G(\gam{}{})\rVert \leq \lVert\varepsilon\rVert \bigl(2\lVert\alpha\rVert+\lVert(\sigma_R\sigma_R^\transp)^{-1}+(\sigma_J\sigma_J^\transp)^{-1}\rVert(2\lVert \gam{}{}\rVert + \lVert\varepsilon\rVert)\bigr). \]
This implies that there exists a constant $C_G>0$ such that for all $\varepsilon$, $\gam{}{}\in\R^{d\times d}$ with $\lVert\varepsilon\rVert\leq C_e$ and $\lVert \gam{}{}\rVert\leq C_{\gam{}{}}$ it holds
\begin{equation}\label{eq:estimation_error_in_G}
	\lVert G(\gam{}{}+\varepsilon)-G(\gam{}{})\rVert \leq C_G\lVert\varepsilon\rVert.
\end{equation}

\paragraph{Dynamics of $\boldsymbol{\gam{Z,n}{}}$.}
Next, we take a look at the dynamics of $\gam{Z,n}{}$, i.e.\ of the covariance matrix corresponding to the investor who observes the stock returns and the opinions of the discrete expert. We know that at information dates $t_k$, $k=1, \dots, n$, we have the update formula
\[ \gam{Z,n}{t_k} = \Gamma^{(n)}\bigl(\gam{Z,n}{t_k-}+\Gamma^{(n)}\bigr)^{-1}\gam{Z,n}{t_k-}. \]
Observe that
\[ \Gamma^{(n)}\bigl(\gam{Z,n}{t_k-}+\Gamma^{(n)}\bigr)^{-1} = \bigl(I_d+\gam{Z,n}{t_k-}(\Gamma^{(n)})^{-1}\bigr)^{-1} = \bigl(I_d+\Delta_n \gam{Z,n}{t_k-}(\sigma_J\sigma_J^\transp)^{-1}\bigr)^{-1} \]
which can be written as the Neumann series
\[ \sum_{i=0}^{\infty} \bigl(-\Delta_n \gam{Z,n}{t_k-} (\sigma_J\sigma_J^\transp)^{-1}\bigr)^i = I_d-\Delta_n \gam{Z,n}{t_k-}(\sigma_J\sigma_J^\transp)^{-1}+\sum_{i=2}^{\infty} \bigl(-\Delta_n \gam{Z,n}{t_k-} (\sigma_J\sigma_J^\transp)^{-1}\bigr)^i. \]
It follows that
\begin{equation}\label{eq:update_of_q_C_n}
	\gam{Z,n}{t_k} = \gam{Z,n}{t_k-}-\Delta_n\gam{Z,n}{t_k-}(\sigma_J\sigma_J^\transp)^{-1}\gam{Z,n}{t_k-}+\bar{R}^n
\end{equation}
where $\lVert \bar{R}^n\rVert\leq r\Delta_n^2$, since $\gam{Z,n}{t_k-}$ is bounded.
Between information dates, the matrix $\gam{Z,n}{}$ follows the dynamics
\[ \ddt \gam{Z,n}{t} = -\alpha\gam{Z,n}{t}-\gam{Z,n}{t}\alpha+\beta\beta^\transp-\gam{Z,n}{t}(\sigma_R\sigma_R^\transp)^{-1}\gam{Z,n}{t} \]
for $t\in[t_k,t_{k+1})$.

\paragraph{One time step for $\boldsymbol{\gam{Z,n}{}}$.}
In the following, we construct a formula that connects $\gam{Z,n}{t_{k+1}-}$ with $\gam{Z,n}{t_k-}$. Firstly, by making a Taylor expansion we see that
\[ \gam{Z,n}{t_{k+1}-} = \gam{Z,n}{t_k} + \bigl(-\alpha\gam{Z,n}{t_k}-\gam{Z,n}{t_k}\alpha+\beta\beta^\transp-\gam{Z,n}{t_k}(\sigma_R\sigma_R^\transp)^{-1}\gam{Z,n}{t_k}\bigr)\Delta_n+L^n, \]
where $\lVert L^n\rVert\leq C_L\Delta_n^2$.
Now, when inserting the representation of $\gam{Z,n}{t_k}$ from~\eqref{eq:update_of_q_C_n} and rearranging terms we can conclude that
\begin{equation}\label{eq:one_time_step_q_C_n}
	\gam{Z,n}{t_{k+1}-} = \gam{Z,n}{t_k-}+\Delta_n G(\gam{Z,n}{t_k-})+R^n,
\end{equation}
where $R^n$ is a matrix with $\lVert R^n\rVert\leq C_{\text{Taylor}}\Delta_n^2$ for $C_{\text{Taylor}}>0$.

\paragraph{Recursive formula for estimation error.}
For $k=0, \dots, n$, define $A_k=\gam{Z,n}{t_k-}-\gam{J}{t_k}$ and $a_k=\lVert A_k\rVert$. Our aim is to find a recursive formula that yields an upper bound for these estimation errors. Let $k\geq 0$. Then we have by~\eqref{eq:one_time_step_q_C_n} that
\begin{equation*}
	a_{k+1} = \lVert A_{k+1}\rVert = \lVert\gam{Z,n}{t_{k+1}-}-\gam{J}{t_{k+1}}\rVert = \lVert\gam{Z,n}{t_k-}+\Delta_n G(\gam{Z,n}{t_k-})+R^n-\gam{J}{t_{k+1}}\rVert.
\end{equation*}
Thus, by definition of $A_k$ and $\gam{J,n}{}$ as given in~\eqref{eq:definition_q_D_n},
\begin{equation*}
	\begin{aligned}
		a_{k+1} &= \lVert (\gam{J}{t_k}+A_k) + \Delta_n G(\gam{J}{t_k}+A_k) +R^n -\gam{J}{t_{k+1}}\rVert \\
		&= \lVert \gam{J}{t_k} + \Delta_n\bigl(G(\gam{J}{t_k})+G(\gam{J}{t_k}+A_k)-G(\gam{J}{t_k})\bigr) +A_k+R^n-\gam{J}{t_{k+1}}\rVert \\
		&= \lVert \gam{J,n}{t_{k+1}-} + \Delta_n\bigl(G(\gam{J}{t_k}+A_k)-G(\gam{J}{t_k})\bigr)+A_k+R^n-\gam{J}{t_{k+1}}\rVert.
	\end{aligned}
\end{equation*}
Now, the estimations from~\eqref{eq:Euler_approximation}, \eqref{eq:estimation_error_in_G} and~\eqref{eq:one_time_step_q_C_n} yield
\begin{equation*}
	a_{k+1} \leq C_{\text{Euler}}\Delta_n^2+\Delta_n C_G\lVert A_k\rVert + \lVert A_k\rVert + C_{\text{Taylor}}\Delta_n^2 =(1+\Delta_n C_G)a_k +(C_{\text{Euler}}+C_{\text{Taylor}})\Delta_n^2.
\end{equation*}
By a discrete version of Gronwall's Lemma, see Lemma~\ref{lem:discrete_gronwall} in the appendix, this implies
\begin{equation*}
	a_k \leq \frac{\rme^{C_G k\Delta_n}-1}{C_G}(C_{\text{Euler}}+C_{\text{Taylor}})\Delta_n
	\leq\frac{\rme^{C_G T}-1}{C_G}(C_{\text{Euler}}+C_{\text{Taylor}})\Delta_n
	=:\tilde{C}\Delta_n.
\end{equation*}
Therefore, for all $k=0, \dots, n$ we have
\begin{equation}\label{eq:inequality_at_time_points}
	\lVert \gam{Z,n}{t_k-}-\gam{J}{t_k} \rVert \leq \tilde{C}\Delta_n.
\end{equation}

\paragraph{Difference of $\boldsymbol{\gam{Z,n}{t}}$ and $\boldsymbol{\gam{J}{t}}$ for arbitrary $\boldsymbol{t}$.}
We now show that there exists some $K_Q>0$ such that $\lVert \gam{Z,n}{t}-\gam{J}{t} \rVert \leq K_Q\Delta_n$ for all $t\in[0,T]$. Let $t\in[0,T]$ with $t\in[t_k,t_{k+1})$. We can write
\[ \gam{Z,n}{t}-\gam{J}{t} = (\gam{Z,n}{t}-\gam{Z,n}{t_k-})+(\gam{Z,n}{t_k-}-\gam{J}{t_k})+(\gam{J}{t_k}-\gam{J}{t}), \]
and hence
\[ \lVert\gam{Z,n}{t}-\gam{J}{t}\rVert \leq \lVert\gam{Z,n}{t}-\gam{Z,n}{t_k-}\rVert+\lVert\gam{Z,n}{t_k-}-\gam{J}{t_k}\rVert+\lVert\gam{J}{t_k}-\gam{J}{t}\rVert. \]
By~\eqref{eq:inequality_at_time_points}, the second summand is bounded by $\tilde{C}\Delta_n$.
We now take a look at the other two summands. By definition of $\gam{J,n}{}$ we can write the third summand as
\begin{equation*}
	\begin{aligned}
		\lVert\gam{J}{t_k}-\gam{J}{t}\rVert &= \lVert \gam{J,n}{t}-G(\gam{J}{t_k})(t-t_k)-\gam{J}{t} \rVert \\
		&\leq \lVert\gam{J,n}{t}-\gam{J}{t}\rVert+\Delta_n\lVert G(\gam{J}{t_k})\rVert \\
		&\leq C_{\text{Euler}}\Delta_n^2+\Delta_n\lVert G(\gam{J}{t_k})\rVert
	\end{aligned}
\end{equation*}
where the second inequality is due to~\eqref{eq:Euler_approximation}. Since $G$ and $\gam{J}{}$ are continuous, the function $t\mapsto \lVert G(\gam{J}{t})\rVert$ is bounded by some $\tilde{C}_G$ on $[0,T]$. Hence,
\[\lVert\gam{J}{t_k}-\gam{J}{t}\rVert \leq C_{\text{Euler}}\Delta_n^2+\tilde{C}_G\Delta_n. \]
For the first summand we observe that, like in~\eqref{eq:one_time_step_q_C_n}, we get the representation
\[ \lVert\gam{Z,n}{t}-\gam{Z,n}{t_k-}\rVert = \lVert (t-t_k)G(\gam{Z,n}{t_k-})+R^n\rVert \]
for some matrix $R^n$ with $\lVert R^n\rVert\leq C_{\text{Taylor}}(t-t_k)^2$. Then the right-hand side is bounded by $\Delta_n\lVert G(\gam{Z,n}{t_k-})\rVert+C_{\text{Taylor}}\Delta_n^2$.
Also, we have
\begin{equation*}
	\begin{aligned}
		\lVert G(\gam{Z,n}{t_k-})\rVert &= \lVert G(\gam{J}{t_k}+\gam{Z,n}{t_k-}-\gam{J}{t_k})\rVert \leq \lVert G(\gam{J}{t_k})\rVert+C_G\lVert\gam{Z,n}{t_k-}-\gam{J}{t_k}\rVert
	\end{aligned}
\end{equation*}
by~\eqref{eq:estimation_error_in_G}. Again by continuity, $\lVert G(\gam{J}{t_k})\rVert\leq \tilde{C}_G$, and $\lVert\gam{Z,n}{t_k-}-\gam{J}{t_k}\rVert\leq\tilde{C}\Delta_n$ by~\eqref{eq:inequality_at_time_points}.

Putting these results together we obtain that there exists a constant $K_Q>0$ such that
\[ \lVert\gam{Z,n}{t}-\gam{J}{t}\rVert \leq K_Q\Delta_n \]
for all $t\in[0,T]$. \qed

\subsection{Proof of Theorem~\ref{thm:muhat_C_n_goes_to_muhat_D}: Convergence of Conditional Means}

We first prove the claim for $p=2$. We omit the superscript $n$ at information dates $t_k^{(n)}$ for the sake of better readability. The idea of the proof is to find a recursion for
\[ \E\Bigl[\bigl\lVert \muhat{Z,n}{t_k-}-\muhat{J}{t_k}\bigr\rVert^2\Bigr] \]
and to apply the discrete version of Gronwall's Lemma from Lemma~\ref{lem:discrete_gronwall} to derive an appropriate upper bound.

For the proof we introduce the notation
\[ L^{(n)}_k:=\gam{Z,n}{t_k-}\bigl(\gam{Z,n}{t_k-}+\Gamma^{(n)}\bigr)^{-1}\Gamma^{(n)} \]
for $k=1,\dots,n$. Then Lemma~\ref{lem:estimation_lemma} in particular implies that
\[ \lVert\gam{Z,n}{t_k-}-L^{(n)}_k\rVert\leq\bar{C}\Delta_n \]
for some constant $\bar{C}>0$.

\paragraph{Recursive formulas for $\boldsymbol{\muhat{J}{}}$ and $\boldsymbol{\muhat{Z,n}{}}$.}
The representation of $\muhat{J}{}$ via the stochastic differential equation in Lemma~\ref{lem:filter_D_dynamics} yields the recursion
\begin{equation}\label{eq:first_recursion_muhat_D}
	\begin{aligned}
		\muhat{J}{t_{k+1}} &= \rme^{-\alpha\Delta_n}\muhat{J}{t_k} + (I_d-\rme^{-\alpha\Delta_n})\delta + \int_{t_k}^{t_{k+1}} \rme^{-\alpha(t_{k+1}-s)}\gam{J}{s}(\sigma_R\sigma_R^\transp)^{-1}\sigma_R\,\rmd V^{J,1}_s \\
		&\quad+ \int_{t_k}^{t_{k+1}} \rme^{-\alpha(t_{k+1}-s)}\gam{J}{s}(\sigma_J\sigma_J^\transp)^{-1}\sigma_J\,\rmd V^{J,2}_s,
	\end{aligned}
\end{equation}
where
\begin{equation*}
	\begin{aligned}
		\sigma_R\,\rmd V^{J,1}_t &= \rmd R_t-\muhat{J}{t}\,\rmd t,\\
		\sigma_J\,\rmd V^{J,2}_t &= \rmd J_t-\muhat{J}{t}\,\rmd t,
	\end{aligned}
\end{equation*}
and $V^J=(V^{J,1},V^{J,2})^\transp$, the innovation process corresponding to the investor filtration $\calF^J$, is an $(m+l)$-dimensional $\calF^J$-Brownian motion.
Similarly, we get for the conditional mean $\muhat{Z,n}{}$ the recursion
\begin{equation}\label{eq:first_recursion_muhat_C_N}
	\muhat{Z,n}{t_{k+1}-} = \rme^{-\alpha\Delta_n}\muhat{Z,n}{t_k} + (I_d-\rme^{-\alpha\Delta_n})\delta + \int_{t_k}^{t_{k+1}} \rme^{-\alpha(t_{k+1}-s)}\gam{Z,n}{s}(\sigma_R\sigma_R^\transp)^{-1}\sigma_R\,\rmd V^Z_s,
\end{equation}
where
\[ \sigma_R\,\rmd V^Z_t=\rmd R_t-\muhat{Z,n}{t}\,\rmd t, \]
and $V^Z$, the innovation process corresponding to investor filtration $\calF^{Z,n}$, is an $m$-dimensional $\calF^{Z,n}$-Brownian motion. Furthermore, the update formula for $\muhat{Z,n}{}$ yields
\begin{equation}\label{eq:update_step_muhat_C_N}
	\begin{aligned}
		\muhat{Z,n}{t_k} &= \muhat{Z,n}{t_k-}+\bigl(I_d-\Gamma^{(n)}\bigl(\gam{Z,n}{t_k-}+\Gamma^{(n)}\bigr)^{-1}\bigr)\bigl(Z^{(n)}_k-\muhat{Z,n}{t_k-}\bigr) \\
		&= \muhat{Z,n}{t_k-}+\gam{Z,n}{t_k-}\bigl(\gam{Z,n}{t_k-}+\Gamma^{(n)}\bigr)^{-1}\biggl(\mu_{t_k}+\frac{1}{\Delta_n}\sigma_J\int_{t_k}^{t_{k+1}} \rmd W^J_s-\muhat{Z,n}{t_k-}\biggr) \\
		&= \muhat{Z,n}{t_k-}+\Delta_nL^{(n)}_k(\sigma_J\sigma_J^\transp)^{-1}\biggl(\mu_{t_k}+\frac{1}{\Delta_n}\sigma_J\int_{t_k}^{t_{k+1}} \rmd W^J_s-\muhat{Z,n}{t_k-}\biggr).
	\end{aligned}
\end{equation}
When looking at the difference between $\muhat{J}{}$ and $\muhat{Z,n}{}$ it is convenient to work with representations that use the same Brownian motions.

\paragraph{Relation between the innovation processes.}
Note that
\[ \sigma_R\,\rmd V^{J,1}_t = \rmd R_t-\muhat{J}{t}\,\rmd t = \sigma_R\,\rmd V^Z_t+(\muhat{Z,n}{t}-\muhat{J}{t})\,\rmd t \]
and
\[ \sigma_J\,\rmd V^{J,2}_t = \rmd J_t-\muhat{J}{t}\,\rmd t = \sigma_J\,\rmd W^J_t+(\mu_t-\muhat{J}{t})\,\rmd t. \]
Using this connection between the innovation processes, we obtain from~\eqref{eq:first_recursion_muhat_D} that
\begin{equation}\label{eq:second_recursion_muhat_D}
	\begin{aligned}
		\muhat{J}{t_{k+1}} &= \rme^{-\alpha\Delta_n}\muhat{J}{t_k} + (I_d-\rme^{-\alpha\Delta_n})\delta + \int_{t_k}^{t_{k+1}} \rme^{-\alpha(t_{k+1}-s)}\gam{J}{s}(\sigma_R\sigma_R^\transp)^{-1}\sigma_R\,\rmd V^Z_s \\
		&\quad+\int_{t_k}^{t_{k+1}} \rme^{-\alpha(t_{k+1}-s)}\gam{J}{s}(\sigma_R\sigma_R^\transp)^{-1}(\muhat{Z,n}{s}-\muhat{J}{s})\,\rmd s \\
		&\quad+ \int_{t_k}^{t_{k+1}} \rme^{-\alpha(t_{k+1}-s)}\gam{J}{s}(\sigma_J\sigma_J^\transp)^{-1}\sigma_J\,\rmd W^J_s \\
		&\quad+ \int_{t_k}^{t_{k+1}} \rme^{-\alpha(t_{k+1}-s)}\gam{J}{s}(\sigma_J\sigma_J^\transp)^{-1}(\mu_s-\muhat{J}{s})\,\rmd s.
	\end{aligned}
\end{equation}
Also, plugging~\eqref{eq:update_step_muhat_C_N} into~\eqref{eq:first_recursion_muhat_C_N} yields
\begin{equation}\label{eq:second_recursion_muhat_C_N}
	\begin{aligned}
		\muhat{Z,n}{t_{k+1}-} &= \rme^{-\alpha\Delta_n}\muhat{Z,n}{t_k-} + (I_d-\rme^{-\alpha\Delta_n})\delta + \int_{t_k}^{t_{k+1}} \rme^{-\alpha(t_{k+1}-s)}\gam{Z,n}{s}(\sigma_R\sigma_R^\transp)^{-1}\sigma_R\,\rmd V^Z_s \\
		&\quad +\rme^{-\alpha\Delta_n}L^{(n)}_k(\sigma_J\sigma_J^\transp)^{-1}\sigma_J\int_{t_k}^{t_{k+1}}\rmd W^J_s + \rme^{-\alpha\Delta_n}\Delta_nL^{(n)}_k(\sigma_J\sigma_J^\transp)^{-1}(\mu_{t_k}-\muhat{Z,n}{t_k-}).
	\end{aligned}
\end{equation}

\paragraph{Splitting the difference of $\boldsymbol{\muhat{J}{}}$ and $\boldsymbol{\muhat{Z,n}{}}$ into summands.}
Combining~\eqref{eq:second_recursion_muhat_D} with the above representation of $\muhat{Z,n}{t_{k+1}-}$ yields after a slight rearrangement of terms
\[ \muhat{J}{t_{k+1}}-\muhat{Z,n}{t_{k+1}-} = A^n+B^n+C^n+D^n+E^n+F^n, \]
where
\begingroup
\allowdisplaybreaks
	\begin{align*}
		A^n &= \rme^{-\alpha\Delta_n}(\muhat{J}{t_k}-\muhat{Z,n}{t_k-}), \\
		B^n &= \int_{t_k}^{t_{k+1}} \rme^{-\alpha(t_{k+1}-s)}(\gam{J}{s}-\gam{Z,n}{s})(\sigma_R\sigma_R^\transp)^{-1}\sigma_R\,\rmd V^Z_s, \\
		C^n &= \int_{t_k}^{t_{k+1}} \rme^{-\alpha(t_{k+1}-s)}\gam{J}{s}(\sigma_R\sigma_R^\transp)^{-1}(\muhat{Z,n}{s}-\muhat{J}{s})\,\rmd s, \\
		D^n &= \int_{t_k}^{t_{k+1}} \Bigl(\rme^{-\alpha(t_{k+1}-s)}\gam{J}{s}(\sigma_J\sigma_J^\transp)^{-1}-\rme^{-\alpha\Delta_n}L^{(n)}_k(\sigma_J\sigma_J^\transp)^{-1}\Bigr)\sigma_J\,\rmd W^J_s, \\
		E^n &= \int_{t_k}^{t_{k+1}} \rme^{-\alpha(t_{k+1}-s)}\gam{J}{s}(\sigma_J\sigma_J^\transp)^{-1}\mu_s\,\rmd s -\rme^{-\alpha\Delta_n}\Delta_nL^{(n)}_k(\sigma_J\sigma_J^\transp)^{-1}\mu_{t_k}, \\
		F^n &= \rme^{-\alpha\Delta_n}\Delta_nL^{(n)}_k(\sigma_J\sigma_J^\transp)^{-1}\muhat{Z,n}{t_k-}-\int_{t_k}^{t_{k+1}}\rme^{-\alpha(t_{k+1}-s)}\gam{J}{s}(\sigma_J\sigma_J^\transp)^{-1}\muhat{J}{s}\,\rmd s.
	\end{align*}
\endgroup

\paragraph{Application of the discrete Gronwall Lemma.}
The idea is now to apply the discrete Gronwall Lemma from Lemma~\ref{lem:discrete_gronwall} to the estimation
\begin{equation}\label{eq:estimation_for_gronwall}
	\begin{aligned}
		&\E\Bigl[\bigl\lVert \muhat{J}{t_{k+1}}-\muhat{Z,n}{t_{k+1}-}\bigr\rVert^2\Bigr] = \E\Bigl[\bigl\lVert A^n+B^n+C^n+D^n+E^n+F^n\bigr\rVert^2\Bigr] \\
		&\leq \E\Bigl[\bigl\lVert A^n\bigr\rVert^2\Bigr]+5\E\Bigl[\bigl\lVert B^n\bigr\rVert^2+\bigl\lVert C^n\bigr\rVert^2+\bigl\lVert D^n\bigr\rVert^2+\bigl\lVert E^n\bigr\rVert^2+\bigl\lVert F^n\bigr\rVert^2\Bigr]\\
		&\quad+2\E\Bigl[(A^n)^\transp(E^n+F^n)\Bigr].
	\end{aligned}
\end{equation}
In the inequality we have used that $(a_1+\dots+ a_p)^2\leq p(a_1^2+\dots+a_p^2)$, and the fact that $B^n+C^n+D^n$ can be written as a sum of stochastic integrals over Brownian motions between $t_k$ and $t_{k+1}$. Since $A^n=\rme^{-\alpha\Delta_n}(\muhat{J}{t_k}-\muhat{Z,n}{t_k-})$ is independent of these stochastic integrals, the term $\E[(A^n)^\transp(B^n+C^n+D^n)]$ vanishes.

\paragraph{Finding upper estimates for the single summands.}
We now show how to find upper estimates for the single summands in the decomposition above. First of all,
\[ \E\Bigl[\bigl\lVert A^n\bigr\rVert^2\Bigr]=\E\Bigl[\bigl\lVert \rme^{-\alpha\Delta_n}(\muhat{J}{t_k}-\muhat{Z,n}{t_k-})\bigr\rVert^2\Bigr]\leq \E\Bigl[\bigl\lVert \muhat{J}{t_k}-\muhat{Z,n}{t_k-}\bigr\rVert^2\Bigr] \]
by properties of the spectral norm and positive definiteness of $\alpha$.
By using the multidimensional It\^{o} isometry from Lemma~\ref{lem:ito_isometry_multidimensional} we deduce
\begin{equation*}
	\begin{aligned}
		\E\Bigl[\bigl\lVert B^n\bigr\rVert^2\Bigr] &\leq C_\textrm{norm} \E\biggl[\int_{t_k}^{t_{k+1}}\lVert \rme^{-\alpha(t_{k+1}-s)}(\gam{J}{s}-\gam{Z,n}{s})(\sigma_R\sigma_R^\transp)^{-1}\sigma_R\rVert^2\,\rmd s\biggr] \\
		&\leq C_\textrm{norm}\lVert(\sigma_R\sigma_R^\transp)^{-1}\sigma_R\rVert^2\int_{t_k}^{t_{k+1}}\lVert \gam{J}{s}-\gam{Z,n}{s}\rVert^2\,\rmd s \\
		&\leq C_\textrm{norm}\lVert(\sigma_R\sigma_R^\transp)^{-1}\sigma_R\rVert^2\int_{t_k}^{t_{k+1}}(K_Q\Delta_n)^2\,\rmd s =: C_B \Delta_n^3.
	\end{aligned}
\end{equation*}
Note that $\lVert \gam{J}{s}-\gam{Z,n}{s}\rVert\leq K_Q\Delta_n$ by Theorem~\ref{thm:q_C_n_goes_to_q_D}.

Now for the term $C^n$ we use the Cauchy--Schwarz inequality from Lemma~\ref{lem:cauchy_schwarz_multidimensional} to see that
\begin{equation*}
	\begin{aligned}
		\E\Bigl[\bigl\lVert C^n\bigr\rVert^2\Bigr] &= \E\biggl[ \Bigl\lVert \int_{t_k}^{t_{k+1}} \rme^{-\alpha(t_{k+1}-s)}\gam{J}{s}(\sigma_R\sigma_R^\transp)^{-1}(\muhat{Z,n}{s}-\muhat{J}{s})\,\rmd s \Bigr\rVert^2 \biggr] \\
		&\leq \Delta_n\int_{t_k}^{t_{k+1}} \E\Bigl[\bigl\lVert \rme^{-\alpha(t_{k+1}-s)}\gam{J}{s}(\sigma_R\sigma_R^\transp)^{-1}(\muhat{Z,n}{s}-\muhat{J}{s}) \bigr\rVert^2\Bigr]\,\rmd s \\
		&\leq \Delta_n C_{\gam{}{}}^2\lVert(\sigma_R\sigma_R^\transp)^{-1}\rVert^2 \int_{t_k}^{t_{k+1}} \E\Bigl[\bigl\lVert \muhat{Z,n}{s}-\muhat{J}{s} \bigr\rVert^2\Bigr]\,\rmd s.
	\end{aligned}
\end{equation*}
We then apply the mean value theorem for estimating the integral to see that
\begin{equation*}
	\begin{aligned}
		\int_{t_k}^{t_{k+1}} &\E\Bigl[\bigl\lVert \muhat{Z,n}{s}-\muhat{J}{s} \bigr\rVert^2\Bigr]\,\rmd s \leq \Delta_n\E\Bigl[\bigl\lVert \muhat{Z,n}{t_k}-\muhat{J}{t_k} \bigr\rVert^2\Bigr]+C_\textrm{mvt}\Delta_n^2 \\
		&\leq \Delta_n\Bigl(2\E\Bigl[\bigl\lVert \muhat{Z,n}{t_k-}-\muhat{J}{t_k} \bigr\rVert^2\Bigr] + 2\E\Bigl[\bigl\lVert \muhat{Z,n}{t_k}-\muhat{Z,n}{t_k-} \bigr\rVert^2\Bigr] \Bigr)+C_\textrm{mvt}\Delta_n^2.
	\end{aligned}
\end{equation*}
The jump size of $\muhat{Z,n}{}$ at an information date is bounded, hence all in all we obtain
\[ \E\Bigl[\bigl\lVert C^n\bigr\rVert^2\Bigr]\leq C_{C,1}\Delta_n^2\E\Bigl[\bigl\lVert \muhat{Z,n}{t_k-}-\muhat{J}{t_k} \bigr\rVert^2\Bigr] + C_{C,2}\Delta_n^2 \]
for constants $C_{C,1}$, $C_{C,2}>0$.

For the term $D^n$ we use again the multidimensional It\^{o} isometry from Lemma~\ref{lem:ito_isometry_multidimensional} and get
\begin{equation*}
	\begin{aligned}
		\E\Bigl[\bigl\lVert D^n\bigr\rVert^2\Bigr] &\leq C_\textrm{norm}\E\biggl[ \int_{t_k}^{t_{k+1}} \bigl\lVert\bigl(\rme^{-\alpha(t_{k+1}-s)}\gam{J}{s}-\rme^{-\alpha\Delta_n}L^{(n)}_k\bigr)(\sigma_J\sigma_J^\transp)^{-1}\sigma_J\bigr\rVert^2\,\rmd s\biggr] \\
		&\leq C_\textrm{norm}\lVert(\sigma_J\sigma_J^\transp)^{-1}\sigma_J\rVert^2\int_{t_k}^{t_{k+1}} \bigl\lVert\rme^{-\alpha(t_{k+1}-s)}\gam{J}{s}-\rme^{-\alpha\Delta_n}L^{(n)}_k\bigr\rVert^2\,\rmd s.
	\end{aligned}
\end{equation*}
For the integral above we first use a mean value theorem argument and then Lemma~\ref{lem:estimation_lemma} for the estimation of $\lVert\gam{J}{t_k}-L^{(n)}_k\rVert^2$ to obtain
\begin{equation*}
	\begin{aligned}
		\int_{t_k}^{t_{k+1}} &\bigl\lVert\rme^{-\alpha(t_{k+1}-s)}\gam{J}{s}-\rme^{-\alpha\Delta_n}L^{(n)}_k\bigr\rVert^2\,\rmd s
		\leq   \Delta_n\bigl\lVert\rme^{-\alpha\Delta_n}\gam{J}{t_k}-\rme^{-\alpha\Delta_n}L^{(n)}_k\bigr\rVert^2 +C_\textrm{mvt}\Delta_n^2 \\
		&\leq \Delta_n\lVert\gam{J}{t_k}-L^{(n)}_k\rVert^2 +C_\textrm{mvt}\Delta_n^2 \leq 2\Delta_n  \bigl(\lVert\gam{J}{t_k}-\gam{Z,n}{t_k-}\rVert^2+\bar{C}^2\Delta_n^2\bigr)+C_\textrm{mvt}\Delta_n^2.
	\end{aligned}
\end{equation*}
Putting these estimations together yields the existence of a constant $C_D>0$ such that
\[ \E\Bigl[\bigl\lVert D^n\bigr\rVert^2\Bigr] \leq C_D\Delta_n^2. \]
By writing the next summand $E^n$ as one integral, we can again apply the Cauchy--Schwarz inequality from Lemma~\ref{lem:cauchy_schwarz_multidimensional} and get
\begin{equation*}
	\begin{aligned}
		\E\Bigl[\bigl\lVert E^n\bigr\rVert^2\Bigr] &= \E\biggl[\biggl\lVert \int_{t_k}^{t_{k+1}} \Bigl(\rme^{-\alpha(t_{k+1}-s)}\gam{J}{s}(\sigma_J\sigma_J^\transp)^{-1}\mu_s -\rme^{-\alpha\Delta_n}L^{(n)}_k(\sigma_J\sigma_J^\transp)^{-1}\mu_{t_k}\Bigr)\rmd s \biggr\rVert^2\biggr] \\
		&\leq \Delta_n \int_{t_k}^{t_{k+1}} \E\Bigl[\Bigl\lVert \rme^{-\alpha(t_{k+1}-s)}\gam{J}{s}(\sigma_J\sigma_J^\transp)^{-1}\mu_s - \rme^{-\alpha \Delta_n}L^{(n)}_k(\sigma_J\sigma_J^\transp)^{-1}\mu_{t_k} \Bigr\rVert^2\Bigr]\rmd s.
	\end{aligned}
\end{equation*}
When using again the mean value theorem and the same argumentation as before we see that the integral is bounded by
\begin{equation*}
	\begin{aligned}
		&\Delta_n \E\Bigl[\Bigl\lVert \rme^{-\alpha\Delta_n}\bigl(\gam{J}{t_k}-L^{(n)}_k\bigr)(\sigma_J\sigma_J^\transp)^{-1}\mu_{t_k} \Bigr\rVert^2\Bigr] +C_\textrm{mvt}\Delta_n^2 \\
		&\leq \Delta_n\bigl\lVert\gam{J}{t_k}-L^{(n)}_k\bigr\rVert^2 \lVert(\sigma_J\sigma_J^\transp)^{-1}\rVert^2 \E[\lVert\mu_{t_k}\rVert^2] + C_\textrm{mvt}\Delta_n^2 \\
		&\leq \Delta_n C_{\mu}\lVert(\sigma_J\sigma_J^\transp)^{-1}\rVert^2\Bigl(2\lVert\gam{J}{t_k}-\gam{Z,n}{t_k-}\rVert^2+2\bar{C}^2\Delta_n^2\Bigr) + C_\textrm{mvt}\Delta_n^2.
	\end{aligned}
\end{equation*}
In conclusion, we have a constant $C_E>0$ with
\[ \E\Bigl[\bigl\lVert E^n\bigr\rVert^2\Bigr] \leq C_E\Delta_n^3. \]
In a similar way, $F^n$ can be treated. By first writing $F^n$ as a single integral and applying the Cauchy--Schwarz inequality from Lemma~\ref{lem:cauchy_schwarz_multidimensional} as well as the mean value theorem we get
\begin{equation*}
	\begin{aligned}
		\E\Bigl[\bigl\lVert F^n\bigr\rVert^2\Bigr] &= \E\biggl[\biggl\lVert \int_{t_k}^{t_{k+1}} \Bigl(\rme^{-\alpha\Delta_n}L^{(n)}_k(\sigma_J\sigma_J^\transp)^{-1}\muhat{Z,n}{t_k-} - \rme^{-\alpha(t_{k+1}-s)}\gam{J}{s}(\sigma_J\sigma_J^\transp)^{-1}\muhat{J}{s}\Bigr) \rmd s \biggr\rVert^2\biggr] \\
		&\leq \Delta_n \int_{t_k}^{t_{k+1}} \E\Bigl[\Bigl\lVert \rme^{-\alpha\Delta_n}L^{(n)}_k(\sigma_J\sigma_J^\transp)^{-1}\muhat{Z,n}{t_k-} - \rme^{-\alpha(t_{k+1}-s)}\gam{J}{s}(\sigma_J\sigma_J^\transp)^{-1}\muhat{J}{s} \Bigr\rVert^2\Bigr]\rmd s \\
		&\leq \Delta_n^2 \E\Bigl[\Bigl\lVert \rme^{-\alpha\Delta_n} \Bigl( L^{(n)}_k(\sigma_J\sigma_J^\transp)^{-1}\muhat{Z,n}{t_k-} - \gam{J}{t_k}(\sigma_J\sigma_J^\transp)^{-1}\muhat{J}{t_k} \Bigr) \Bigr\rVert^2\Bigr] +C_\textrm{mvt}\Delta_n^3.
	\end{aligned}
\end{equation*}
The expectation above is bounded by
\begin{equation*}
	\begin{aligned}
		&2\E\bigl[\bigl\lVert \bigl(L^{(n)}_k-\gam{J}{t_k}\bigr)(\sigma_J\sigma_J^\transp)^{-1}\muhat{Z,n}{t_k-}\bigr\rVert^2\bigr] +2\E\bigl[\bigl\lVert\gam{J}{t_k}(\sigma_J\sigma_J^\transp)^{-1}\bigl(\muhat{Z,n}{t_k-} -\muhat{J}{t_k}\bigr) \bigr\rVert^2\bigr] \\
		&\leq 2 \lVert(\sigma_J\sigma_J^\transp)^{-1}\rVert^2\E\bigl[\lVert\muhat{Z,n}{t_k-}\rVert^2\bigr] \bigl\lVert L^{(n)}_k-\gam{J}{t_k}\bigr\rVert^2+2C_{\gam{}{}}^2\lVert(\sigma_J\sigma_J^\transp)^{-1}\rVert^2 \E\bigl[\lVert\muhat{Z,n}{t_k-}-\muhat{J}{t_k}\rVert^2\bigr].
	\end{aligned}
\end{equation*}
By the same reasons as in the calculations above we obtain all in all that there exist constants $C_{F,1}$ and $C_{F,2}>0$ such that
\[ \E\Bigl[\bigl\lVert F^n\bigr\rVert^2\Bigr] \leq C_{F,1}\Delta_n^2\E\bigl[\lVert\muhat{Z,n}{t_k-}-\muhat{J}{t_k}\rVert^2\bigr]+C_{F,2}\Delta_n^3. \]
We have now found upper bounds for all quadratic terms in~\eqref{eq:estimation_for_gronwall}. Only the mixed terms $(A^n)^\transp E^n$ and $(A^n)^\transp F^n$ remain to be considered.
Firstly, we again rewrite $E^n$ as one integral
\[ E^n=\int_{t_k}^{t_{k+1}} \Bigl(\rme^{-\alpha(t_{k+1}-s)}\gam{J}{s}(\sigma_J\sigma_J^\transp)^{-1}\mu_s -\rme^{-\alpha\Delta_n}L^{(n)}_k(\sigma_J\sigma_J^\transp)^{-1}\mu_{t_k}\Bigr)\rmd s. \]
We see that
\begin{equation*}
	\begin{aligned}
		& \E\bigl[(A^n)^\transp E^n\bigr] \\
		&= \int_{t_k}^{t_{k+1}} \E\bigl[(\muhat{J}{t_k}-\muhat{Z,n}{t_k-})^\transp\rme^{-\alpha\Delta_n}\bigl(\rme^{-\alpha(t_{k+1}-s)}\gam{J}{s}(\sigma_J\sigma_J^\transp)^{-1}\mu_s -\rme^{-\alpha\Delta_n}L^{(n)}_k(\sigma_J\sigma_J^\transp)^{-1}\mu_{t_k}\bigr)\bigr]\,\rmd s \\
		&= \int_{t_k}^{t_{k+1}} \E\bigl[(\muhat{J}{t_k}-\muhat{Z,n}{t_k-})^\transp\rme^{-\alpha\Delta_n}\rme^{-\alpha(t_{k+1}-s)}\gam{J}{s}(\sigma_J\sigma_J^\transp)^{-1}\mu_s\bigr]\,\rmd s \\
		&\qquad\qquad- \E\bigl[(\muhat{J}{t_k}-\muhat{Z,n}{t_k-})^\transp\rme^{-2\alpha\Delta_n} \Delta_nL^{(n)}_k(\sigma_J\sigma_J^\transp)^{-1}\mu_{t_k}\bigr].
	\end{aligned}
\end{equation*}
By using the mean value theorem and sublinearity of the spectral norm we obtain
\begin{equation*}
	\begin{aligned}
		\bigl\lvert\E\bigl[(A^n)^\transp E^n\bigr]\bigr\rvert&\leq \Bigl\lvert \Delta_n\E\bigl[(\muhat{J}{t_k}-\muhat{Z,n}{t_k-})^\transp\rme^{-2\alpha\Delta_n}\gam{J}{t_k}(\sigma_J\sigma_J^\transp)^{-1}\mu_{t_k}\bigr] \\
		&\quad- \E\bigl[(\muhat{J}{t_k}-\muhat{Z,n}{t_k-})^\transp\rme^{-2\alpha\Delta_n} \Delta_nL^{(n)}_k(\sigma_J\sigma_J^\transp)^{-1}\mu_{t_k}\bigr] \Bigr\rvert + C_\textrm{mvt}\Delta_n^2 \\
		&= \Delta_n \Bigl\lvert \E\Bigl[(\muhat{J}{t_k}-\muhat{Z,n}{t_k-})^\transp\rme^{-2\alpha\Delta_n} \bigl( \gam{J}{t_k}-L^{(n)}_k \bigr)(\sigma_J\sigma_J^\transp)^{-1}\mu_{t_k}\Bigr] \Bigr\rvert + C_\textrm{mvt}\Delta_n^2 \\
		&\leq \Delta_n\bigl\lVert(\sigma_J\sigma_J^\transp)^{-1}\bigr\rVert \bigl\lVert \gam{J}{t_k}-L^{(n)}_k \bigr\rVert \E\Bigl[\bigl\lVert\muhat{J}{t_k}-\muhat{Z,n}{t_k-}\bigr\rVert\bigl\lVert\mu_{t_k}\bigr\rVert\Bigr] + C_\textrm{mvt}\Delta_n^2 \\
		&\leq C_{A,E}\Delta_n^2.
	\end{aligned}
\end{equation*}
The last inequality is due to boundedness of $\E[\lVert\muhat{J}{t_k}-\muhat{Z,n}{t_k-}\rVert\lVert\mu_{t_k}\rVert]$ together with the fact that $\lVert\gam{J}{t_k}-L^{(n)}_k\rVert$ is bounded by a constant times $\Delta_n$, see Lemma~\ref{lem:estimation_lemma}.

The mixed term $(A^n)^\transp F^n$ can be handled in a similar way. It holds that
\begin{equation*}
	\begin{aligned}
		(A^n)^\transp F^n &= (\muhat{J}{t_k}-\muhat{Z,n}{t_k-})^\transp\rme^{-2\alpha\Delta_n}\Delta_nL^{(n)}_k(\sigma_J\sigma_J^\transp)^{-1}\muhat{Z,n}{t_k-} \\
		&\quad-\int_{t_k}^{t_{k+1}}(\muhat{J}{t_k}-\muhat{Z,n}{t_k-})^\transp\rme^{-\alpha\Delta_n}\rme^{-\alpha(t_{k+1}-s)}\gam{J}{s}(\sigma_J\sigma_J^\transp)^{-1}\muhat{J}{s}\,\rmd s
	\end{aligned}
\end{equation*}
and hence by another application of the mean value theorem
\begin{equation*}
	\begin{aligned}
		&\bigl\lvert \E\bigl[(A^n)^\transp F^n\bigr] \bigr\rvert \\
		&\leq \biggl\lvert \E\Bigl[(\muhat{J}{t_k}-\muhat{Z,n}{t_k-})^\transp\rme^{-2\alpha\Delta_n} \Delta_nL^{(n)}_k(\sigma_J\sigma_J^\transp)^{-1}\muhat{Z,n}{t_k-}\Bigr] \\
		&\quad- \Delta_n\E\bigl[(\muhat{J}{t_k}-\muhat{Z,n}{t_k-})^\transp\rme^{-2\alpha\Delta_n}\gam{J}{t_k}(\sigma_J\sigma_J^\transp)^{-1}\muhat{J}{t_k}\bigr] \biggr\rvert + C_\textrm{mvt}\Delta_n^2 \\
		&= \Delta_n\biggl\lvert \E\Bigl[(\muhat{J}{t_k}-\muhat{Z,n}{t_k-})^\transp\rme^{-2\alpha\Delta_n} \Bigl(L^{(n)}_k(\sigma_J\sigma_J^\transp)^{-1}\muhat{Z,n}{t_k-}-\gam{J}{t_k}(\sigma_J\sigma_J^\transp)^{-1}\muhat{J}{t_k}\Bigr) \Bigr] \biggr\rvert + C_\textrm{mvt}\Delta_n^2.
	\end{aligned}
\end{equation*}
The absolute value of the expectation is split into two summands as
\begin{equation*}
	\begin{aligned}
		&\Bigl\lvert \E\Bigl[(\muhat{J}{t_k}-\muhat{Z,n}{t_k-})^\transp\rme^{-2\alpha\Delta_n} \Bigl(L^{(n)}_k(\sigma_J\sigma_J^\transp)^{-1}\muhat{Z,n}{t_k-}-\gam{J}{t_k}(\sigma_J\sigma_J^\transp)^{-1}\muhat{J}{t_k}\Bigr) \Bigr] \Bigr\rvert \\
		&\leq  \Bigl\lvert\E\Bigl[(\muhat{J}{t_k}-\muhat{Z,n}{t_k-})^\transp\rme^{-2\alpha\Delta_n} \bigl(L^{(n)}_k-\gam{J}{t_k}\bigr)(\sigma_J\sigma_J^\transp)^{-1}\muhat{Z,n}{t_k-} \Bigr]\Bigr\rvert \\
		&\quad+ \Bigl\lvert\E\Bigl[(\muhat{J}{t_k}-\muhat{Z,n}{t_k-})^\transp\rme^{-2\alpha\Delta_n} \gam{J}{t_k}(\sigma_J\sigma_J^\transp)^{-1}\bigl(\muhat{Z,n}{t_k-}-\muhat{J}{t_k}\bigr) \Bigr]\Bigr\rvert \\
		&\leq \bigl\lVert(\sigma_J\sigma_J^\transp)^{-1}\bigr\rVert\biggl(\E\Bigl[\bigl\lVert\muhat{J}{t_k}-\muhat{Z,n}{t_k-}\bigr\rVert\bigl\lVert\muhat{Z,n}{t_k-}\bigr\rVert\Bigr]\bigl\lVert L^{(n)}_k-\gam{J}{t_k} \bigr\rVert + C_{\gam{}{}} \E\Bigl[\bigl\lVert\muhat{J}{t_k}-\muhat{Z,n}{t_k-}\bigr\rVert^2\Bigr]\biggr).
	\end{aligned}
\end{equation*}
From the same argumentations as above we deduce that there exist constants $C_{A,F,1}$ and $C_{A,F,2}>0$ with
\[ \bigl\lvert \E\bigl[(A^n)^\transp F^n\bigr] \bigr\rvert \leq C_{A,F,1}\Delta_n\E\bigl[\bigl\lVert\muhat{J}{t_k}-\muhat{Z,n}{t_k-}\bigr\rVert^2\bigr] + C_{A,F,2}\Delta_n^2. \]

\paragraph{Conclusion with discrete Gronwall Lemma.}
Now we plug all these upper bounds into~\eqref{eq:estimation_for_gronwall} and obtain that there exist constants $L_1,L_2>0$ such that
\[ \E\Bigl[\bigl\lVert \muhat{J}{t_{k+1}}-\muhat{Z,n}{t_{k+1}-}\bigr\rVert^2\Bigr] \leq \bigl(1+L_1\Delta_n\bigr) \E\Bigl[\bigl\lVert \muhat{J}{t_k}-\muhat{Z,n}{t_k-}\bigr\rVert^2\Bigr]+L_2\Delta_n^2. \]
Setting $a_k:=\E\Bigl[\bigl\lVert \muhat{J}{t_k}-\muhat{Z,n}{t_k-}\bigr\rVert^2\Bigr]$ in the discrete version of Gronwall's Lemma, see Lemma~\ref{lem:discrete_gronwall}, we can conclude that
\[ \E\Bigl[\bigl\lVert \muhat{J}{t_k}-\muhat{Z,n}{t_k-}\bigr\rVert^2\Bigr]\leq \frac{\rme^{L_1T}-1}{L_1}L_2\Delta_n=:\tilde{C}\Delta_n \]
which proves the claim for $t=t_k$.
To find an upper bound that is valid for arbitrary time $t\in[0,T]$ with $t\in[t_k,t_{k+1})$, we observe that
\begin{equation*}
	\begin{aligned}
		\E\Bigl[\bigl\lVert \muhat{J}{t}&-\muhat{Z,n}{t}\bigr\rVert^2\Bigr] = \E\Bigl[\bigl\lVert \muhat{J}{t}-\muhat{J}{t_k}+\muhat{J}{t_k}-\muhat{Z,n}{t_k-}+\muhat{Z,n}{t_k-}-\muhat{Z,n}{t}\bigr\rVert^2\Bigr] \\
		&\leq 3\Bigl( \E\Bigl[\bigl\lVert \muhat{J}{t}-\muhat{J}{t_k}\bigr\rVert^2\Bigr]+\E\Bigl[\bigl\lVert\muhat{J}{t_k}-\muhat{Z,n}{t_k-}\bigr\rVert^2\Bigr]+\E\Bigl[\bigl\lVert\muhat{Z,n}{t_k-}-\muhat{Z,n}{t}\bigr\rVert^2\Bigr] \Bigr).
	\end{aligned}
\end{equation*}
The first summand is bounded by a constant times $\Delta_n$ which can be seen from the representation in Lemma~\ref{lem:filter_D_dynamics}. From~\eqref{eq:update_step_muhat_C_N} we can deduce the same for the third summand. Hence, all in all there exists a constant $K_{m,2}>0$ such that
\[ \E\Bigl[\bigl\lVert \muhat{Z,n}{t}-\muhat{J}{t}\bigr\rVert^2\Bigr]\leq K_{m,2}\Delta_n, \]
which proves the claim of the theorem for $p=2$.

For proving the claim in the case $p\neq 2$ note that the joint distribution of the conditional means is Gaussian. A classical result, see for example Rosi\'{n}ski and Suchanecki~\cite[Lem.~2.1]{rosinski_suchanecki_1980}, hence yields that there is a constant $C_{p,2}>0$ with
\[ \E\Bigl[\bigl\lVert \muhat{Z,n}{t}-\muhat{J}{t}\bigr\rVert^p\Bigr] \leq C_{p,2}\E\Bigl[\bigl\lVert \muhat{Z,n}{t}-\muhat{J}{t}\bigr\rVert^2\Bigr]^{\frac{p}{2}} \leq C_{p,2}(K_{m,2}\Delta_n)^{\frac{p}{2}}=K_{m,p}\Delta_n^{p/2} \]
for all $t\in[0,T]$. This concludes the proof in the case $p\neq 2$.\qed

\section{Proofs for Random Information Dates}\label{app:long_proofs_random_times}

\subsection{Proof of Theorem~\ref{thm:q_C_lambda_goes_to_q_D}: Convergence of Covariance Matrices}

We first consider $p=2$.
Using the representations from Proposition~\ref{prop:integral_equations_for_q_D_and_q_C_lambda} we see
\begin{equation*}
	\begin{aligned}
		&\gam{Z,\lambda}{t}-\gam{J}{t} = \int_0^t\int_{\R^d}-\gam{Z,\lambda}{s-}(\gam{Z,\lambda}{s-}+\lambda\sigma_J\sigma_J^\transp)^{-1}\gam{Z,\lambda}{s-}\,\tilde{N}(\rmd s,\rmd u)\\
		&\quad+ \int_0^t \Bigl(L(\gam{Z,\lambda}{s})-L(\gam{J}{s}) -\lambda\gam{Z,\lambda}{s-}(\gam{Z,\lambda}{s-}+\lambda\sigma_J\sigma_J^\transp)^{-1}\gam{Z,\lambda}{s-}+\gam{J}{s}(\sigma_J\sigma_J^\transp)^{-1}\gam{J}{s}\Bigr)\rmd s.
	\end{aligned}
\end{equation*}
Denote the integral with respect to the compensated measure $\tilde{N}$ by $X^{\lambda}_t$ and the second one by $A^{\lambda}_t$.
Now for $r\in[0,T]$ it holds
\begin{equation}\label{eq:two_summands_for_u_lambda}
	\begin{aligned}
		u^{\lambda}_r:=\E\biggl[\sup_{t\leq r}\, \lVert\gam{Z,\lambda}{t}-\gam{J}{t}\rVert^2\biggr]
		\leq 2\E\biggl[\sup_{t\leq r}\,\lVert X^{\lambda}_t\rVert^2\biggr]+2\E\biggl[\sup_{t\leq r}\,\lVert A^{\lambda}_t\rVert^2\biggr].
	\end{aligned}
\end{equation}

\paragraph{Estimate for the martingale term $\boldsymbol{X^\lambda}$.}
Every component of the matrix-valued process $(X^{\lambda}_t)_{t\geq 0}$ is a martingale since we integrate a bounded integrand with respect to the compensated measure $\tilde{N}$. In the following, for finding an upper bound for the term involving $X^\lambda_t$ in~\eqref{eq:two_summands_for_u_lambda} we first use Doob's inequality for martingales to get rid of the supremum. In a second step we can calculate the second moment of the integral because we know the corresponding intensity measure of the Poisson random measure. In detail, we proceed as follows. By equivalence of norms there is a constant $C_\textrm{norm}>0$ such that
\begin{equation}\label{eq:square_of_martingale_part}
	\begin{aligned}
		\E\biggl[\sup_{t\leq r} \,\lVert X^{\lambda}_t\rVert^2\biggr]
		&\leq C_\textrm{norm}\E\biggl[\sup_{t\leq r}\,\lVert X^{\lambda}_t\rVert_F^2\biggr]
		= C_\textrm{norm}\E\biggl[\sup_{t\leq r}\,\sum_{i,j=1}^d (X^{\lambda}_t(i,j))^2\biggr] \\
		&\leq C_\textrm{norm}\sum_{i,j=1}^d\E\biggl[\sup_{t\leq r}\, (X^{\lambda}_t(i,j))^2\biggr]
		\leq C_\textrm{norm}\sum_{i,j=1}^d 4\E\Bigl[(X^{\lambda}_r(i,j))^2\Bigr].
	\end{aligned}
\end{equation}
The last inequality follows from Doob's inequality for martingales. Next, we can apply Lemma~\ref{lem:variance_of_integral_with_respect_to_compensated_measure} to the definition of $X^\lambda$ and get
\begin{equation*}
	\begin{aligned}
		\E\Bigl[(X^{\lambda}_r(i,j))^2\Bigr] &= \E\biggl[\int_0^r\int_{\R^d} \Bigl(\bigl(-\gam{Z,\lambda}{s-}(\gam{Z,\lambda}{s-}+\lambda\sigma_J\sigma_J^\transp)^{-1}\gam{Z,\lambda}{s-}\bigr)(i,j)\Bigr)^2 \lambda\varphi(u)\,\rmd u\,\rmd s\biggr] \\
		&= \lambda\E\biggl[\int_0^r \Bigl(\bigl(-\gam{Z,\lambda}{s-}(\gam{Z,\lambda}{s-}+\lambda\sigma_J\sigma_J^\transp)^{-1}\gam{Z,\lambda}{s-}\bigr)(i,j)\Bigr)^2 \rmd s\biggr],
	\end{aligned}
\end{equation*}
using that the integrand does not depend on $u$ and $\varphi$ is a density.
Plugging back into~\eqref{eq:square_of_martingale_part}, we get, again by equivalence of norms,
\begin{equation}\label{eq:simplified_integral}
	\begin{aligned}
		\E\biggl[\sup_{t\leq r}\,\lVert X^{\lambda}_t\rVert^2\biggr]
		&\leq 4C_\textrm{norm}^2\lambda\int_0^r \E\Bigl[\lVert-\gam{Z,\lambda}{s-}(\gam{Z,\lambda}{s-}+\lambda\sigma_J\sigma_J^\transp)^{-1}\gam{Z,\lambda}{s-}\rVert^2\Bigr] \rmd s.
	\end{aligned}
\end{equation}
Since the norm of the matrices $\gam{Z,\lambda}{}$ is bounded by $C_{\gam{}{}}$, see Lemma~\ref{lem:boundedness_of_covariances}, we obtain
\begin{equation}\label{eq:norm_of_update}
	\begin{aligned}
		&\E\Bigl[\lVert-\gam{Z,\lambda}{s-}(\gam{Z,\lambda}{s-}+\lambda\sigma_J\sigma_J^\transp)^{-1}\gam{Z,\lambda}{s-}\rVert^2\Bigr] \leq C_{\gam{}{}}^4\E\Bigl[\lVert(\gam{Z,\lambda}{s-}+\lambda\sigma_J\sigma_J^\transp)^{-1}\rVert^2\Bigr] \\
		&=C_{\gam{}{}}^4 \E\Bigl[\bigl(\lmin(\gam{Z,\lambda}{s-}+\lambda\sigma_J\sigma_J^\transp)\bigr)^{-2}\Bigr] \leq C_{\gam{}{}}^4\E\Bigl[\bigl(\lmin(\lambda\sigma_J\sigma_J^\transp)\bigr)^{-2}\Bigr]
		= \frac{C_{\gam{}{}}^4}{\lambda^2}\lVert(\sigma_J\sigma_J^\transp)^{-1}\rVert^2.
	\end{aligned}
\end{equation}
When reinserting this upper bound into~\eqref{eq:simplified_integral}, we can conclude that
\begin{equation}\label{eq:upper_bound_for_M}
	\begin{aligned}
		\E\biggl[\sup_{t\leq r}\,\lVert X^{\lambda}_t\rVert^2\biggr]&\leq \frac{1}{\lambda}4 C_\textrm{norm}^2C_{\gam{}{}}^4\lVert(\sigma_J\sigma_J^\transp)^{-1}\rVert^2r \leq \frac{1}{\lambda}4 C_\textrm{norm}^2C_{\gam{}{}}^4\lVert(\sigma_J\sigma_J^\transp)^{-1}\rVert^2T.
	\end{aligned}
\end{equation}

\paragraph{Estimate for the finite variation term $\boldsymbol{A^\lambda}$.}
Using the short-hand notation $g$ for the integrand of $A^{\lambda}_t$ we get
\begin{equation}\label{eq:cauchy_schwarz_for_A}
	\sup_{t\leq r}\,\lVert A^{\lambda}_t\rVert^2 = \sup_{t\leq r}\,\biggl\lVert\int_0^t g(s)\,\rmd s\biggr\rVert^2 \leq \sup_{t\leq r}\, t \int_0^t \lVert g(s)\rVert^2\,\rmd s \leq r\int_0^r \lVert g(s)\rVert^2\,\rmd s
\end{equation}
by the Cauchy--Schwarz inequality in Lemma~\ref{lem:cauchy_schwarz_multidimensional}. We now address the integrand of $A^{\lambda}$. Since
\begin{equation*}
	\begin{aligned}
		\lVert g(s)\rVert &\leq 2\lVert\gam{Z,\lambda}{s}-\gam{J}{s}\rVert\bigl(\lVert\alpha\rVert+C_{\gam{}{}}\lVert(\sigma_R\sigma_R^\transp)^{-1}\rVert\bigr) \\
		&\quad +\lVert\lambda\gam{Z,\lambda}{s-}(\gam{Z,\lambda}{s-}+\lambda\sigma_J\sigma_J^\transp)^{-1}\gam{Z,\lambda}{s-}-\gam{J}{s}(\sigma_J\sigma_J^\transp)^{-1}\gam{J}{s}\rVert
	\end{aligned}
\end{equation*}
we obtain
\begin{equation}\label{eq:two_summands_for_A}
	\begin{aligned}
		\E\biggl[\sup_{t\leq r}\,\lVert A^{\lambda}_t\rVert^2\biggr]&\leq r\int_0^r 8\bigl(\lVert\alpha\rVert+C_{\gam{}{}}\lVert(\sigma_R\sigma_R^\transp)^{-1}\rVert\bigr)^2\E\bigl[\lVert\gam{Z,\lambda}{s}-\gam{J}{s}\rVert^2\bigr]\,\rmd s \\
		&\quad + r\int_0^r 2\E\bigl[\lVert\lambda\gam{Z,\lambda}{s-}(\gam{Z,\lambda}{s-}+\lambda\sigma_J\sigma_J^\transp)^{-1}\gam{Z,\lambda}{s-}-\gam{J}{s}(\sigma_J\sigma_J^\transp)^{-1}\gam{J}{s}\rVert^2\bigr]\,\rmd s \\
		&\leq 8T\bigl(\lVert\alpha\rVert+C_{\gam{}{}}\lVert(\sigma_R\sigma_R^\transp)^{-1}\rVert\bigr)^2\int_0^r u^{\lambda}_s\,\rmd s \\
		&\quad + 2T\int_0^r \E\bigl[\lVert\lambda\gam{Z,\lambda}{s-}(\gam{Z,\lambda}{s-}+\lambda\sigma_J\sigma_J^\transp)^{-1}\gam{Z,\lambda}{s-}-\gam{J}{s}(\sigma_J\sigma_J^\transp)^{-1}\gam{J}{s}\rVert^2\bigr]\,\rmd s.
	\end{aligned}
\end{equation}
We analyze the second summand in more detail. For that purpose, we decompose
\begin{equation}\label{eq:integrand_decomposition}
	\begin{aligned}
		&\lambda\gam{Z,\lambda}{s-}(\gam{Z,\lambda}{s-}+\lambda\sigma_J\sigma_J^\transp)^{-1}\gam{Z,\lambda}{s-}-\gam{J}{s}(\sigma_J\sigma_J^\transp)^{-1}\gam{J}{s} \\
		&= \bigl(\lambda\gam{Z,\lambda}{s-}(\gam{Z,\lambda}{s-}+\lambda\sigma_J\sigma_J^\transp)^{-1}\gam{Z,\lambda}{s-}-\gam{Z,\lambda}{s-}(\sigma_J\sigma_J^\transp)^{-1}\gam{Z,\lambda}{s-}\bigr)\\
		&\quad+ \bigl(\gam{Z,\lambda}{s-}(\sigma_J\sigma_J^\transp)^{-1}\gam{Z,\lambda}{s-}-\gam{Z,\lambda}{s}(\sigma_J\sigma_J^\transp)^{-1}\gam{Z,\lambda}{s}\bigr)\\
		&\quad+ \bigl(\gam{Z,\lambda}{s}(\sigma_J\sigma_J^\transp)^{-1}\gam{Z,\lambda}{s}-\gam{J}{s}(\sigma_J\sigma_J^\transp)^{-1}\gam{J}{s}\bigr)
	\end{aligned}
\end{equation}
and find upper bounds for the three summands. For the first summand we find
\begin{equation}\label{eq:split_up_first_summand}
	\begin{aligned}
		&\E\Bigl[\lVert\lambda\gam{Z,\lambda}{s-}(\gam{Z,\lambda}{s-}+\lambda\sigma_J\sigma_J^\transp)^{-1}\gam{Z,\lambda}{s-}-\gam{Z,\lambda}{s-}(\sigma_J\sigma_J^\transp)^{-1}\gam{Z,\lambda}{s-}\rVert^2\Bigr] \\
		&=\E\Bigl[\lVert \gam{Z,\lambda}{s-}\Bigl((\gam{Z,\lambda}{s-}+\lambda\sigma_J\sigma_J^\transp)^{-1}(\lambda\sigma_J\sigma_J^\transp-\gam{Z,\lambda}{s-}-\lambda\sigma_J\sigma_J^\transp)\Bigr)(\sigma_J\sigma_J^\transp)^{-1}\gam{Z,\lambda}{s-} \rVert^2\Bigr] \\
		&=\E\Bigl[\lVert -\gam{Z,\lambda}{s-}(\gam{Z,\lambda}{s-}+\lambda\sigma_J\sigma_J^\transp)^{-1}\gam{Z,\lambda}{s-}(\sigma_J\sigma_J^\transp)^{-1}\gam{Z,\lambda}{s-} \rVert^2\Bigr] \\
		&\leq C_{\gam{}{}}^2\lVert(\sigma_J\sigma_J^\transp)^{-1}\rVert^2\frac{1}{\lambda^2}C_{\gam{}{}}^4\lVert(\sigma_J\sigma_J^\transp)^{-1}\rVert^2 = \frac{1}{\lambda^2}C_{\gam{}{}}^6\lVert(\sigma_J\sigma_J^\transp)^{-1}\rVert^4.
	\end{aligned}
\end{equation}
For the second summand note that $\E\bigl[\lVert\gam{Z,\lambda}{s-}(\sigma_J\sigma_J^\transp)^{-1}\gam{Z,\lambda}{s-}-\gam{Z,\lambda}{s}(\sigma_J\sigma_J^\transp)^{-1}\gam{Z,\lambda}{s}\rVert^2\bigr]$ is equal to zero since a jump at time $s$ occurs with probability zero.
For the third summand we observe
\begin{equation}\label{eq:split_up_third_summand}
	\begin{aligned}
		&\E\bigl[\lVert\gam{Z,\lambda}{s}(\sigma_J\sigma_J^\transp)^{-1}\gam{Z,\lambda}{s}-\gam{J}{s}(\sigma_J\sigma_J^\transp)^{-1}\gam{J}{s}\rVert^2\bigr] \\
		&= \E\bigl[\lVert\gam{Z,\lambda}{s}(\sigma_J\sigma_J^\transp)^{-1}(\gam{Z,\lambda}{s}-\gam{J}{s})+(\gam{Z,\lambda}{s}-\gam{J}{s})(\sigma_J\sigma_J^\transp)^{-1}\gam{J}{s}\rVert^2\bigr] \\
		&\leq \bigl(2C_{\gam{}{}}\lVert(\sigma_J\sigma_J^\transp)^{-1}\rVert\bigr)^2\E\bigl[\lVert\gam{Z,\lambda}{s}-\gam{J}{s}\rVert^2\bigr]
		\leq 4C_{\gam{}{}}^2\lVert(\sigma_J\sigma_J^\transp)^{-1}\rVert^2 u^{\lambda}_s.
	\end{aligned}
\end{equation}
We now use these upper bounds in~\eqref{eq:integrand_decomposition} and obtain
\begin{equation}
	\begin{aligned}
		&\E\bigl[\lVert\lambda\gam{Z,\lambda}{s-}(\gam{Z,\lambda}{s-}+\lambda\sigma_J\sigma_J^\transp)^{-1}\gam{Z,\lambda}{s-}-\gam{J}{s}(\sigma_J\sigma_J^\transp)^{-1}\gam{J}{s}\rVert^2\bigr] \\
		&\leq \frac{3}{\lambda^2}C_{\gam{}{}}^6\lVert(\sigma_J\sigma_J^\transp)^{-1}\rVert^4+12C_{\gam{}{}}^2\lVert(\sigma_J\sigma_J^\transp)^{-1}\rVert^2 u^{\lambda}_s.
	\end{aligned}
\end{equation}
Hence we can write
\begin{equation}\label{eq:upper_bound_for_A}
	\begin{aligned}
		\E\biggl[\sup_{t\leq r}\,\lVert A^{\lambda}_t\rVert^2\biggr]
		&\leq 8T\Bigl(\bigl(\lVert\alpha\rVert+C_{\gam{}{}}\lVert(\sigma_R\sigma_R^\transp)^{-1}\rVert\bigr)^2+3C_{\gam{}{}}^2\lVert(\sigma_J\sigma_J^\transp)^{-1}\rVert^2\Bigr)\int_0^r u^{\lambda}_s\,\rmd s \\
		&\quad+ 6T^2C_{\gam{}{}}^6\lVert(\sigma_J\sigma_J^\transp)^{-1}\rVert^4\frac{1}{\lambda^2}.
	\end{aligned}
\end{equation}

\paragraph{Conclusion with Gronwall's Lemma.}
We have found upper bounds for both summands from~\eqref{eq:two_summands_for_u_lambda}. Plugging in yields constants $C_1$, $C_2>0$ such that
\begin{equation*}
	\begin{aligned}
		u^{\lambda}_r &\leq \frac{C_1}{\lambda}+C_2\int_0^r u^{\lambda}_s\,\rmd s
	\end{aligned}
\end{equation*}
for all $\lambda\geq 1$. By Gronwall's Lemma in integral form, see Lemma~\ref{lem:gronwall}, it follows
\begin{equation}
	\E\biggl[\sup_{t\leq T}\, \lVert\gam{Z,\lambda}{t}-\gam{J}{t}\rVert^2\biggr] =u^{\lambda}_T\leq\frac{C_1}{\lambda}\rme^{C_2 T}=\frac{\widetilde{K}_{Q,2}}{\lambda}
\end{equation}
for $\widetilde{K}_{Q,2}=C_1\rme^{C_2T}$, which proves the claim for $p=2$. For $p<2$ we use Lyapunov's inequality to get
\[ \E\Bigl[\bigl\lVert\gam{Z,\lambda}{t}-\gam{J}{t}\bigr\rVert^p\Bigr] \leq \E\Bigl[\bigl\lVert\gam{Z,\lambda}{t}-\gam{J}{t}\bigr\rVert^2\Bigr]^\frac{p}{2} \leq \Bigl(\frac{\widetilde{K}_{Q,2}}{\lambda}\Bigr)^\frac{p}{2}=\frac{\widetilde{K}_{Q,p}}{\lambda^{p/2}}. \]
For $p>2$ it holds
\begin{equation*}
	\begin{aligned}
		\E\Bigl[\bigl\lVert\gam{Z,\lambda}{t}-\gam{J}{t}\bigr\rVert^p\Bigr]
		\leq (2C_Q)^{p-2}\E\Bigl[\bigl\lVert\gam{Z,\lambda}{t}-\gam{J}{t}\bigr\rVert^2\Bigr] \leq (2C_Q)^{p-2}\frac{\widetilde{K}_{Q,2}}{\lambda}=\frac{\widetilde{K}_{Q,p}}{\lambda}
	\end{aligned}
\end{equation*}
due to boundedness of the conditional covariance matrices, see Lemma~\ref{lem:boundedness_of_covariances}.\qed

\subsection{Proof of Theorem~\ref{thm:muhat_C_lambda_goes_to_muhat_D}: Convergence of Conditional Means}

Throughout the proof, we omit the superscript $\lambda$ at time points $T^{(\lambda)}_k$ and at the Poisson process $(N^{(\lambda)}_t)_{t\geq 0}$ for better readability.

We first prove the claim for $p\geq 2$. The proof uses again Gronwall's Lemma, see Lemma~\ref{lem:gronwall}. Define $v^\lambda_t := \E[\lVert \muhat{Z,\lambda}{t}-\muhat{J}{t}\rVert^p]$ for $t\in[0,T]$. 
The filtering equations from Lemma~\ref{lem:filter_C_dynamics} yield
\begin{equation}\label{eq:representation_of_muhat_C_lambda}
	\muhat{Z,\lambda}{t} = \int_0^t \alpha(\delta-\muhat{Z,\lambda}{s})\,\rmd s + \int_0^t\gam{Z,\lambda}{s}(\sigma_R\sigma_R^\transp)^{-1}\sigma_R\,\rmd V^Z_s + \sum_{k=1}^{N_t} \frac{1}{\lambda}P^\lambda_k\bigl(Z^{(\lambda)}_k-\muhat{Z,\lambda}{T_k-}\bigr),
\end{equation}
where $\rmd R_s-\muhat{Z,\lambda}{s}\,\rmd s=\sigma_R\,\rmd V^Z_s$ defines the innovations process $V^Z$, an $m$-dimensional $\calF^{Z,\lambda}$-Brownian motion, and where
\[ P^\lambda_k=\lambda\bigl(I_d-\rho^{(\lambda)}(\gam{Z,\lambda}{T_k-})\bigr) = \lambda\gam{Z,\lambda}{T_k-}(\gam{Z,\lambda}{T_k-}+\lambda\sigma_J\sigma_J^\transp)^{-1}. \]
Note that the matrices $P^\lambda_k$ are bounded since
\begin{equation*}
	\begin{aligned}
		\lVert P^\lambda_k\rVert = \bigl\lVert\lambda\gam{Z,\lambda}{T_k-}(\gam{Z,\lambda}{T_k-}+\lambda\sigma_J\sigma_J^\transp)^{-1}\bigr\rVert
		&= \bigl\lVert\gam{Z,\lambda}{T_k-}(\gam{Z,\lambda}{T_k-}+\lambda\sigma_J\sigma_J^\transp)^{-1}\lambda\sigma_J\sigma_J^\transp(\sigma_J\sigma_J^\transp)^{-1}\bigr\rVert \\
		&\leq C_Q\lVert(\sigma_J\sigma_J^\transp)^{-1}\rVert=:C_P.
	\end{aligned}
\end{equation*}
The conditional mean $\muhat{J}{}$ can be written as
\begin{equation}\label{eq:representation_of_muhat_D}
	\begin{aligned}
		\muhat{J}{t} &= \int_0^t \alpha(\delta-\muhat{J}{s})\,\rmd s + \int_0^t\gam{J}{s}(\sigma_R\sigma_R^\transp)^{-1}(\rmd R_s-\muhat{J}{s}\,\rmd s) \\
		&\quad+\int_0^t \gam{J}{s}(\sigma_J\sigma_J^\transp)^{-1}(\rmd J_s-\muhat{J}{s}\,\rmd s).
	\end{aligned}
\end{equation}
Note that
\begin{align*}
	\rmd R_s-\muhat{J}{s}\,\rmd s &= \sigma_R\,\rmd V^Z_s+(\muhat{Z,\lambda}{s}-\muhat{J}{s})\,\rmd s,\\
	\rmd J_s-\muhat{J}{s}\,\rmd s &= \sigma_J\,\rmd W^J_s+(\mu_s-\muhat{J}{s})\,\rmd s.
\end{align*}
This yields the representation $\muhat{Z,\lambda}{t}-\muhat{J}{t} = A^\lambda_t+B^\lambda_t+C^\lambda_t+D^\lambda_t+E^\lambda_t$, where
\begingroup
\allowdisplaybreaks
	\begin{align}
		A^\lambda_t &= -\alpha\int_0^t (\muhat{Z,\lambda}{s}-\muhat{J}{s})\,\rmd s, \\
		B^\lambda_t &= \int_0^t (\gam{Z,\lambda}{s}-\gam{J}{s})(\sigma_R\sigma_R^\transp)^{-1}\sigma_R\,\rmd V^Z_s, \\
		C^\lambda_t &= \int_0^t \gam{J}{s}(\sigma_R\sigma_R^\transp)^{-1}(\muhat{J}{s}-\muhat{Z,\lambda}{s})\,\rmd s, \\
		D^\lambda_t &= \sum_{k=1}^{N_t} P^\lambda_k\sigma_J\int_{\frac{k-1}{\lambda}}^{\frac{k}{\lambda}}\rmd W^J_s - \int_0^t \gam{J}{s}(\sigma_J\sigma_J^\transp)^{-1}\sigma_J\,\rmd W^J_s, \\
		E^\lambda_t &= \sum_{k=1}^{N_t} \frac{1}{\lambda}P^\lambda_k\bigl(\mu_{T_k}-\muhat{Z,\lambda}{T_k-}\bigr) - \int_0^t \gam{J}{s}(\sigma_J\sigma_J^\transp)^{-1}\bigl(\mu_s-\muhat{J}{s}\bigr)\,\rmd s.
	\end{align}
\endgroup
Hence we have
\[ v^\lambda_t \leq 5^{p-1}\E\Bigl[\lVert A^\lambda_t\rVert^p+\lVert B^\lambda_t\rVert^p+\lVert C^\lambda_t\rVert^p+\lVert D^\lambda_t\rVert^p+\lVert E^\lambda_t\rVert^p\Bigr], \]
and it suffices to find upper bounds for the single summands on the right-hand side.

\paragraph{Estimation of stochastic integrals.}
As a preliminary step, we deduce upper bounds for the $p$-th moments of certain stochastic integrals w.r.t.\ $W\in\{V^Z,W^J\}$. Let
\( G_t=\int_0^t f^N_s\,\rmd W_s, \)
where $f^N$ is a matrix-valued integrand measurable with respect to $\calF^N_t:=\sigma(N_u, u\leq t)$. Then, $G_t$ conditional on $\calF^N_t$ is Gaussian with $\E[G_t\,|\,\calF^N_t]=0$. By Rosi\'{n}ski and Suchanecki~\cite[Lem.~2.1]{rosinski_suchanecki_1980} there is a constant $C_p>0$ such that
\[ \E\bigl[\lVert G_t\rVert^p\,|\,\calF^N_t\bigr] \leq C_p \E\bigl[\lVert G_t\rVert^2\,|\,\calF^N_t\bigr]^\frac{p}{2}. \]
The multivariate version of It\^{o}'s isometry from Lemma~\ref{lem:ito_isometry_multidimensional} yields
\[ \E\bigl[\lVert G_t\rVert^2\,|\,\calF^N_t\bigr] \leq C_\textrm{norm} \E\biggl[\int_0^t \lVert f^N_s\rVert^2\,\rmd s \,\bigg|\, \calF^N_t\biggr] = C_\textrm{norm}\int_0^t \lVert f^N_s\rVert^2\,\rmd s. \]
By putting these inequalities together we get
\begin{equation}\label{eq:preliminary_stochastic_integrals}
	\begin{aligned}
		\E\bigl[\lVert G_t\rVert^p\bigr]
		&= \E\Bigl[\E\bigl[\lVert G_t\rVert^p\,|\,\calF^N_t\bigr]\Bigr]
		\leq C_pC_\textrm{norm}^{p/2}\E\biggl[\biggl(\int_0^t \lVert f^N_s\rVert^2\,\rmd s\biggr)^{p/2}\biggr] \\
		&\leq C_pC_\textrm{norm}^{p/2}\E\biggl[t^{\frac{p-2}{2}}\int_0^t \lVert f^N_s\rVert^p\,\rmd s\biggr]
		=: \overline{C}_p\E\biggl[t^{\frac{p-2}{2}}\int_0^t \lVert f^N_s\rVert^p\,\rmd s\biggr].
	\end{aligned}
\end{equation}

\paragraph{Estimate for $\boldsymbol{A^\lambda}$.}
By using H\"{o}lder's inequality we have
\begin{equation}\label{eq:estimation_A}
	\begin{aligned}
		\E\Bigl[\bigl\lVert A^\lambda_t\bigr\rVert^p\Bigr]
		&\leq \lVert\alpha\rVert^p t^{p-1} \int_0^t \E\bigl[\lVert\muhat{Z,\lambda}{s}-\muhat{J}{s}\rVert^p\bigr]\,\rmd s \\
		&\leq \lVert\alpha\rVert^p T^{p-1} \int_0^t v^\lambda_s\,\rmd s =: C_A\int_0^t v^\lambda_s\,\rmd s.
	\end{aligned}
\end{equation}

\paragraph{Estimate for $\boldsymbol{B^\lambda}$.}
For the summand $B^\lambda_t$ we use~\eqref{eq:preliminary_stochastic_integrals} as well as Theorem~\ref{thm:q_C_lambda_goes_to_q_D} to get
\begin{equation}\label{eq:estimation_B}
	\begin{aligned}
		\E\Bigl[\bigl\lVert B^\lambda_t\bigr\rVert^p\Bigr]
		&\leq \overline{C}_p \E\biggl[t^{\frac{p-2}{2}}\int_0^t \lVert (\gam{Z,\lambda}{s}-\gam{J}{s})(\sigma_R\sigma_R^\transp)^{-1}\sigma_R\rVert^p\,\rmd s\biggr] \\
		&\leq \overline{C}_p T^{\frac{p-2}{2}}\lVert(\sigma_R\sigma_R^\transp)^{-1}\sigma_R\rVert^p\int_0^t \E\bigl[\lVert \gam{Z,\lambda}{s}-\gam{J}{s} \rVert^p\bigr]\,\rmd s \\
		&\leq \overline{C}_p T^{\frac{p}{2}}\lVert(\sigma_R\sigma_R^\transp)^{-1}\sigma_R\rVert^p \frac{\widetilde{K}_{Q,p}}{\lambda} =: \frac{C_B}{\lambda}.
	\end{aligned}
\end{equation}

\paragraph{Estimate for $\boldsymbol{C^\lambda}$.}
For the summand $C^\lambda_t$ we can argue similarly as for $A^\lambda_t$ and get
\begin{equation}\label{eq:estimation_C}
	\begin{aligned}
		\E\Bigl[\bigl\lVert C^\lambda_t\bigr\rVert^p\Bigr]
		&\leq t^{p-1} \int_0^t \lVert\gam{J}{s}(\sigma_R\sigma_R^\transp)^{-1}\rVert^p \E\bigl[\lVert\muhat{Z,\lambda}{s}-\muhat{J}{s}\rVert^p\bigr]\,\rmd s \\
		&\leq C_{\gam{}{}}^p\lVert(\sigma_R\sigma_R^\transp)^{-1}\rVert^p T^{p-1} \int_0^t v^\lambda_s\,\rmd s =: C_C\int_0^t v^\lambda_s\,\rmd s.
	\end{aligned}
\end{equation}

\paragraph{Estimate for $\boldsymbol{D^\lambda}$.}
The estimation of $D^\lambda_t$ is more involved. We can write
\begin{equation*}
	\begin{aligned}
		D^\lambda_t
		&= \int_0^{\frac{N_t}{\lambda}} H^\lambda_s\sigma_J\,\rmd W^J_s - \int_0^t \gam{J}{s}(\sigma_J\sigma_J^\transp)^{-1}\sigma_J\,\rmd W^J_s,
	\end{aligned}
\end{equation*}
where $H^\lambda_s=P^\lambda_k$ for $s\in[\frac{k-1}{\lambda},\frac{k}{\lambda})$.
Note that the two stochastic integrals do not align. We distinguish different cases by means of the random variable $n_t:=\min\{N_t,\lambda t\}$. This leads to the representation of $D^\lambda_t$ as $D^{1,\lambda}_t+D^{2,\lambda}_t+D^{3,\lambda}_t$, where
\begingroup
\allowdisplaybreaks
	\begin{align}
		D^{1,\lambda}_t &= \int_0^{\frac{n_t}{\lambda}} \bigl(H^\lambda_s-\gam{J}{s}(\sigma_J\sigma_J^\transp)^{-1}\bigr)\sigma_J\,\rmd W^J_s,\\
		D^{2,\lambda}_t &= \mathbbm{1}_{\{ N_t>\lambda t \}}\int_{t}^{\frac{N_t}{\lambda}}H^\lambda_s\sigma_J\,\rmd W^J_s,\\
		D^{3,\lambda}_t &= -\mathbbm{1}_{\{ N_t<\lambda t \}}\int_{\frac{N_t}{\lambda}}^t \gam{J}{s}(\sigma_J\sigma_J^\transp)^{-1}\sigma_J\,\rmd W^J_s.
	\end{align}
\endgroup
For the first term due to~\eqref{eq:preliminary_stochastic_integrals} it holds
\begin{equation}\label{eq:D_1_lambda_in_integral_form}
	\begin{aligned}
		\E\Bigl[\bigl\lVert D^{1,\lambda}_t\bigr\rVert^p\Bigr]
		&\leq \overline{C}_p \E\biggl[ t^{\frac{p-2}{2}}\int_0^t \bigl\lVert\mathbbm{1}_{\{s\leq\frac{n_t}{\lambda}\}}\bigl(H^\lambda_s-\gam{J}{s}(\sigma_J\sigma_J^\transp)^{-1}\bigr)\sigma_J\bigr\rVert^p\,\rmd s\biggr] \\
		&\leq \overline{C}_p T^{\frac{p-2}{2}}\bigl\lVert\sigma_J\bigr\rVert^p \E\biggl[ \int_0^{\frac{n_t}{\lambda}} \bigl\lVert H^\lambda_s-\gam{J}{s}(\sigma_J\sigma_J^\transp)^{-1}\bigr\rVert^p\,\rmd s\biggr].
	\end{aligned}
\end{equation}
Let $k\leq n_t$ and $s\in[\frac{k-1}{\lambda},\frac{k}{\lambda})$. Then
\begin{equation*}
	\begin{aligned}
		H^\lambda_s-\gam{J}{s}(\sigma_J\sigma_J^\transp)^{-1} &= \bigl( \gam{Z,\lambda}{T_k-}(\gam{Z,\lambda}{T_k-}+\lambda\sigma_J\sigma_J^\transp)^{-1}\lambda\sigma_J\sigma_J^\transp-\gam{J}{s} \bigr)(\sigma_J\sigma_J^\transp)^{-1}.
	\end{aligned}
\end{equation*}
Hence, we can deduce that there exists a constant $\overline{C}>0$ with
\begin{equation*}
	\begin{aligned}
		\bigl\lVert H^\lambda_s-\gam{J}{s}&(\sigma_J\sigma_J^\transp)^{-1}\bigr\rVert^p \leq \bigl\lVert(\sigma_J\sigma_J^\transp)^{-1}\bigr\rVert^p \bigl\lVert \gam{Z,\lambda}{T_k-}(\gam{Z,\lambda}{T_k-}+\lambda\sigma_J\sigma_J^\transp)^{-1}\lambda\sigma_J\sigma_J^\transp-\gam{J}{s} \bigr\rVert^p \\
		&\leq 3^{p-1}\bigl\lVert(\sigma_J\sigma_J^\transp)^{-1}\bigr\rVert^p \Bigl(\lVert\gam{J}{s}-\gam{J}{T_k}\rVert^p+\lVert\gam{J}{T_k}-\gam{Z,\lambda}{T_k-}\rVert^p+\frac{\overline{C}^p}{\lambda^p}\Bigr)
	\end{aligned}
\end{equation*}
by means of Lemma~\ref{lem:estimation_lemma}. Since $\gam{J}{s}$ is differentiable in $s$ with bounded derivative we deduce that $\lVert\gam{J}{s}-\gam{J}{T_k}\rVert^p\leq \widetilde{C}_{\gam{}{}}^p|T_k-s|^p$.
Using the moment generating function of $T_k\sim\mathrm{Erl}(k,\lambda)$ we can show $\E[|T_k-s|^p]\leq C_\textrm{Erl}\lambda^{-\frac{1}{2}}$ for a constant $C_\textrm{Erl}>0$ and all $\lambda\geq 1$. Using also Theorem~\ref{thm:q_C_lambda_goes_to_q_D} and plugging back into~\eqref{eq:D_1_lambda_in_integral_form} this implies
\begin{equation}\label{eq:estimation_D_1}
	\E\Bigl[\bigl\lVert D^{1,\lambda}_t\bigr\rVert^p\Bigr] \leq 3^{p-1}\overline{C}_p T^{\frac{p}{2}}\bigl\lVert\sigma_J\bigr\rVert^p\bigl\lVert(\sigma_J\sigma_J^\transp)^{-1}\bigr\rVert^p  \Bigl( \frac{\widetilde{C}_{\gam{}{}}^pC_\textrm{Erl}}{\sqrt{\lambda}} + \frac{\widetilde{K}_{Q,p}}{\lambda}+\frac{\overline{C}^p}{\lambda^p}\Bigr) \leq \frac{C_{D,1}}{\sqrt{\lambda}}
\end{equation}
for all $\lambda\geq 1$, where $C_{D,1}>0$ is a constant.
Next, we consider $D^{2,\lambda}_t$, where~\eqref{eq:preliminary_stochastic_integrals} yields
\begin{equation}\label{eq:estimation_D_2}
	\begin{aligned}
		\E\Bigl[\bigl\lVert D^{2,\lambda}_t\bigr\rVert^p\Bigr]
		&\leq \overline{C}_p\E\biggl[\Bigl(\frac{N_t}{\lambda}-t\Bigr)^{\frac{p-2}{2}} \int_{t}^{\frac{N_t}{\lambda}}\bigl\lVert\mathbbm{1}_{\{ N_t>\lambda t \}}H^\lambda_s\sigma_J\bigr\rVert^p\,\rmd s\biggr] \\
		&\leq \overline{C}_pC_P^p\lVert\sigma_J\rVert^p \E\biggl[\mathbbm{1}_{\{ N_t>\lambda t \}} \Bigl(\frac{N_t}{\lambda}-t\Bigr)^{\frac{p}{2}}\biggr] \\
		&\leq \overline{C}_pC_P^p\lVert\sigma_J\rVert^p\lambda^{-\frac{p}{2}} \E\Bigl[|N_t-\lambda t|^{\frac{p}{2}}\Bigr]
		\leq \frac{C_{D,2}}{\sqrt{\lambda}}.
	\end{aligned}
\end{equation}
For the last inequality note that using the moment generating function of $N_t\sim\mathrm{Poi}(\lambda t)$ it can be shown for any $r\geq 1$ that $\E[|N_t-\lambda t|^r]\leq C_\textrm{Poi}(\lambda t)^{r-\frac{1}{2}}$ for all $\lambda\geq 1$ and a constant $C_\textrm{Poi}>0$.
For $D^{3,\lambda}_t$ the estimation works similarly. By using~\eqref{eq:preliminary_stochastic_integrals} we obtain
\begin{equation}\label{eq:estimation_D_3}
	\begin{aligned}
		\E\Bigl[\bigl\lVert D^{3,\lambda}_t\bigr\rVert^p\Bigr]
		&\leq \overline{C}_p \E\biggl[\Bigl(t-\frac{N_t}{\lambda}\Bigr)^{\frac{p-2}{2}} \int_{\frac{N_t}{\lambda}}^t \bigl\lVert\mathbbm{1}_{\{ N_t<\lambda t \}}\gam{J}{s}(\sigma_J\sigma_J^\transp)^{-1}\sigma_J\bigr\rVert^p\,\rmd s \biggr] \\
		&\leq \overline{C}_pC_{\gam{}{}}^p\bigl\lVert(\sigma_J\sigma_J^\transp)^{-1}\sigma_J\bigr\rVert^p\E\biggl[\mathbbm{1}_{\{ N_t<\lambda t \}}\Bigl(t-\frac{N_t}{\lambda}\Bigr)^\frac{p}{2}\biggr] \\
		&\leq \overline{C}_pC_{\gam{}{}}^p\bigl\lVert(\sigma_J\sigma_J^\transp)^{-1}\sigma_J\bigr\rVert^p\lambda^{-\frac{p}{2}}\E\Bigl[|\lambda t-N_t|^\frac{p}{2}\Bigr] \leq \frac{C_{D,3}}{\sqrt{\lambda}}.
	\end{aligned}
\end{equation}
Combining~\eqref{eq:estimation_D_1}, \eqref{eq:estimation_D_2} and~\eqref{eq:estimation_D_3}, for $C_D=3^{p-1}(C_{D,1}+C_{D,2}+C_{D,3})$ and all $\lambda\geq 1$ it holds
\begin{equation}\label{eq:estimation_D}
	\begin{aligned}
		\E\Bigl[\bigl\lVert D^\lambda_t\bigr\rVert^p\Bigr]
		&\leq 3^{p-1}\Bigl(\E\Bigl[\bigl\lVert D^{1,\lambda}_t\bigr\rVert^p\Bigr]+\E\Bigl[\bigl\lVert D^{2,\lambda}_t\bigr\rVert^p\Bigr]+\E\Bigl[\bigl\lVert D^{3,\lambda}_t\bigr\rVert^p\Bigr]\Bigr)
		\leq \frac{C_D}{\sqrt{\lambda}}.
	\end{aligned}
\end{equation}

\paragraph{Estimate for $\boldsymbol{E^\lambda}$.}
By the same approach as for $D^\lambda_t$ we find $C_{E,1},C_{E,2}>0$ such that for all $\lambda\geq 1$ it holds
\begin{equation}\label{eq:estimation_E}
	\E\Bigl[\bigl\lVert E^\lambda_t\bigr\rVert^p\Bigr] \leq C_{E,1}\int_0^t v^\lambda_s\,\rmd s + \frac{C_{E,2}}{\sqrt{\lambda}}.
\end{equation}

\paragraph{Conclusion with Gronwall's Lemma.}
The upper bounds in~\eqref{eq:estimation_A}, \eqref{eq:estimation_B}, \eqref{eq:estimation_C}, \eqref{eq:estimation_D}, \eqref{eq:estimation_E} now imply that for all $\lambda\geq 1$ it holds
\begin{equation*}
	\begin{aligned}
		v^\lambda_t &\leq 5^{p-1}(C_A+C_C+C_{E,1})\int_0^t v^\lambda_s\,\rmd s+5^{p-1}(C_B+C_D+C_{E,2})\frac{1}{\sqrt{\lambda}}.
	\end{aligned}
\end{equation*}
Now Gronwall's Lemma, see Lemma~\ref{lem:gronwall}, implies
\begin{equation*}
	\begin{aligned}
		v^\lambda_t &\leq 5^{p-1}(C_B+C_D+C_{E,2})\rme^{5^{p-1}(C_A+C_C+C_{E,1})T}\frac{1}{\sqrt{\lambda}} =: \frac{\widetilde{K}_{m,p}}{\sqrt{\lambda}}.
	\end{aligned}
\end{equation*}
This proves the claim for $p\geq 2$. For $p<2$ we obtain
\[ \E\Bigl[\bigl\lVert \muhat{Z,\lambda}{t}-\muhat{J}{t}\bigr\rVert^p\Bigr] \leq \E\Bigl[\bigl\lVert\muhat{Z,\lambda}{t}-\muhat{J}{t}\bigr\rVert^2\Bigr]^\frac{p}{2} \leq \Bigl(\frac{\widetilde{K}_{m,2}}{\sqrt{\lambda}}\Bigr)^\frac{p}{2}=\frac{\widetilde{K}_{m,p}}{\lambda^{p/4}} \]
from Lyapunov's inequality.\qed

\bibliographystyle{dissertation_style}

\end{document}